\def\mc#1{\EuScript{#1}}
\newcommand{\calC}{\ensuremath{\mc{C}}\xspace}
\newcommand{\calV}{\ensuremath{\mc{V}}\xspace}
\newcommand{\calO}{\ensuremath{\mc{O}}\xspace}
\newcommand{\calF}{\ensuremath{\mc{F}}\xspace}
\newcommand{\R}{\mathbb{R}}
\DeclareMathOperator*{\argmin}{argmin}
\newcommand{\ie}{{i.e.}\xspace}
\newcommand{\etal}{\textsl{et~al.}\xspace}
\newcommand{\WLOG}{Without loss of generality\xspace}
\newcommand{\fig}{Figure\xspace}
\newcommand{\Cor}{Corollary\xspace}
\newcommand{\ignore}[1]{}
\def\cl{\mathop{\mathrm{clr}}}
\def\clo{\mathop{\mathrm{cl}}}
\def\interior{\mathop{\mathrm{int}}}
\newcommand{\dist}[1]{\ensuremath{ \lVert #1 \rVert }}
\def\eps{\varepsilon}
\newcommand{\OPT}{\ensuremath{d^*}\xspace}
\newcommand{\dvd}{\ensuremath{\tilde{\calV}}\xspace}
\newcommand{\set}[1]{\ensuremath{\{ #1\}}}
\newcounter{osCounter}
\newcommandx{\oren}[2][1=]{\todo[linecolor=red,backgroundcolor=red!25,bordercolor=red,#1]{OS [\arabic{osCounter}]:  #2}
\stepcounter{osCounter}\xspace}
\newcounter{kfCounter}
\newcommandx{\kyle}[2][1=]{\todo[linecolor=blue,backgroundcolor=blue!25,bordercolor=blue,#1]{KF [\arabic{kfCounter}]:  #2}
\stepcounter{kfCounter}\xspace}
\newcounter{paCounter}
\newcommandx{\pankaj}[2][1=]{\todo[linecolor=OliveGreen,backgroundcolor=OliveGreen!25,bordercolor=OliveGreen, #1]{PA [\arabic{paCounter}]:  #2}
\stepcounter{paCounter}\xspace}
\newtheorem{theorem}{Theorem}[section]
\newtheorem{corollary}[theorem]{Corollary}
\newtheorem{lemma}[theorem]{Lemma}
\begin{document}
%


\title{An Efficient Algorithm for Computing
High-Quality Paths amid Polygonal Obstacles%
\thanks{A preliminary version of this work appear in the Proceedings of the 27th Annual ACM-SIAM
Symposium on Discrete Algorithms.
Most of this work was done while O. Salzman was a student at Tel Aviv University.
Work by P.K. Agarwal and K. Fox was supported in part by NSF under grants CCF-09-40671,
CCF-10-12254, CCF-11-61359, IIS-14-08846, and CCF-15-13816, and by Grant 2012/229 from the
U.S.-Israel Binational Science Foundation.
Work by O. Salzman was supported in part by the Israel Science Foundation (grant no.1102/11),
by the German-Israeli Foundation (grant no. 1150-82.6/2011), by the Hermann Minkowski--Minerva
Center for Geometry at Tel Aviv University and by the National Science Foundation IIS (\#1409003),
Toyota Motor Engineering \& Manufacturing (TEMA), and the Office of Naval Research.
}
}

\author{Pankaj K. Agarwal%
\thanks{Duke University, {pankaj@cs.duke.edu}} \and
Kyle Fox%
\thanks{Duke University, {kylefox@cs.duke.edu}} \and
Oren Salzman%
\thanks{Carnegie Mellon University, osalzman@andrew.cmu.edu}}


\maketitle

\begin{abstract}
We study a path-planning problem amid a set~$\calO$ of obstacles in~$\R^2$, in which 
we wish to
compute a short path between two points while also maintaining a high clearance from~$\calO$;
the clearance of a point is its distance from a nearest obstacle in~$\calO$.
Specifically, the problem asks for a path minimizing the reciprocal of the clearance integrated
over the length of the path.
We present the first polynomial-time approximation scheme for this problem.
Let~$n$ be the total number of obstacle vertices and let~$\eps \in (0,1]$.
Our algorithm computes in time 
$O(\frac{n^2}{\eps^2} \log \frac{n}{\eps})$
a path of total cost at most~$(1+\eps)$ times the cost of the optimal path.
\end{abstract}


\section{Introduction}
\label{sec:intro}
\noindent
\paragraph{Motivation.}
Robot motion planning deals with planning a collision-free path for a  moving object in an environment cluttered with obstacles~\cite{CBHKKLT05}.
It has applications in diverse domains such as surgical planning and computational biology.
Typically, a \emph{high-quality} path is desired where quality can be measured in terms of
path length, clearance (distance from nearest obstacle at any given time), or smoothness, to
mention a few criteria.

\noindent
\paragraph{Problem statement.}
Let~$\calO$ be a set of polygonal obstacles in the plane, 
consisting of~$n$ vertices in total.
A path $\gamma$ for a point robot moving in the plane 
is a continuous function 
$\gamma : [0,1] \rightarrow \R^2$.
Let $\dist{pq}$ denote the Euclidean distance between two points $p,q$.
The \emph{clearance} of a point~$p$,
denoted by $\cl(p):= \min_{o \in \calO} \dist{po}$,
is the minimal Euclidean distance between $p$ and an obstacle
($\cl(p) = 0$ if $p$ lies in an obstacle).

We use the following cost function, as defined by Wein \etal~\cite{WBH08},
that takes both the length and the clearance of a path $\gamma$ into account:
\begin{equation}
  \mu(\gamma):= \int_\gamma \frac{1}{\cl(\gamma(\tau))}  \mathrm{d}\tau.
	\label{eq:length}
\end{equation}
This criteria is useful in many situations because we wish to find a short path that does not pass too close to the obstacles due to safety requirements.
For two points $p,q \in \R^2$, let~$\pi(p, q)$ be the minimal cost\footnote{Wein \etal assume the minimal-cost path exists. One can formally prove its existence by taking the limit of paths approaching the infimum cost.} of any path between~$p$ and~$q$.

The \emph{(approximate) minimal-cost path problem} is defined as follows:
Given 
the set of obstacles~$\calO$ in~$\R^2$,
a real number $\varepsilon \in (0,1]$,
a start position~$s$ and a target position~$t$,
compute a path between~$s$ and $t$ with cost at most~$(1 + \eps)\cdot \pi(s,t)$.

\noindent
\paragraph{Related work.}
There is extensive work in robotics and computational geometry on computing shortest
collision-free paths for a point moving amid a set of planar obstacles, and by now optimal $O(n
\log n)$ algorithms are known; see Mitchell~\cite{M04} for a survey and~\cite{CW13,HSY13} for
recent results.
There is also work on computing paths with the minimum number of links~\cite{MRW92}.
A drawback of these paths is that they may touch obstacle boundaries and therefore their clearance may be zero.
Conversely, if maximizing the distance from the obstacles is the optimization criteria, 
then the path can be computed 
by constructing a maximum spanning tree
in the Voronoi diagram of the obstacles (see {\'{O}}'D{\'{u}}nlaing and Yap~\cite{OY85}).
Wein \etal~\cite{WBH07} considered the problem of computing shortest paths that have clearance at least~$\delta$ for some parameter~$\delta$.
However, this measure does not quantify the tradeoff between the length and the clearance, and the
optimal path may be very long.
Wein \etal~\cite{WBH08}  suggested the cost function defined in equation~\eqref{eq:length} to
balance find a between
minimizing the path length and maximizing its clearance.
They devise an approximation algorithm to compute near-optimal paths under this metric for a point robot moving amidst polygonal obstacles in the plane.
Their approximation algorithm runs in time polynomial in $\frac{1}{\varepsilon}$, $n$,
and~$\Lambda$ where $\varepsilon$ is the maximal additive error, $n$ is the number of obstacle
vertices,
and~$\Lambda$ is (roughly speaking) the total cost of the edges in the Voronoi diagram of the
obstacles; for the exact definition of~$\Lambda$, see~\cite{WBH08}.
Note that the running time of their algorithm is exponential in the worst-case, because the value
of~$\Lambda$ may be exponential as a function of the input size.
We are not aware of any polynomial-time approximation algorithm for this problem.
It is not known whether the problem of computing the optimal path is NP-hard.

The problem of computing shortest paths amid polyhedral obstacles in~$\R^3$ is NP-hard~\cite{CR87}, and a few heuristics have been proposed in the context of sampling-based motion planning in high dimensions (a widely used approach in practice~\cite{CBHKKLT05}) to compute a short path that has some clearance; see, e.g.,~\cite{SMA01}.

Several other bicriteria measures have been proposed in the context of path planning amid
obstacles in~$\R^2$, which combine the length of the path with curvature, the number of links in the
path, the visibility of the path, etc. (see e.g.~\cite{CDK01,AW00,LMA13} and references therein).
We also note a recent work by Cohen \etal~\cite{CFMNSV15}, which is in some sense dual to the problem studied here:
Given a point set~$P$ and a path~$\gamma$, they define the cost of~$\gamma$ to be the integral of
clearance along the path, and the goal is to compute a minimal-cost path between two given points.
They present an approximation algorithm whose running time is near-linear in the number of points.


\noindent
\paragraph{Our contribution.}
We present an algorithm
 that given~$\calO,s,t$ and~$\eps \in (0,1]$, computes in time $
O\left(
	\frac{n^2}{\eps^2} \log \frac{n}{\eps}
\right)
$
a path from~$s$ to~$t$ whose cost is at most~$(1+\eps) \pi(s,t)$.

As in~\cite{WBH08}, our algorithm is based on sampling, \ie, it employs a weighted geometric
graph~$G = (V,E)$ with~$V \subset \R^2$ and~$s,t \in V$ and computes a minimal-cost path in~$G$ from~$s$ to~$t$.
However, we prove a number of useful properties of optimal paths that enable us to sample much
fewer points and construct a graph of size~$O(\frac{n^2}{\eps^2} \log\frac{n}{\eps})$.

We first compute the Voronoi diagram~$\calV$ of~$\calO$ and then refine each Voronoi cell into
constant-size cells.
We refer to the latter as the \emph{refined Voronoi diagram} of~$\calO$ and denote it by~$\dvd$.
We prove in Section~\ref{sec:prelim} the existence of a path~$\gamma$ from~$s$ to~$t$ whose
cost is~$O(\pi(s,t))$ and that has the following useful properties:
(i)~for every cell~$T\in \dvd$,~$\gamma \cap \text{int}(T)$ is a connected subpath and the
clearances of all points in this subpath are the same; we describe these subpaths as
\emph{well-behaved};
(ii)~for every edge~$e \in \dvd$, there are~$O(1)$ points, called \emph{anchor points}, that
depend only on the two cells incident to~$e$ with the property that either $\gamma$ intersects~$e$
transversally (\ie, $\gamma \cap e$ is a single point) or the endpoints of $\gamma \cap e$ are
anchor points.
We use anchor points to propose a simple~$O(n)$-approximation algorithm
(Section~\ref{subsec:O(n)-apx}).
We then use anchor points and the existence of well-behaved paths to choose a set of~$O(n)$ sample
points on each edge of~$\dvd$ and construct a planar graph~$G$ with $O(n^2)$ vertices and edges so
that the optimal path in $G$ from~$s$ to~$t$ has cost~$O(\pi(s,t))$
(Section~\ref{subsec:constant}).

Finally, we prove additional properties of optimal paths to construct the final graph with
$O(\frac{n^2}{\eps^2} \log \frac{n}{\eps})$ edges (Section~\ref{subsec:ptas}).
Unlike Wein \etal~\cite{WBH08}, we do not connect every pair of sample points on the boundary of a
cell~$T$ of~$\dvd$.
Instead, we construct a small size spanner within~$T$ which ensures that the
number of edges in the graph is only~$O(\frac{n^2}{\eps^2} \log \frac{n}{\eps})$ and
not~$O(\frac{n^3}{\eps^2})$.

\section{Preliminaries}
\label{sec:alg_back}
Recall that~$\calO$ is a set of polygonal obstacles in the plane 
consisting of~$n$ vertices in total.
We refer to the edges and vertices of~$\calO$ as its \emph{features}.
Given a point~$p$ and a feature~$o$, let~$\psi_o(p)$ be the closest point to~$p$ on~$o$
so that~$\dist{po} = \dist{p\psi_o(p)}$.
If a path~$\gamma$ contains two points~$p$ and~$q$, 
we let~$\gamma[p,q]$ denote the subpath of~$\gamma$
between~$p$ and~$q$.
Let $\calF = \clo(\R^2 \setminus \calO)$ denote the \emph{free space}.
We assume in this paper that the free space $\calF$ is bounded.
This assumption can be enforced by placing a sufficiently large bounding box around $\calO$ and
the points~$s$ and~$t$.

\paragraph{Voronoi diagram and its refinement.}
The \emph{Voronoi cell} of a polygon feature~$o$, denoted by~$\calV(o)$, is the set of points
in~$\calF$ for which~$o$ is a closest feature of~$\calO$.
Cell $\calV(o)$ is star shaped, and the interiors of Voronoi cells of two different features are
disjoint.
The \emph{Voronoi diagram} of features of~$\calO$, denoted by~$\calV$, is the planar subdivision
of~$\calF$ induced by the Voronoi cells of features in~$\calO$.
The edge between the Voronoi cells of a vertex and an edge feature is a parabolic arc, and between
two vertex or two edge features, it is a line segment.
See Figure~\subref*{fig:voronoi}.
The Voronoi diagram has total complexity~$O(n)$.
See~\cite{AKL13} for details.

\begin{figure}[t]
  \centering
  \subfloat
   [\sf Voronoi Diagram]
   { 
   	\includegraphics[width=0.45\textwidth]{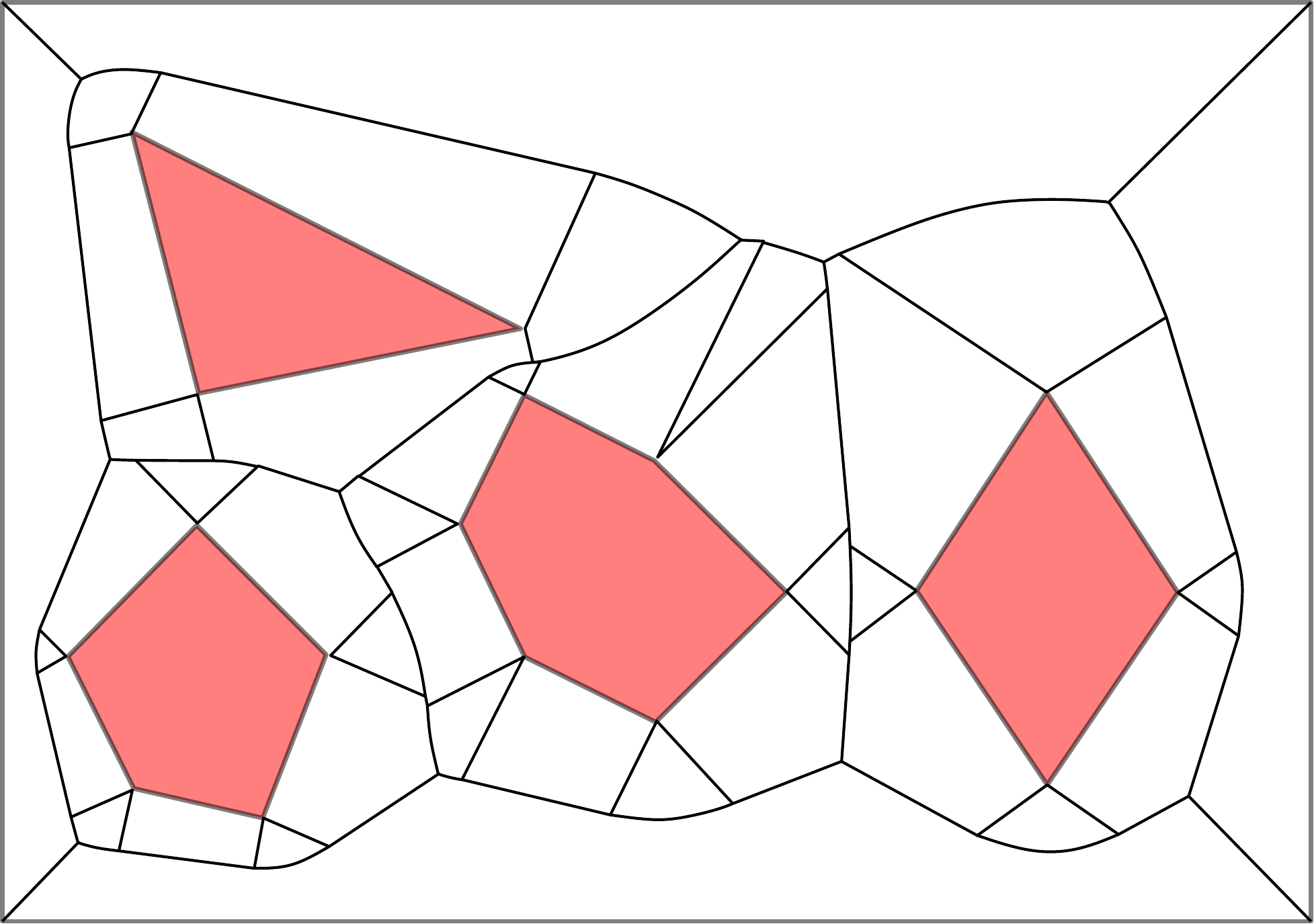}
   	\label{fig:voronoi}
   }
   \hspace{3mm}
  \subfloat
   [\sf Refined Voronoi Diagram]
   {
   	\includegraphics[width=0.45\textwidth]{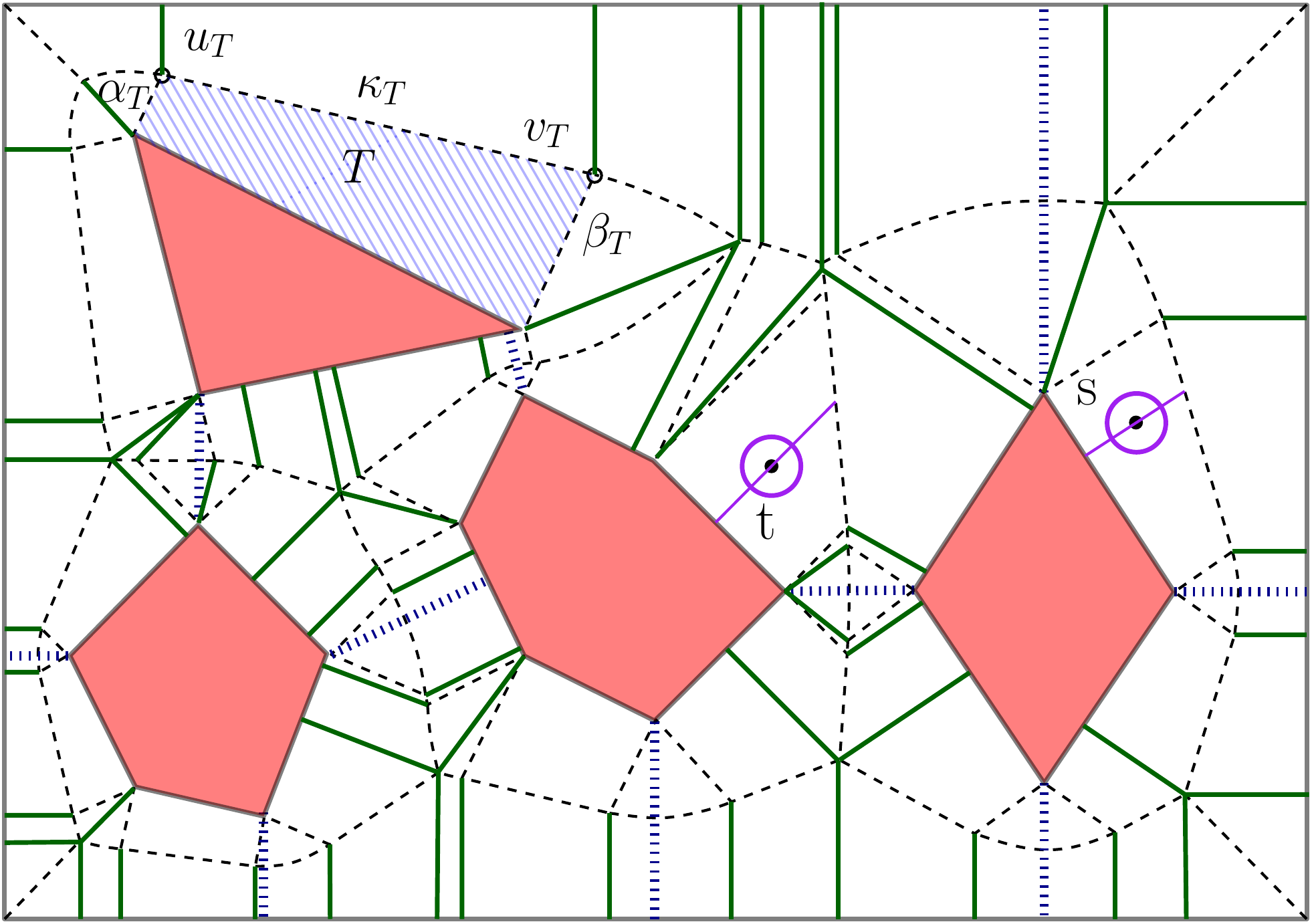}
    \label{fig:refined_voronoi}
   }
	\caption{\sf The Voronoi diagram and the refined Voronoi diagram of a set of obstacles (dark red).
				(a)~Voronoi edges are depicted by solid black lines.
				(b)~Voronoi edges are depicted by dashed black lines.
				Green solid lines and blue dotted lines represent
        edges of type~(i) and type~(ii), respectively.
				A representative cell~$T$ is depicted in light blue.}
  \label{fig:voronoi_all}
\end{figure}

For any obstacle feature~$o$ and for any point~$x$ along any edge on~$\partial \calV(o)$, the
function~$\dist{x\psi_o(x)}$ is convex.
We construct the \emph{refined Voronoi diagram}~\dvd by adding the following edges to each Voronoi
cell~$\calV(o)$ and refining it into constant-size cells:
\begin{enumerate}[label={\normalfont (\roman*)}]
  \item
    the line segments~$x \psi_o(x)$ between each obstacle feature~$o$ and every vertex~$x$
on~$\calV(o)$ and
\item
  for each edge~$e$ of~$\calV(o)$, the line segment~$x \psi_o(x)$, where~$x \in e$ is the point
that minimizes~$\dist{x \psi_o(x)}$.
\end{enumerate}
We also add a line segment from the obstacle feature~$o$ closest to~$s$ (resp. $t$) that initially
follows $\psi_o(s) s$ (resp. $\psi_o(t) t$) and ends at the first Voronoi edge it intersects.
Note that some edges of type (i) may already be present in the Voronoi diagram~$\calV$.
We say that an edge in~\dvd is an \emph{internal edge} if it separates two cells incident to the
same polygon.
Other edges are called \emph{external edges}.

Clearly, the complexity of~\dvd is $O(n)$.
Moreover, each cell~$T$ in~\dvd is incident to a single obstacle feature~$o$ and has three
additional edges.
One edge is external, and it is a monotone parabolic arc or line segment.
The other two edges are internal edges on~$T$, and they are both line segments.
For each cell~$T$, let~$\kappa_T$ denote the external edge of~$T$, let~$\alpha_T$ and~${\beta_T}$
denote the shorter and longer internal edges of~$T$ respectively, and let~$u_T$ and~$v_T$ denote
the vertices connecting~$\alpha_T$ and~${\beta_T}$ to~$\kappa_T$ respectively.
See Figure~\subref*{fig:refined_voronoi}.

For any value $c > 0$, the set of points in Voronoi cell~$T$ of clearance~$c$, if nonempty, forms
a connected arc~$\eta$ which is a circular arc centered at~$o$ if~$o$ is a vertex and a line
segment parallel to~$o$ if~$o$ is an edge.
One endpoint of~$\eta$ lies on~$\beta_T$ and the other on~$\alpha_T$ or~$\kappa_T$.

\noindent
\paragraph{Properties of optimal paths.}
We list several properties of our cost function.
For detailed explanations and proofs, the reader is referred to Wein \etal~\cite{WBH08}.
Let~$s = r_s e^{i \theta_s}$ be a start position and~$t  = r_t e^{i \theta_t}$ be a target
position.
\begin{enumerate}[label=(P\arabic*)]
  \item
    Let~$o$ be a point obstacle with~$\calO = \{o\}$, and assume without loss of generality
    that~$o$ lies at the origin and~$0 \!\leq \!\theta_s \leq \!\theta_t \leq \!\pi$.
  The optimal path  between~$s$ and~$t$ (see Figure~\subref*{fig:sp_point_obs}) is a logarithmic spiral centered on~$o$, and its cost is
	\begin{equation}
		\pi(s,t) = \sqrt{(\theta_t - \theta_s)^2 
								+ 
							(\ln r_t - \ln r_s)^2}.
		\label{eq:length_point}
	\end{equation}

\item
    Let~$o$ be a line obstacle with~$\calO = \{o\}$, and assume without loss of generality
    that~$o$ is supported by the line $y=0$, $0 \leq \theta_s \leq \theta_t \leq \pi$, and $r_s =
    r_t = r$.
  The optimal path between~$s$ and~$t$ (see Figure~\subref*{fig:sp_line_obs}) is a circular arc with its center at the origin, and its
  cost\footnote{The original equation describing the cost of the optimal path in the vicinity
    of a line obstacle had the obstacle on~$x =0$, and it contained a minor inaccuracy in its
    calculation. We present the
    correct cost in~\eqref{eq:length_line}.} is
\begin{equation}
  \pi(s,t) = \ln \frac{1 - \cos \theta_t}{\sin \theta_t}
          -
          \ln \frac{1 - \cos \theta_s}{\sin \theta_s}
         = \ln \tan \frac{\theta_t}{2} - \ln \tan \frac{\theta_s}{2}.
  \label{eq:length_line}
\end{equation}

\item
    Let~$o$ be an obstacle with~$\calO = \{o\}$ and $s$ on the line segment between~$\psi_o(t)$
    and~$t$.
  The optimal path between~$s$ and~$t$ (see Figure~\subref*{fig:sp_degenerate}) is a line segment, and its cost is
  \begin{equation}
    \pi(s,t) = \ln \cl(t) - \ln \cl(s).
    \label{eq:length_away}
  \end{equation}

\item
    The minimal-cost path~$\gamma$ between two points~$p$ and~$q$ on an edge~$e$ of~$\calV$ is the piece of~$e$ between~$p$ and~$q$.
	Moreover, there is a closed-form formula describing the cost of~$\gamma$.

\item
  Since each point within a single Voronoi cell is closest to exactly one obstacle feature, we may
  conclude the following:
  Given a set of obstacles, the optimal path connecting $s$ and $t$ consists of a sequence of
  circular arcs, pieces of logarithmic spirals, line segments, and pieces of Voronoi edges.
  Each of member of this sequence begins and ends on an edge or vertex of~$\dvd$ (see
    Figure~\subref*{fig:sp_polygonal}).

\end{enumerate}
\begin{figure}[t]
  \centering
  \subfloat
   [\sf Point obstacle]
   { 
   	\includegraphics[scale=.34]{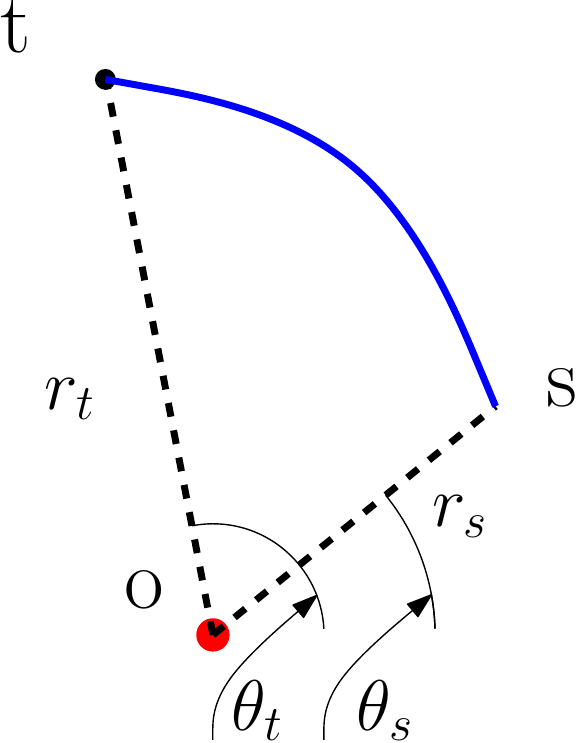}
   	\label{fig:sp_point_obs}
   }
   \hspace{2mm}
  \subfloat
   [\sf Line obstacle]
   { 
   	\includegraphics[scale=.34]{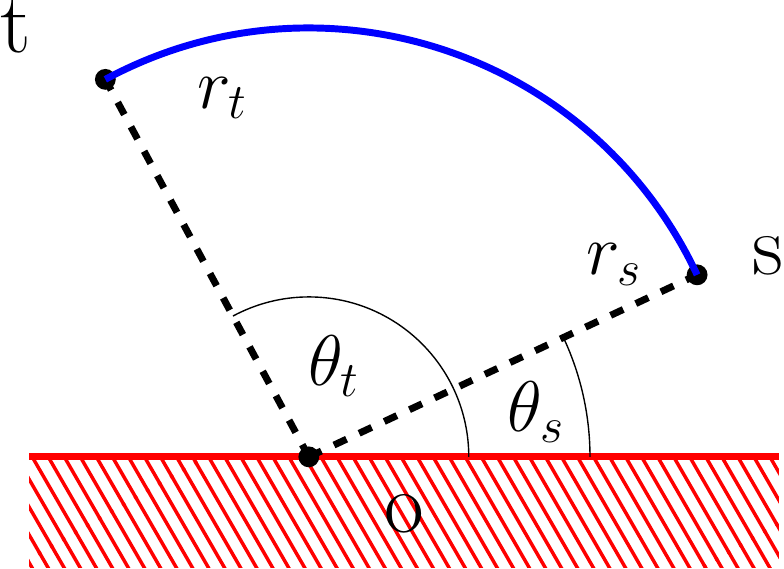}
   	\label{fig:sp_line_obs}
   }
   \hspace{2mm}
  \subfloat
   [\sf Line obstacle (degenerate)]
   { 
    \hspace{.08\textwidth}
   	\includegraphics[scale=.34]{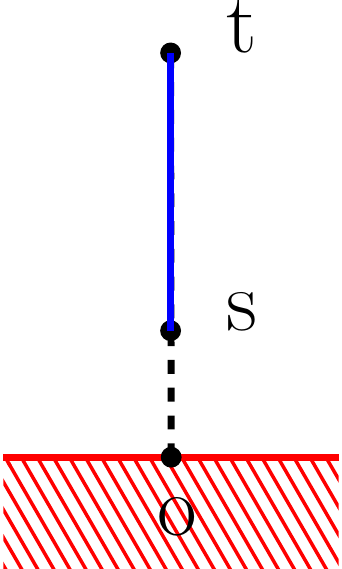}
    \hspace{.08\textwidth}
   	\label{fig:sp_degenerate}
   }\\
  \subfloat
   [\sf Polygonal obstacles]
   { 
   	\includegraphics[scale=.25]{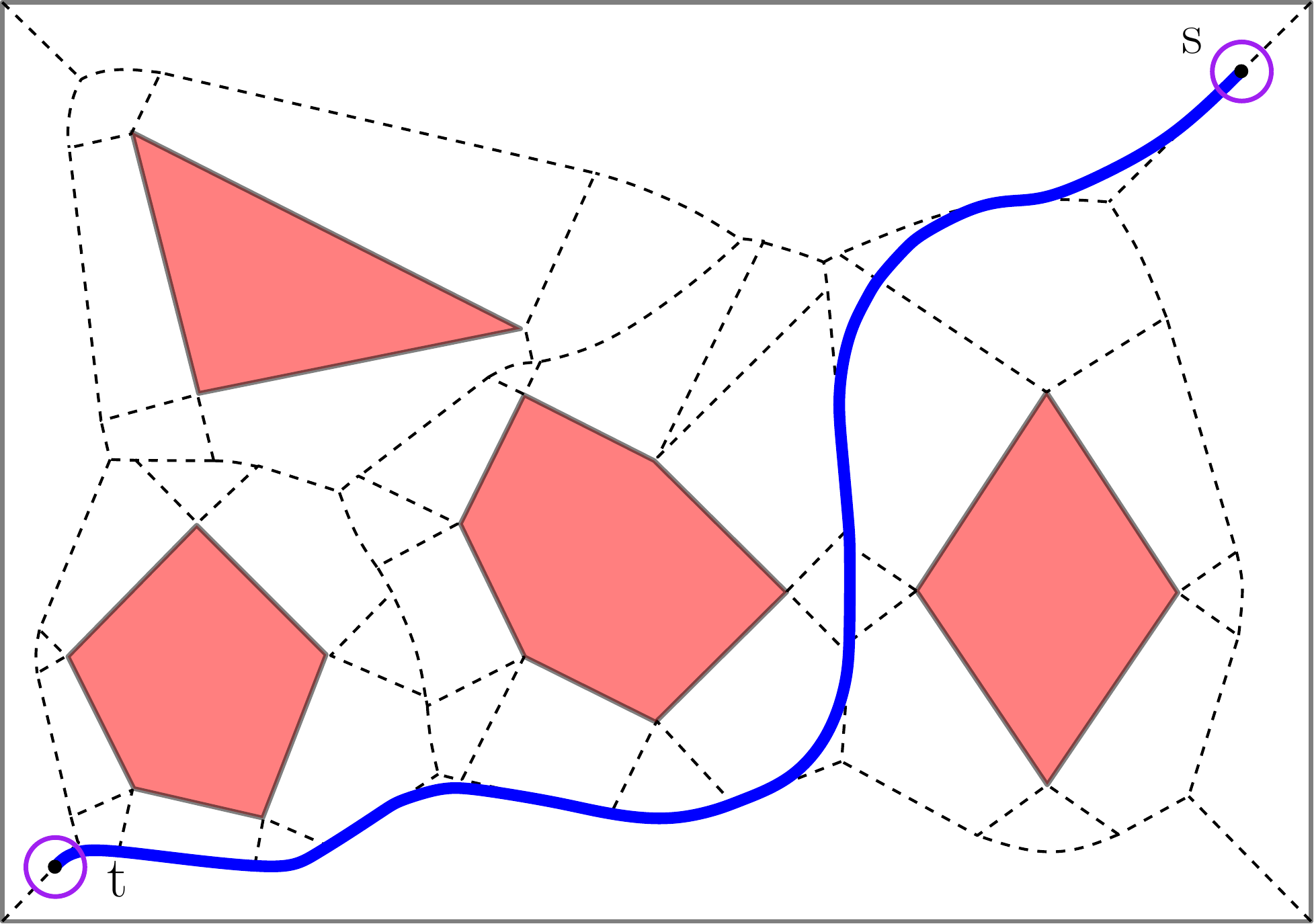}
   	\label{fig:sp_polygonal}
   }
   	\caption{\sf Different examples of optimal paths (blue) among different types obstacles (red).
   				In (d), the Voronoi diagram of the obstacles is depicted by dashed black lines.}
  \label{fig:sp_examples}
\end{figure}

The following lemma follows immediately from~\eqref{eq:length} and~\eqref{eq:length_point}.

\begin{lemma}
\label{lem:cost_observations}
	Let~$p$ and~$q$ be two points 
	such that~$\cl(p) \leq \cl(q)$.
	The following properties hold:
  \begin{enumerate}[label={\normalfont (\roman*)}]
  \item
    We have $\pi(p,q) \geq \ln \frac{\cl(q)}{\cl(p)}$.
If~$p$ and~$q$ lie in the same Voronoi cell of an 
obstacle feature~$o$ and if~$p$ lies on the line segment $q\psi_o(q)$, then 
the bound is tight.\\
\item
  If there is a single point obstacle~$o$ located at the origin, 
$p = r_p e^{i \theta_p}$ 
and 
$q = r_q e^{i \theta_q}$
with~$0 \leq \theta_p \leq \theta_q \leq \pi$, 
then $\pi(p,q) \geq \theta_q - \theta_p$.
If $r_p = r_q$ (namely,~$p$ and~$q$ are equidistant to~$o$),
then the bound is tight.
\end{enumerate}
\end{lemma}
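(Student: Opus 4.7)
The plan is to exploit two basic facts: the clearance function $\cl(\cdot)$ is $1$-Lipschitz (since it is a pointwise minimum of distance functions to the features in $\calO$), and in Voronoi cells the clearance has a particularly clean form. Both parts of the lemma are lower bounds that hold for every path connecting $p$ and $q$, so they transfer immediately to the infimum $\pi(p,q)$. Matching upper bounds (the tightness claims) will come from explicitly exhibiting a path.

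For part (i), I would take an arbitrary path $\gamma : [0,L] \to \R^2$ from $p$ to $q$ parametrized by arclength and set $c(\tau) := \cl(\gamma(\tau))$. Since $\cl$ is $1$-Lipschitz, $|c'(\tau)| \le 1$ almost everywhere, so $1/c(\tau) \ge |c'(\tau)|/c(\tau)$. Integrating and applying the triangle inequality for integrals,
\begin{equation*}
  \mu(\gamma) \;=\; \int_0^L \frac{1}{c(\tau)}\,d\tau \;\ge\; \int_0^L \frac{|c'(\tau)|}{c(\tau)}\,d\tau \;\ge\; \left| \int_0^L \frac{c'(\tau)}{c(\tau)}\,d\tau\right| \;=\; \bigl|\ln c(L) - \ln c(0)\bigr| \;=\; \ln \tfrac{\cl(q)}{\cl(p)},
\end{equation*}
using $\cl(p) \le \cl(q)$ at the final step. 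Taking the infimum over $\gamma$ yields the stated inequality. For tightness, suppose $p,q$ lie in the same cell $\calV(o)$ and $p \in \overline{q\,\psi_o(q)}$. Since $\calV(o)$ is star-shaped with respect to $o$, the whole segment $pq$ lies in $\calV(o)$, and on this segment $\cl$ equals the distance to $\psi_o(q)$ (or to $o$ itself when $o$ is a vertex). So along the straight-line path $\gamma_0$ from $p$ to $q$, $c(\tau) = \cl(p) + \tau$, and a direct calculation of $\int_0^{\cl(q)-\cl(p)} d\tau/(\cl(p)+\tau)$ returns exactly $\ln(\cl(q)/\cl(p))$, matching the lower bound.

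For part (ii), place the lone point obstacle at the origin so that $\cl(x) = \|x\|$. Given any path $\gamma$ from $p$ to $q$, write it in polar coordinates as $\gamma(\tau) = (r(\tau), \theta(\tau))$, where $\theta$ is chosen as a continuous lift to $\R$. Then the speed satisfies $|\gamma'(\tau)| = \sqrt{r'(\tau)^2 + r(\tau)^2 \theta'(\tau)^2} \ge r(\tau)\,|\theta'(\tau)|$, so
\begin{equation*}
  \mu(\gamma) \;=\; \int \frac{|\gamma'(\tau)|}{r(\tau)}\,d\tau \;\ge\; \int |\theta'(\tau)|\,d\tau \;\ge\; |\theta(L) - \theta(0)| \;\ge\; \theta_q - \theta_p,
\end{equation*}
where the last inequality uses that any continuous lift with $\theta(0) \equiv \theta_p \pmod{2\pi}$ and $\theta(L) \equiv \theta_q \pmod{2\pi}$ must move by at least $\theta_q - \theta_p$ in absolute value (since $0 \le \theta_q - \theta_p \le \pi$, no $2\pi$-shift can shrink this gap). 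When $r_p = r_q = r$, the circular arc of radius $r$ from $p$ to $q$ has $r' \equiv 0$ and therefore achieves $\mu = \theta_q - \theta_p$ exactly.

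The main obstacle I foresee is purely notational rather than conceptual: being careful that in part (ii) the path may wind around the origin, so the change-of-variables argument must use a continuous lift of $\theta$ rather than treat $\theta$ as $[0,2\pi)$-valued. Everything else follows from the Lipschitz property of $\cl$ and from explicit geodesics (radial segments for (i), circular arcs for (ii)).
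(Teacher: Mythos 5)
Your proof is correct. The paper itself gives no argument beyond the remark that the lemma ``follows immediately from~\eqref{eq:length} and~\eqref{eq:length_point},'' so you are effectively supplying the details left implicit. For part~(ii) the paper's intended route is presumably to read the bound directly off the closed form in~\eqref{eq:length_point}: $\pi(p,q) = \sqrt{(\theta_q-\theta_p)^2 + (\ln r_q - \ln r_p)^2} \ge \theta_q - \theta_p$, with equality exactly when $r_p = r_q$. You instead re-derive this from scratch via the polar speed bound $|\gamma'| \ge r\,|\theta'|$, which is a slightly longer but fully rigorous path to the same conclusion (and your care with the continuous lift of~$\theta$ is warranted). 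For part~(i), where the claim is stated for arbitrary obstacle configurations and no closed form is available, the $1$-Lipschitz property of clearance is the crucial ingredient and you invoke and integrate it correctly; the tightness argument via star-shapedness of the Voronoi cell and the linear growth of clearance along $q\psi_o(q)$ matches the radial-segment formula~\eqref{eq:length_away} that the paper records as property~(P3). In short: same underlying facts, with you making the Lipschitz step in~(i) explicit and giving a first-principles derivation in~(ii) where the paper would simply cite the closed form.
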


An immediate corollary of Lemma~\ref{lem:cost_observations} is the following:
\begin{corollary}
  \label{cor:clearance_bound}
  Let~$p$ and~$q$ be two points on the boundary of a Voronoi cell~$T$.
  Let~$w$ be another point on the same edge of $T$ as $p$, and let $\cl(p) \leq \cl(w) \leq
  \cl(q)$.
  Then~$\pi(p,w) \leq \pi(p,q)$.
\end{corollary}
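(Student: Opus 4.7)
My plan is to case-split on whether the common edge of $p$ and $w$ is an internal or external edge of $T$.

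\emph{Internal edge.} Here the common edge has the form $\overline{\psi_o(x)\,x}$ for some anchor point $x$, so $p$ lies on the line segment $\overline{w\,\psi_o(w)}$. This triggers the tight case of Lemma~\ref{lem:cost_observations}(i), giving $\pi(p,w)=\ln(\cl(w)/\cl(p))$. Applying the same lemma to the pair $(p,q)$ yields $\pi(p,q)\geq \ln(\cl(q)/\cl(p))\geq \ln(\cl(w)/\cl(p))=\pi(p,w)$, and this case is immediate.

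\emph{External edge.} The common edge is $\kappa_T$, which lies on some Voronoi edge $\tilde e$ of $\calV$. By Property (P4), $\pi(p,w)$ equals the cost of the sub-arc of $\tilde e$ joining $p$ to $w$. The refinement of $\calV$ into $\dvd$ forces clearance to be monotone along $\kappa_T$, so if $q$ also lies on $\kappa_T$, then $w$ lies between $p$ and $q$ along $\tilde e$; applying (P4) again to $(p,q)$ writes $\pi(p,q)$ as the $1/\cl$-integral along $\tilde e$ from $p$ to $q$, which contains the $p$-to-$w$ sub-arc as a prefix, so $\pi(p,q)\geq \pi(p,w)$.

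\emph{External edge, $q$ on an internal edge.} In the generic regime $q$ must lie on $\beta_T$, since $\alpha_T$ carries clearance at most $\cl(u_T)\leq \cl(w)$. I would fix any optimal path $\gamma^\ast$ from $p$ to $q$, apply the intermediate value theorem to the continuous function $\cl\circ\gamma^\ast$ to pick a point $r\in \gamma^\ast$ with $\cl(r)=\cl(w)$, and use the optimality of $\gamma^\ast$ to get $\pi(p,r)\leq\pi(p,q)$. It then remains to show $\pi(p,w)\leq \pi(p,r)$. For this I would invoke the piecewise structure of optimal paths from Property (P5)---arcs, spirals, line segments, and pieces of Voronoi edges, each contained in a single cell of $\dvd$---to decompose $\gamma^\ast$ up to $r$ and compare the accumulated cost with the closed-form Voronoi-edge cost $\pi(p,w)$ via the single-feature reduction (P1)--(P3) inside $T$.

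The main obstacle is this last comparison. The cost along a Voronoi edge strictly exceeds the single-feature log-polar (or arc) cost in general, so the naive chain $\pi(p,q)\geq \pi_o(p,q)\geq \pi_o(p,w)$ does \emph{not} suffice to conclude $\pi(p,q)\geq \pi(p,w)$. The delicate part is to use both the geometry of the cell $T$ and the constraint that $r$ sits in $T$---not in the neighbouring Voronoi cell where $o$ is no longer the nearest feature---to argue that $\pi(p,r)$ is in fact bounded below by the full Voronoi-edge cost $\pi(p,w)$, exploiting the position of $r$ relative to the internal edges $\alpha_T,\beta_T$ that force sufficient angular (for vertex $o$) or lateral (for edge $o$) separation from $p$.
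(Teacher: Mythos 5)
Your internal-edge case is exactly the argument behind the paper's ``immediate'' claim: when $p$ and $w$ lie on one of the refinement segments $x\,\psi_o(x)$ (a type-(i) or type-(ii) edge of~$\dvd$), the lower-clearance point~$p$ lies on the segment $w\,\psi_o(w)$, so the tight half of Lemma~\ref{lem:cost_observations}(i) gives $\pi(p,w)=\ln\bigl(\cl(w)/\cl(p)\bigr)$, while the lower-bound half gives $\pi(p,q)\geq\ln\bigl(\cl(q)/\cl(p)\bigr)\geq\ln\bigl(\cl(w)/\cl(p)\bigr)$. That two-line chain is the whole content; nothing deeper is hidden behind the word ``immediate''. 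One small terminological slip: these segments are the type-(i)/(ii) edges of~$\dvd$, not ``anchor points'' --- the anchor points $w_{\alpha_T},w_{\kappa_T}$ of Lemma~\ref{lem:good_anchors} are specific points \emph{on}~$\beta_T$, a different object.

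Your concern about the external edge $\kappa_T$ is well founded as a matter of rigor: there $p$ does not lie on $w\,\psi_o(w)$, and the cost along $\kappa_T$ can strictly exceed $\ln\bigl(\cl(w)/\cl(p)\bigr)$, so the one-line chain does not close. But note that the paper never invokes this corollary with $p,w\in\kappa_T$: in every application (Lemmas~\ref{lem:easy_well-behaved_cost}, \ref{lem:hard_well-behaved_cost}, \ref{lem:n-approximate}, and~\ref{lem:sample_bounds_ptas}) the two same-edge points are on $\alpha_T$ or $\beta_T$. Your sub-case with $p,w,q$ all on $\kappa_T$ is fine (P4 plus additivity of the integral along a monotone piece of a Voronoi edge), and the residual sub-case you flag ($p,w\in\kappa_T$, $q\in\beta_T$) indeed has no short proof --- but it is also never exercised. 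You should read the corollary in the restricted sense that the tight bound of Lemma~\ref{lem:cost_observations}(i) makes applicable, rather than expend effort closing a case the paper does not use.
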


\paragraph{Model of computation.}
We are primarily concerned with the combinatorial time complexity of our algorithm.
Therefore, we assume a model of computation that allows us to evaluate basic trigonometric
and algebraic expressions, such as the ones given above, in constant time.
Our model also allows us to find the roots of a constant-degree polynomial in constant time.

\section{Well-behaved Paths}
\label{sec:prelim}
%

Let~$T$ be a cell of~$\dvd$ incident to obstacle feature~$o$, and let~$p$
and~$q$ be two points on~$\partial T$.
\begin{wrapfigure}{r}{0.4\textwidth}
  \begin{center}
    \includegraphics[scale=.34]{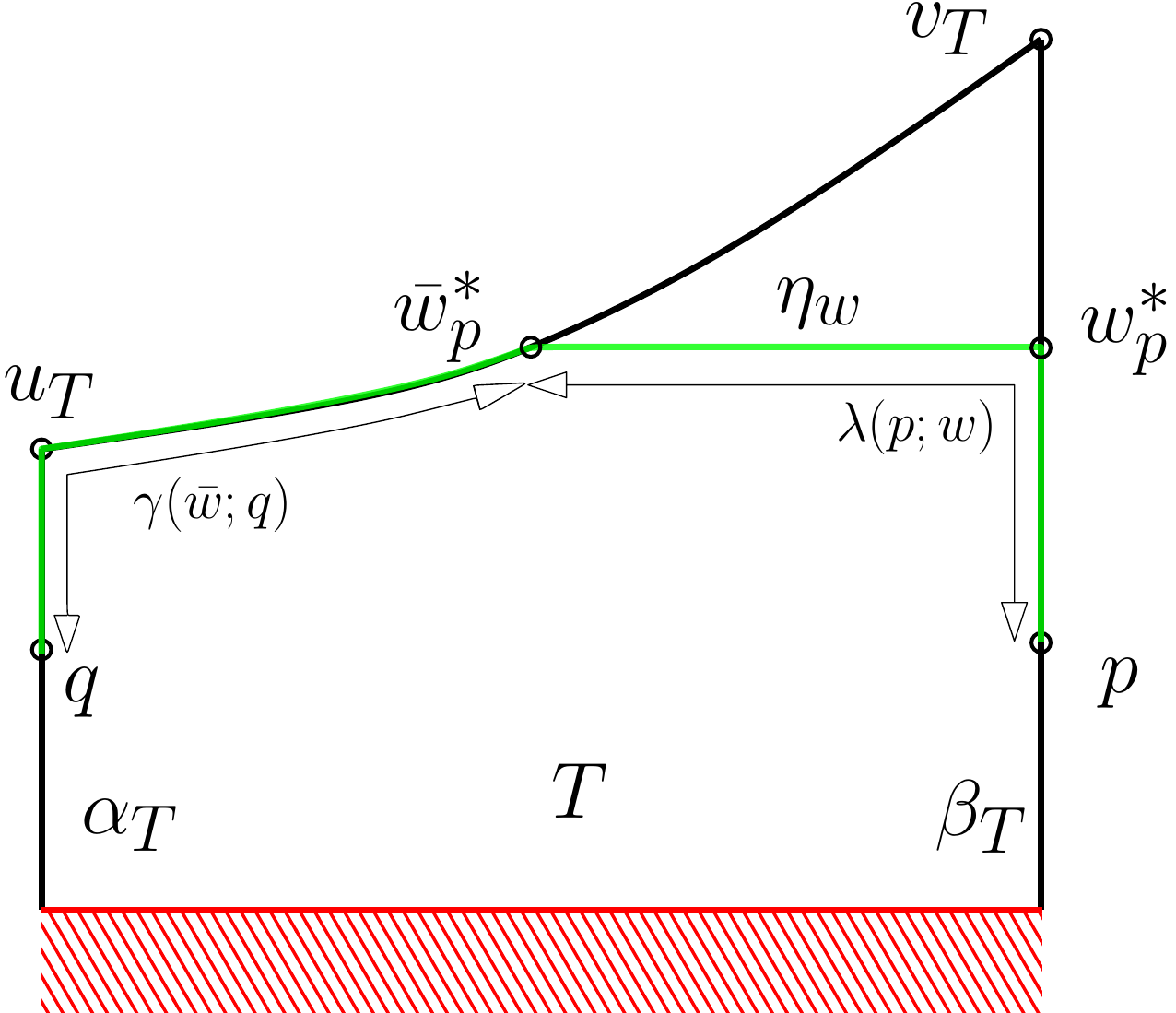}
  \end{center}
  \caption{Components of the well-behaved path~$\gamma(p,q)$.}
  \label{fig:section3_notations}
\end{wrapfigure}
We define a \emph{well-behaved} path between~$p$ and~$q$, denoted by~$\gamma(p,q)$,
whose cost is at most~$11\pi(p,q)$ and that can be computed in~$O(1)$ time.
We first define~$\gamma(p,q)$, then analyze its cost, and finally prove an additional property
of~$\gamma(p,q)$ that allows us to compute it in~$O(1)$ time.

If both~$p$ and~$q$ lie on the same edge of~$\partial T$ or neither of them lies on the
edge~$\beta_T$, then we define~$\gamma(p,q)$ to be the unique path from~$p$ to~$q$ along~$\partial
T$ that does not intersect~$o$.
If one of~$p$ and~$q$, say,~$p$, lies on~$\beta_T$, then~$\gamma(p,q)$ is somewhat more involved,
because the path along~$\partial T$ can be quite expensive.
Instead, we let~$\gamma(p,q)$ enter the interior of~$T$.
For a point~$w \in \beta_T$, let~$\eta_w$ be the maximal path in~$T$ of clearance~$\cl(w)$
beginning at~$w$, i.e., its image is the set of points~$\set{z \in T \mid \cl(z) = \cl(w)}$.

By the discussion in Section~\ref{sec:alg_back},~$\eta_w$ is a line segment or a circular arc
with~$w$ as one of its endpoints.
Let~$\bar{w}$ be the other endpoint of~$\eta_w$.
We define the path
$$\lambda(p;w) = pw \circ \eta_w$$
to be the segment of~$pw$ followed by the arc~$\eta_w$.
We refer to~$w$ as the \emph{anchor point} of~$\lambda(p; w)$.
Let $w^*_p$ be the anchor point on edge $\beta_T$ of clearance greater than~$\cl(p)$ that
minimizes the cost of $\lambda(p;w)$.
Namely,
$$w^*_p = \argmin_{%
\substack{%
w \in \beta_T\\
\cl(w) \geq \cl(p)
}}
\mu(\lambda(p;w)).$$
We now define
\begin{align*}
  \gamma(p,q; w) &= \lambda(p;w) \circ \gamma(\bar{w},q),\\
  \gamma(p,q) &= \gamma(p,q; w^*_p).
\end{align*}
See Figure~\ref{fig:section3_notations}.

The next two lemmas bound the cost of~$\gamma(p,q)$.
\begin{lemma}~
  \label{lem:easy_well-behaved_cost}
  \begin{enumerate}[label={\normalfont (\roman*)}]
    \item
      If~$p$ and~$q$ lie on the same edge of~$\partial T$, then $\mu(\gamma(p,q)) = \pi(p,q)$.
    \item
      If neither~$p$ nor~$q$ lies on~$\beta_T$, then $\mu(\gamma(p,q)) \leq 3\pi(p,q)$.
  \end{enumerate}
\end{lemma}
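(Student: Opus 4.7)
My plan is to handle both parts by observing that $\gamma(p,q)$ in these cases simply tracks $\partial T$, so its $\mu$-cost can be computed edge-by-edge and then compared to $\pi(p,q)$ using the properties from Section~\ref{sec:alg_back}. For part~(i), if $p$ and $q$ lie on the same internal edge of $\partial T$, they lie on a common segment $x\psi_o(x)$, so Lemma~\ref{lem:cost_observations}(i) gives $\pi(p,q)=|\ln\cl(q)-\ln\cl(p)|$, which is exactly the cost obtained by integrating $1/\cl$ along that segment. If instead they both lie on $\kappa_T$, then property~(P4) says the optimal path between them already runs along the Voronoi edge. Either way, $\mu(\gamma(p,q))=\pi(p,q)$.

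For part~(ii), the case in which $p$ and $q$ share an edge is already covered by part~(i), so the only substantive situation is that they lie on different edges. By symmetry I may assume $p\in\alpha_T$ and $q\in\kappa_T$. Then $\gamma(p,q)$ is the concatenation of the $\alpha_T$-piece from $p$ to $u_T$ with the $\kappa_T$-piece from $u_T$ to $q$, so applying part~(i) to each piece gives $\mu(\gamma(p,q))=\pi(p,u_T)+\pi(u_T,q)$.

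The crucial step---and the main hurdle---is to show $\pi(p,u_T)\le\pi(p,q)$. I plan to deduce it from Corollary~\ref{cor:clearance_bound} with $w=u_T$, which requires the sandwich $\cl(p)\le\cl(u_T)\le\cl(q)$. The first inequality is immediate, since $p$ lies on the segment $\alpha_T=u_T\psi_o(u_T)$ and $\cl$ is maximized on $\alpha_T$ at $u_T$. The second inequality is where the refinement is essential: each $\kappa_T$ was obtained by splitting an original Voronoi edge at its minimum-clearance point (via a type~(ii) refinement segment), so the clearance function is \emph{monotone} along $\kappa_T$; and since $\alpha_T$ is, by definition, the shorter internal edge, its length $\cl(u_T)$ is at most the length $\cl(v_T)$ of $\beta_T$, which forces the monotonicity to be nondecreasing from $u_T$ to $v_T$. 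Hence $\cl(q)\ge\cl(u_T)$ for every $q\in\kappa_T$. With Corollary~\ref{cor:clearance_bound} in hand, the triangle inequality then yields $\pi(u_T,q)\le\pi(u_T,p)+\pi(p,q)\le 2\pi(p,q)$, and summing the two bounds gives $\mu(\gamma(p,q))\le 3\pi(p,q)$.
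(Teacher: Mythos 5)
Your proof is correct and follows essentially the same approach as the paper's: handle (i) by observing $\gamma(p,q)$ runs along a single edge where the cost is known to be optimal, then in (ii) split $\gamma(p,q)$ at $u_T$, apply Corollary~\ref{cor:clearance_bound} to bound $\pi(p,u_T)\le\pi(p,q)$, and finish with the triangle inequality. You supply slightly more justification than the paper does at two spots---spelling out the internal-edge case of (i) via Lemma~\ref{lem:cost_observations}(i) rather than invoking (P4) for all edges, and explicitly arguing why $u_T$ is the minimum-clearance point on $\kappa_T$ (monotonicity of $\kappa_T$ after refinement plus $\alpha_T$ being the shorter internal edge)---but the structure and the key steps are the same.
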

\begin{proof}~
  \begin{enumerate}[label=(\roman*)]
    \item
      If~$p$ and~$q$ lie on the same edge~$e$ of~$\partial T$, then~$\gamma(p,q) \subseteq e$, and
      by (P4),~$\gamma(p,q)$ is the optimal path between~$p$ and~$q$.
      Hence, the claim follows.
    \item
      Suppose~$p \in \alpha_T$ and~$q \in \kappa_T$.
      Path~$\gamma(p,q)$ travels along~$\alpha_T$ from~$p$ to~$u_T$, and then along~$\kappa_T$
      from~$u_T$ to~$q$.
      By \Cor~\ref{cor:clearance_bound} and the fact that $u_T$ is the lowest clearance point
      on~$\kappa_T$, we have~$\pi(p,u_T) \leq \pi(p,q)$.
      By the triangle inequality, we have that
      $$\pi(u_T,q) \leq \pi(u_T,p) + \pi(p,q) \leq 2 \pi(p,q).$$
      Finally,
      $$\mu(\gamma(p,q)) = \pi(p,u_T) + \pi(u_T,q) \leq 3 \pi(p,q).$$
  \end{enumerate}
\end{proof}

\begin{lemma}
  \label{lem:hard_well-behaved_cost}
  If~$p \in \beta_T$ and~$q \notin \beta_T$, then~$\mu(\gamma(p,q)) \leq 11\pi(p,q)$.
\end{lemma}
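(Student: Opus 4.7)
The plan is to exhibit a single anchor $w\in\beta_T$ with $\cl(w)\ge\cl(p)$ satisfying $\mu(\lambda(p;w))\le 2\pi(p,q)$. By the optimality of $w_p^*$ this then gives $\mu(\lambda(p;w_p^*))\le 2\pi(p,q)$, and the rest is mechanical. Since $\bar{w_p^*}\in\alpha_T\cup\kappa_T$ and $q\notin\beta_T$, Lemma~\ref{lem:easy_well-behaved_cost}(ii) gives $\mu(\gamma(\bar{w_p^*},q))\le 3\pi(\bar{w_p^*},q)$; and since $\lambda(p;w_p^*)$ is itself a path from $p$ to $\bar{w_p^*}$, the triangle inequality gives $\pi(\bar{w_p^*},q)\le\mu(\lambda(p;w_p^*))+\pi(p,q)\le 3\pi(p,q)$. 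Combining, $\mu(\gamma(p,q))\le 2\pi(p,q)+9\pi(p,q)=11\pi(p,q)$.

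A crucial tool for bounding $\mu(\lambda(p;w))$ is the single-obstacle lower bound $\pi(p,q)\ge\pi_o(p,q)$, where $\pi_o$ denotes the optimal cost treating $o$ as the only obstacle. This holds because any multi-obstacle path is also a valid single-obstacle path, and since the true clearance is pointwise at most $\mathrm{dist}(\cdot,o)$, its cost integrand is pointwise at least the single-obstacle one. Properties (P1) and (P2) then supply closed forms for $\pi_o$: the log-spiral cost $\sqrt{(\Delta\ln r)^2+(\Delta\theta)^2}$ when $o$ is a vertex, and the hyperbolic-half-plane distance $d_H(p,q)$ when $o$ is an edge.

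I would construct $w$ by cases, based on the ordering of $\cl(p)$, $\cl(q)$, and $M:=\cl(v_T)$ (which is also the maximum of $\cl$ on $T$, since $|\beta_T|\ge|\alpha_T|$ and $\cl$ is convex along $\kappa_T$). When $\cl(q)>M$, take $w=v_T$: the level set $\{z\in T:\cl(z)=M\}$ is at most a short arc of $\partial T$, and the dominant contribution to $\mu(\lambda(p;v_T))$ is the radial cost $\ln(M/\cl(p))\le\ln(\cl(q)/\cl(p))\le\pi(p,q)$. When $\cl(q)\le M$ and $o$ is a vertex, take $w=q'\in\beta_T$ with $\cl(q')=\max(\cl(p),\cl(q))$. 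In the subcase $\cl(q)\ge\cl(p)$ the endpoint $\bar w$ coincides with $q$ and by (P1) the cost equals $\ln(\cl(q)/\cl(p))+\Delta\theta\le\sqrt{2}\,\pi_o(p,q)\le 2\pi(p,q)$; in the subcase $\cl(q)<\cl(p)$ a monotonicity argument using convexity of $\cl$ along $\kappa_T$ shows $\eta_p$ exits $T$ at angular position no farther than that of $q$, so $\mu(\eta_p)\le\Delta\theta\le\pi(p,q)$.

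The main obstacle is the remaining case: $o$ is an edge and $\cl(q)\le M$. Here the natural choice $w=q'$ fails, because the horizontal arc $\eta_{q'}$ has cost $\ell/\cl(q)$ which can exceed $d_H(p,q)$ by an unbounded factor. I would instead choose $w=(x_\beta,y_w)$ with $y_w=\min(M,\max(\cl(p),\cl(q),\ell))$, where $\ell$ is the horizontal distance from $\beta_T$ to the opposite side of $T$ at height $y_w$; this balances the radial cost $\ln(y_w/\cl(p))$ against the arc cost $\ell/y_w$. Applying $\operatorname{arccosh}(1+X)\ge\ln(1+X)$ to the explicit formula for $d_H(p,q)$, a subcase analysis (on whether $y_w$ saturates at $M$ and on which of $\cl(p)$, $\cl(q)$, $\ell$ dominates) yields $\mu(\lambda(p;w))\le 2\,d_H(p,q)\le 2\pi(p,q)$. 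This hyperbolic estimate is the most delicate step; the rest of the argument is essentially bookkeeping.
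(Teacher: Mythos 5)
Your outer skeleton matches the paper's: everything reduces to exhibiting a single anchor $w \in \beta_T$ with $\cl(w) \ge \cl(p)$ and $\mu(\lambda(p;w)) \le 2\pi(p,q)$, and then the factor $11$ follows by bookkeeping. Your first paragraph's bookkeeping is correct and is in fact a slightly cleaner route to $11$ than the paper's: you invoke Lemma~\ref{lem:easy_well-behaved_cost} directly for $\gamma(\bar{w}^*_p,q)$ instead of redoing its three-case analysis, but unwound this produces exactly the paper's intermediate inequality $\mu(\gamma(p,q;w)) \le 4\mu(\lambda(p;w)) + 3\pi(p,q)$.

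The key sublemma $\mu(\lambda(p;w^*_p)) \le 2\pi(p,q)$ is where you diverge, and where the gap sits. The paper's choice is the $w \in \beta_T$ of clearance $\min\{c^*, \cl(v_T)\}$, where $c^*$ is the maximum clearance attained by an optimal $(p,q)$-path $\sigma$. That one choice dispatches both pieces at once with no formulas and no vertex/edge split: $\mu(pw) = \ln(\cl(w)/\cl(p)) \le \ln(c^*/\cl(p)) \le \pi(p,q)$ because $\sigma$ itself climbs to clearance $c^*$; and $\mu(\eta_w) \le \pi(p,q)$ because $\eta_w$ is the Euclidean-shortest, highest-clearance crossing from $\beta_T$ to $\alpha_T \cup \kappa_T$ among all crossings of clearance at most $c^*$, and $\sigma$ is such a crossing, so $\mu(\sigma) \ge |\sigma|/c^* \ge |\eta_w|/\cl(w) = \mu(\eta_w)$. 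Your approach instead builds $w$ from $\cl(p)$, $\cl(q)$, $\cl(v_T)$ and the cell geometry alone, and compares against the single-obstacle lower bound. That is a genuinely different route, and where it works (the vertex case) it is a pleasant alternative, modulo the unproved assertion that the angular coordinate is monotone along $\kappa_T$; also note the case $\cl(q) > M$ is vacuous, since you yourself observe $M = \cl(v_T)$ is the maximum clearance on $T$ and $q \in \partial T$.

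The genuine gap is the edge-obstacle case. You propose $y_w = \min(M,\max(\cl(p),\cl(q),\ell))$, but $\ell$ is defined at height $y_w$, so the definition is circular; and you then write that ``a subcase analysis \ldots\ yields $\mu(\lambda(p;w)) \le 2\,d_H(p,q)$'' and that this ``is the most delicate step.'' That step \emph{is} the lemma in the edge case. You have described a plan, not a proof: until the hyperbolic estimate is actually carried out across the saturation subcases you list, nothing is established. The paper's introduction of $c^*$ is exactly the device that makes such an estimate unnecessary, because it lets you charge both the vertical rise and the horizontal arc of $\lambda(p;w)$ directly against the length and clearance of the optimal path rather than against a closed-form distance; without it, you are signing up for the case-by-case hyperbolic computation, and you have not closed it.
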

\begin{proof}
  Let~$w$ be any point of~$\beta_T$ such that $\cl(w) \geq \cl(p)$.
  We begin by proving $\mu(\gamma(p,q;w)) \leq 4\mu(\lambda(p; w)) + 3\pi(p,q)$.
  Later, we will show $\mu(\lambda(p;w^*_p)) \leq 2\pi(p,q)$, proving the lemma.

  To prove the first claim, we consider different cases depending on the edges of~$\partial T$
  that contain~$\bar{w}$ and~$q$.
  See Figure~\ref{fig:hard_well-behaved_cost}.
  \begin{figure*}[t]
    \begin{center}
      \subfloat
      [\sf Case 1: $q \in \kappa_T$]
      { 
      \includegraphics[scale=.34]{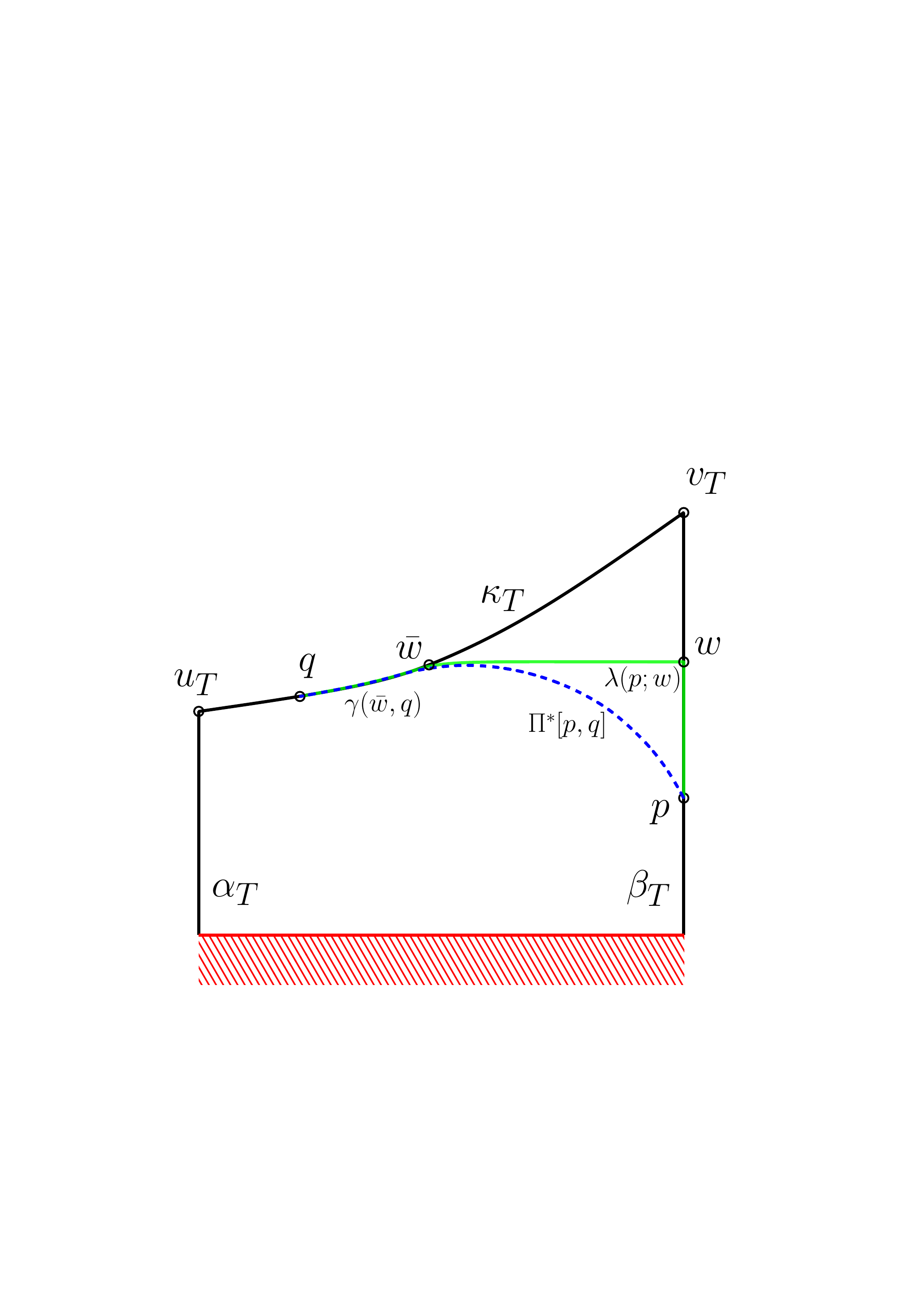}
      \label{fig:hard_i}
      }
      \hspace{0.02\textwidth}
      \subfloat
      [\sf Case 2: $q,\bar{w} \in \alpha_T$]
      {
      \includegraphics[scale=.34]{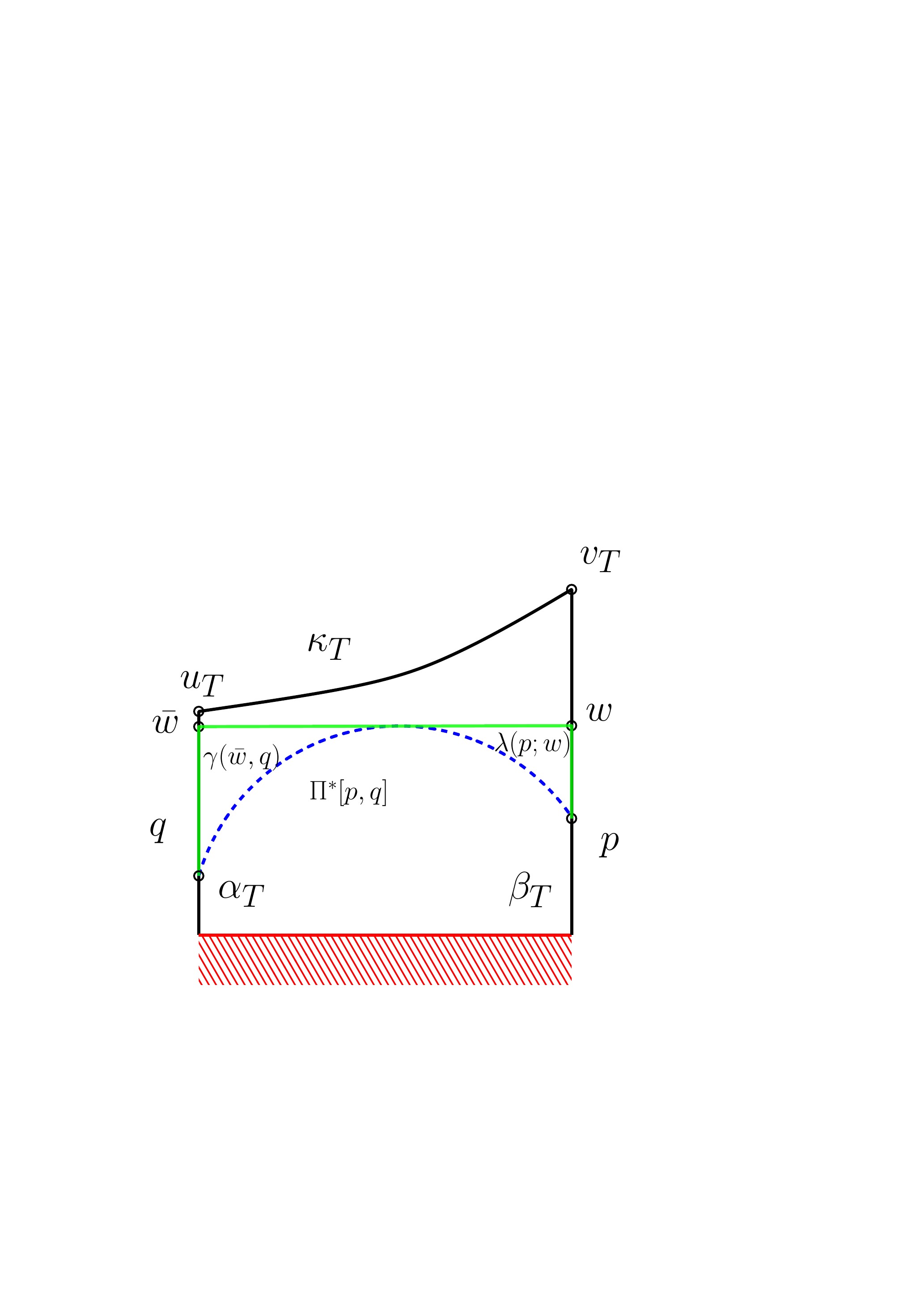}
      \label{fig:hard_ii}
      }
      \hspace{0.02\textwidth}
      \subfloat
      [\sf Case 3: $q \in \alpha_T, \bar{w} \in \kappa_T$]
      {
      \includegraphics[scale=.34]{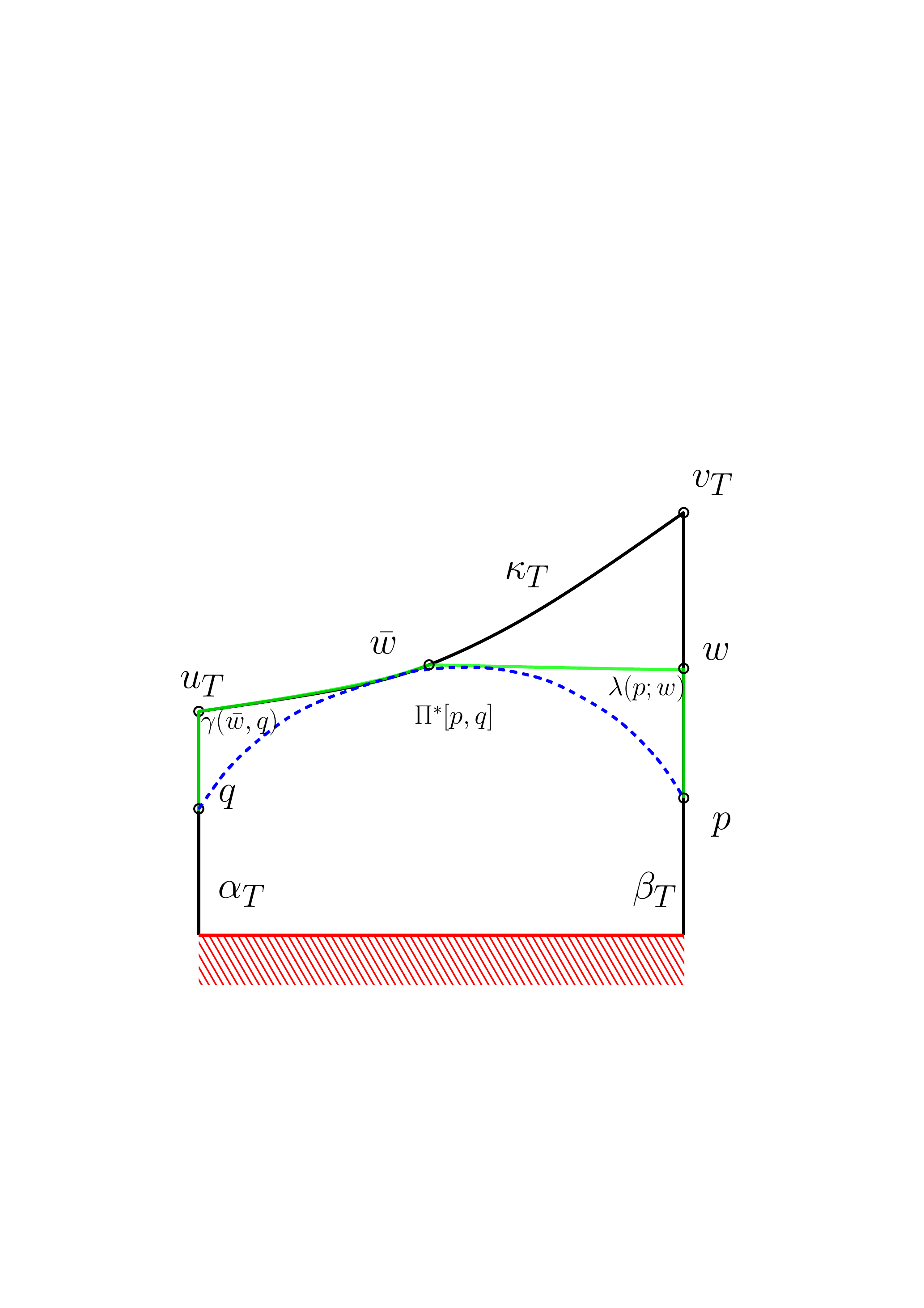}
      \label{fig:hard_iii}
      }
      \caption{\sf Different cases considered in the proof of
      Lemma~\ref{lem:hard_well-behaved_cost}.
      We use $\Pi^*[p,q]$ to denote the minimal-cost path between $p$ and $q$.}
      \label{fig:hard_well-behaved_cost}
    \end{center}
  \end{figure*}
  \begin{enumerate}[align=left]
    \item[Case 1: $q \in \kappa_T$.]
      In this case, $\gamma(\bar{w},q) \subseteq \kappa_T$, and therefore $\mu(\gamma(\bar{w},q))
      = \pi(\bar{w},q)$.
      By the triangle inequality, $\pi(\bar{w},q) \leq \mu(\lambda(p; w)) + \pi(p,q)$.
    \item[Case 2: $q,\bar{w} \in \alpha_T$.]
      In this case, $\gamma(\bar{w},q) \subseteq \alpha_T$, and therefore $\mu(\gamma(\bar{w},q))
      = \pi(\bar{w},q)$.
      Again, $\pi(\bar{w},q) \leq \mu(\lambda(p; w)) + \pi(p,q)$.
    \item[Case 3: $q \in \alpha_T, \bar{w} \in \kappa_T$.]
      In this case, $\gamma(\bar{w},q)$ first travels
      along~$\kappa_T$ from~$\bar{w}$ to~$u_T$ and then along~$\alpha_T$ from~$u_T$ to~$q$.
      Since $\cl(u_T) \leq \cl(w)$,
      $$ \pi(q, u_T) \leq \pi(q,p) + \pi(p, w) \leq \pi(p,q) + \mu(\lambda(p; w))$$
      by Corollary~\ref{cor:clearance_bound}.
      Furthermore, by the triangle inequality,
      \begin{align*}
        \pi(\bar{w}, u_T) &\leq \mu(\lambda(p;w)) + \pi(p,q) + \pi(q,u_T)\\
        &\leq 2\mu(\lambda(p;w)) + 2\pi(p,q).
      \end{align*}
      Hence, $\mu(\gamma(\bar{w},q)) = \pi(\bar{w}, u_T) + \pi(u_T, q) \leq 3\mu(\lambda(p;w)) +
      3\pi(p,q)$.
  \end{enumerate}
  Since $\mu(\gamma(\bar{w},q)) \leq 3\mu(\lambda(p;w)) + 3\pi(p,q)$ in all three cases,
  $\mu(\gamma(p,q; w)) \leq 4\mu(\lambda(p;w)) + 3\pi(p,q)$ as claimed.

  Let~$c^*$ be the maximum clearance of a point on the optimal path between~$p$ and~$q$ (if there
  are multiple optimal paths between~$p$ and~$q$, choose one of them arbitrarily).
  Let~$w \in \beta_T$ be the point of clearance $\min\set{c^*, \cl(v_T)}$.
  We now prove that $\mu(\lambda(p;w)) \leq 2 \pi(p,q)$.

  We first note that $\cl(p) \leq \cl(w) \leq c^*$.
  Therefore, by Corollary~\ref{cor:clearance_bound}, $\mu(pw) = \pi(p,w) \leq \pi(p,q)$.
  Next, we argue that $\mu(\eta_w) \leq \pi(p,q)$.
  Indeed, if~$o$ is a polygon edge, then~$\eta_w$ is the Euclidean shortest path between any pair
  of points on~${\beta_T}$ and one of~$\alpha_T$ or~$\kappa_T$ whose clearance never
  exceeds~$c^*$.
  It also (trivially) has the highest clearance of any such path.
  If~$o$ is a polygon vertex, then~$\eta_w$ spans a shorter angle relative to $o$ than any other
  path whose clearance never exceeds~$c^*$.
  By Lemma~\ref{lem:cost_observations}(ii), the cost of any such path from~${\beta_T}$ to one
  of~$\alpha_T$ or~$\kappa_T$ is at least this angle, and by (P1), the cost of~$\eta_w$ is exactly
  this lower bound.
  Either way, any path between~$p$ and~$q$ also goes between~${\beta_T}$ and one of~$\alpha_T$
  or~$\kappa_T$, so we conclude that~$\mu(\eta_w) \leq \pi(p,q)$.
  Hence, $\mu(\lambda(p;w)) \leq 2\pi(p,q)$.
  In particular, $\mu(\lambda(p;w^*_p)) \leq 2\pi(p,q)$, and $\mu(\gamma(p,q;w^*_p)) \leq
  11\pi(p,q)$. 
\end{proof}

If~$p \in \beta_T$ and~$q \notin \beta_T$, then computing~$\gamma(p,q)$ requires computing the
anchor point~$w$ that minimizes~$\mu(\lambda(p,w))$.
We show that the point~$w^*_p \in \beta_T$ that defines~$\gamma(p,q)$ is either~$p$ itself or a
point that only depends on the geometry of~$\partial T$ and not on~$p$ or~$q$.

\begin{lemma}
\label{lem:good_anchors}
There exist two points~$w_{\alpha_T}$ and $w_{\kappa_T}$ on~$\beta_T$, such that for any~$p \in
\beta_T$, $w^*_p \in \set{p, w_{\alpha_T}, w_{\kappa_T}}$.
Furthermore, these two points can be computed in~$O(1)$ time.
\end{lemma}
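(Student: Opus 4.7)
The plan is to reduce the minimization defining $w^*_p$ to a one-variable optimization whose critical points depend only on the geometry of $T$. First, I decompose $\mu(\lambda(p;w))$ into a function of $w$ alone plus a function of $p$ alone. The edge $\beta_T$ is a line segment from $v_T$ to $\psi_o(v_T)$, perpendicular to $o$ when $o$ is an edge and radial toward $o$ when $o$ is a vertex; hence for any $p, w \in \beta_T$ with $\cl(w) \geq \cl(p)$, the point $p$ lies on the segment $w\psi_o(w)$, so Lemma~\ref{lem:cost_observations}(i) together with a direct evaluation of the cost of $pw \subseteq \beta_T$ yields $\mu(pw) = \pi(p,w) = \ln \cl(w) - \ln \cl(p)$. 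Therefore $\mu(\lambda(p;w)) = f(w) - \ln \cl(p)$, where $f(w) := \ln \cl(w) + \mu(\eta_w)$ depends only on $w$, and minimizing $\mu(\lambda(p;w))$ is equivalent to minimizing $f$ on $\{w \in \beta_T : \cl(w) \geq \cl(p)\}$.

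Next, I partition $\beta_T$ at the point $w_0$ with $\cl(w_0) = \cl(u_T)$: on $\beta_T^\alpha := \{w : \cl(w) \leq \cl(u_T)\}$ the other endpoint $\bar{w}$ of $\eta_w$ lies on $\alpha_T$, while on $\beta_T^\kappa := \{w : \cl(w) \geq \cl(u_T)\}$ it lies on $\kappa_T$. Parameterized by $c := \cl(w)$, the function $f$ has an elementary closed form on each sub-interval; for instance, when $\bar{w} \in \alpha_T$ and $o$ is an edge, $\eta_w$ is a chord of fixed length $\ell$, so $f(c) = \ln c + \ell/c$ is unimodal with minimum at $c = \ell$, while if $o$ is a vertex, $\eta_w$ subtends a fixed angle $\theta$ at $o$, giving $f(c) = \ln c + \theta$, which is monotone. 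The analogous formulas for $\bar{w} \in \kappa_T$ split further according to whether $\kappa_T$ is a line segment or a parabolic arc; in each of the four resulting sub-cases, the equation $f'(c) = 0$ reduces to a polynomial of bounded degree, and a direct sign analysis rules out all but one root as a critical point of $f$ inside the sub-interval. Consequently $f$ is unimodal on each sub-interval, and I define $w_{\alpha_T}$ (respectively $w_{\kappa_T}$) as the minimizer of $f$ on $\beta_T^\alpha$ (respectively $\beta_T^\kappa$)---either the interior critical point or an endpoint of the sub-interval. Both points depend only on the geometry of $T$.

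Finally, the constraint $\cl(w) \geq \cl(p)$ merely truncates each sub-interval at $w = p$, and by unimodality the minimum of $f$ on the feasible portion is attained at the unconstrained minimizer (if it remains feasible) or at the new boundary $w = p$, with the original sub-interval endpoints absorbed into the definitions of $w_{\alpha_T}$ and $w_{\kappa_T}$. Combining the two sub-intervals then yields $w^*_p \in \{p, w_{\alpha_T}, w_{\kappa_T}\}$. For the complexity claim, each of $w_{\alpha_T}$ and $w_{\kappa_T}$ is found by evaluating $f$ at the endpoints of its sub-interval and at the bounded-degree roots of $f'=0$, all in $O(1)$ time under the stated model of computation. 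The main obstacle is the case analysis on $\beta_T^\kappa$: in the sub-case where $o$ is an edge and $\kappa_T$ is a parabolic arc, $f'(c) = 0$ becomes a cubic, and showing that at most one of its real roots is a valid critical point of $f$ in the relevant interval requires a careful sign argument based on the relative orientation of $\kappa_T$ and $\beta_T$.
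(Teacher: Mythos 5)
The proposal is correct and takes essentially the same approach as the paper: parameterize the anchor point, reduce to a one-variable unimodal optimization in each sub-case (vertex vs.\ edge obstacle, $\bar{w}$ on $\alpha_T$ vs.\ $\kappa_T$, $\kappa_T$ linear vs.\ parabolic), and observe that the critical point depends only on the geometry of $T$, with the constraint $\cl(w)\geq\cl(p)$ contributing only the boundary candidate $w=p$. Your explicit decomposition $\mu(\lambda(p;w)) = f(w) - \ln\cl(p)$ is a cleaner framing of what the paper does implicitly inside each of its parameterizations (the $-\ln\cl(p)$ appears as a constant offset in every $\mu(\theta)$ or $\mu(t)$), but the case analysis and the underlying calculus are the same.
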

\begin{proof}
  There are several cases to consider depending on whether~$o$ is a vertex or an edge, whether the
  point~$\bar{w}^*$ lies on~$\kappa_T$ or~$\alpha_T$, and whether~$\kappa_T$ is a line segment or
  a parabolic arc.
  Depending on the geometry of~$T$, we define~$w_{\alpha_T}$ and~$w_{\kappa_T}$ accordingly.
  In each case, we parameterize the anchor point~$w$ on~$\beta_T$ appropriately and show that
  $w^*_p \in \set{p, w_{\alpha_T}, w_{\kappa_T}}$.
  For simplicity, for a parameterized anchor point $w(t)$, we use~$\eta(t)$, $\lambda(t)$, and
  $\mu(t)$ to denote~$\eta_{w(t)}$, $\lambda(p; w(t))$, and $\mu(\lambda(p;w(t)))$, respectively.
  We now describe each case:
  \begin{enumerate}[wide,itemindent=0pt]
    \item[Case 1: $o$ is a vertex.]
      \WLOG, assume that~$o$ lies at the origin, edges~${\alpha_T}$ and~$\beta_T$ intersect the
      line~$y = 0$ at the origin with angles~$\theta_\alpha$ and $\theta_\beta$ respectively,
      and~$\theta_\beta > \theta_\alpha \geq 0$.
      In this case,~$\eta_w$, the constant-clearance path anchored at~$w \in \beta_T$, is a circular
      arc.
      We consider two cases depending on whether~$\bar{w}^*$ lies on~$\alpha_T$ or~$\kappa_T$.
      See Figure~\ref{fig:good_anchors_1}.
      \begin{figure*}[t]
        \begin{center}
          \subfloat
          [\sf Case 1(a): $\bar{w}^* \in \alpha_T$]
          { 
          \includegraphics[scale=.34]{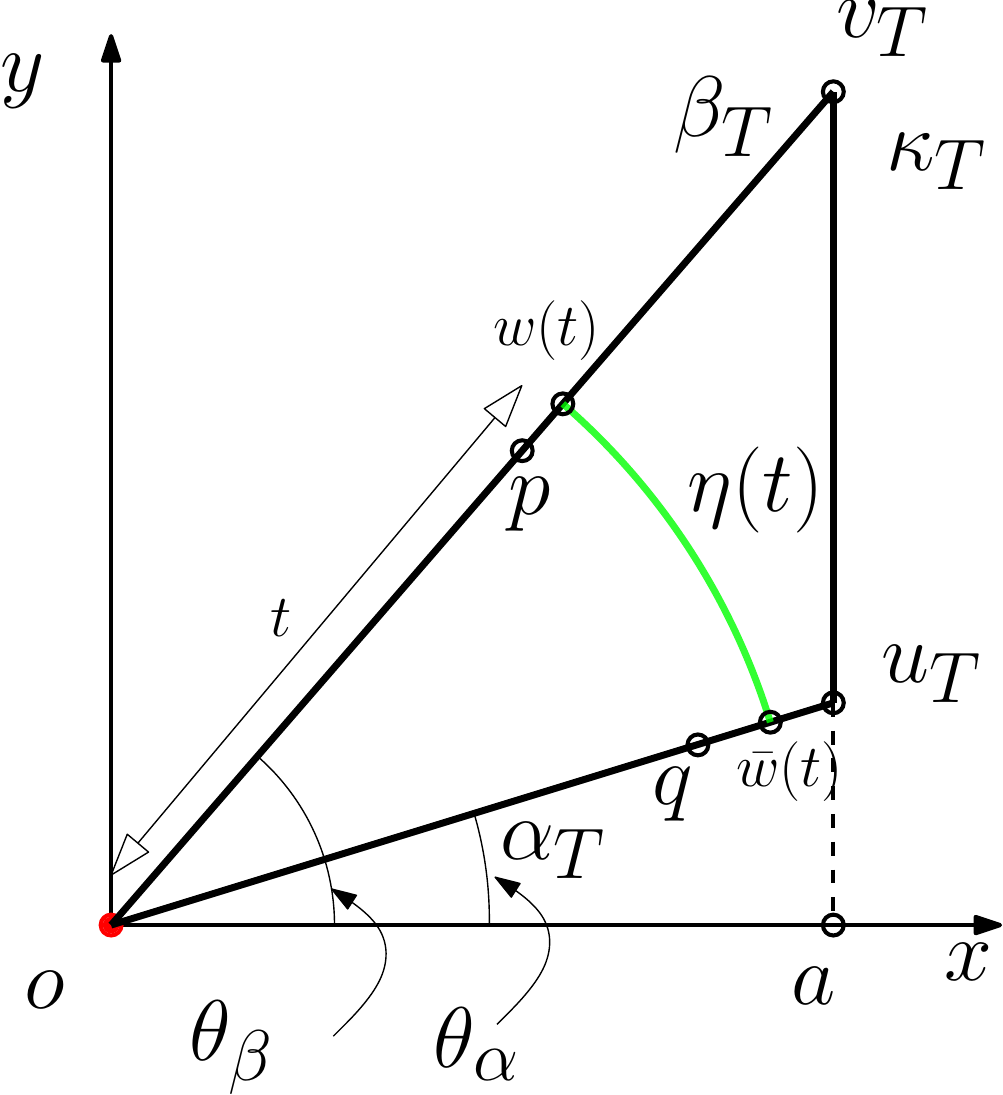}
          \label{fig:good_anchors_1a}
          }
          \hspace{0.03\textwidth}
          \subfloat
          [\sf Case 1(b)(i): $\bar{w}^* \in \kappa_T$; $\kappa_T$ is a line segment]
          {
          \includegraphics[scale=.34]{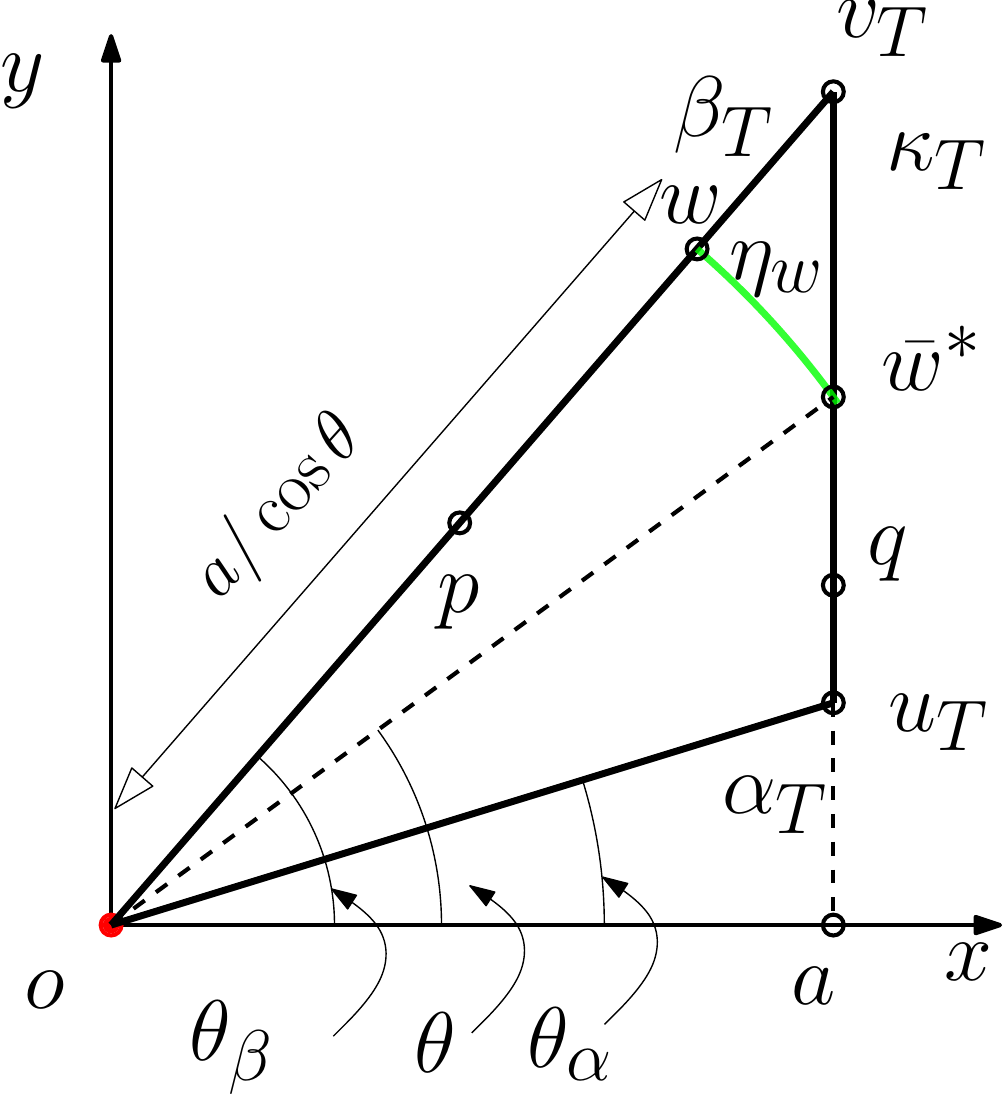}
          \label{fig:good_anchors_1bi}
          }
          \hspace{0.01\textwidth}
          \subfloat
          [\sf Case 1(b)(ii): $\bar{w}^* \in \kappa_T$; $\kappa_T$ is a parabolic arc]
          {
          \includegraphics[scale=.34]{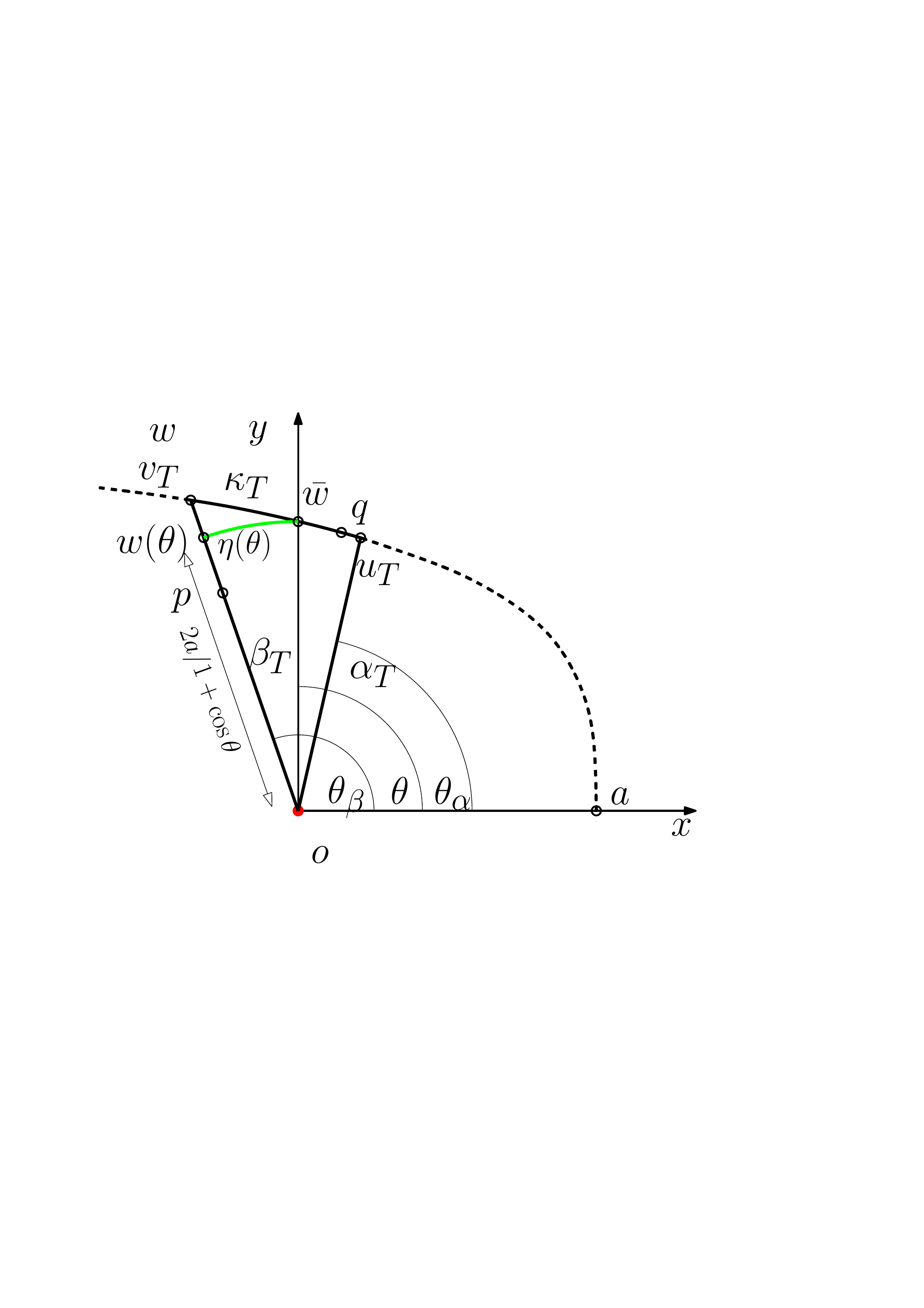}
          \label{fig:good_anchors_1bii}
          }
          \caption{\sf Different cases considered in the proof of
          Lemma~\ref{lem:good_anchors}, Case 1: $o$ is a vertex.}
          \label{fig:good_anchors_1}
        \end{center}
      \end{figure*}
      \begin{enumerate}[wide,itemindent=0pt]
        \item[Case 1(a): $\bar{w}^* \in \alpha_T$.]
          We parameterize the anchor point~$w$ by its clearance, i.e. $\cl(w(t)) = t$, and the
          feasible range of~$t$ is $[\cl(p), \cl(u_T)]$.
          If $\bar{w}(t) \in \alpha_T$, then the cost of $\eta(t)$ is simply $\theta_\beta -
          \theta_\alpha$, and
          $$\mu(t) = \ln \frac{\cl(w(t))}{\cl(p)} + \theta_\beta - \theta_\alpha = \ln
          \frac{t}{\cl(p)} + \theta_\beta - \theta_\alpha.$$
          Therefore, $\mu$ is minimized in the range $[\cl(p), \cl(u_T)]$ for $t = \cl(p)$, so $w^*
          = p$ in this case.
        \item[Case 1(b): $\bar{w}^* \in \kappa_T$.]
          We parameterize $w$ with the angle~$\theta$ of the segment~$o\bar{w}$.
          We call~$\theta$ \emph{feasible} if $\cl(w(\theta)) \geq \cl(p)$ and $\theta_\alpha \leq
          \theta \leq \theta_\beta$.
          We divide this case further into two subcases:
          \begin{enumerate}[wide,itemindent=0pt]
            \item[Case 1(b)(i): $\kappa_T$ is a line segment.]
              \WLOG, $\kappa_T$ is supported by the line $x = a$.
              The equation of the line in polar coordinates is $r = a/\cos \theta$.
              We have $\theta_\beta \leq \pi/2$.
              Restricting ourselves to feasible values of~$\theta$, we have
              $$\mu(\theta)
              =\ln \frac{\cl(w(\theta))}{\cl(p)} + \theta_\beta - \theta
              =\ln \frac{a / \cos \theta}{\cl(p)}  + \theta_\beta - \theta.$$
              Taking the derivative, we obtain
              $$\diff{}{\theta} \mu(\theta) = \tan \theta -1.$$
              This expression is negative for~$\theta = 0$, positive near~$\theta = \pi/2$, and it
              has at most one root within feasible values of~$\theta$, namely at~$\theta = \pi/4$.
              Therefore,~$\mu(\theta)$ is minimized when either~$\cl(w(\theta)) = \cl(p)$
              or~$\theta = \theta^* = \min\{\max\{\pi/4, \theta_\alpha\},\theta_\beta\}$.
              We pick $w_{\kappa_T} = w(\theta^*)$.
            \item[Case 1(b)(ii): $\kappa_T$ is a parabolic arc.]
              \WLOG, the parabola supporting~$\kappa_T$ is equidistant between~$o$ and the line~$x
              = 2a$.
              The equation of the parabola in polar coordinates is~$r = 2a/(1 + \cos \theta)$.
              We have~$\theta_\beta \leq \pi$.
              The polar coordiantes of $w(\theta)$ are $w(\theta) = (\frac{2a}{1+\cos \theta},
              \theta_\beta)$.

              Restricting ourselves to feasible values of~$\theta$, we have
              $$ \mu(\theta)
              =\ln \frac{\cl(w(\theta))}{\cl(p)} + \theta_\beta - \theta
              =\ln \frac{2a/(1+\cos \theta)}{\cl(p)} + \theta_\beta - \theta.$$
              Here,
              $$ \diff{}{\theta} \mu(\theta)
              =\frac{\sin \theta}{1 + \cos \theta} - 1
              =\tan(\theta/2) -1.$$
              Again, the expression is negative for~$\theta = 0$, positive for~$\theta$
              near~$\pi$, and it has at most one root within feasible values of~$\theta$, namely
              at~$\theta = \pi/2$.
              Therefore,~$\mu(\theta)$ is minimized when either~$\cl(w(\theta)) = \cl(p)$ or
              $\theta = \theta^* = \min\{\max\{\pi/2,\theta_{{\alpha_T}}\},\theta_{\beta_T}\}$.
              We pick~$w_{\kappa_T} = w(\theta^*)$.
          \end{enumerate}
      \end{enumerate}
    \item[Case 2: $o$ is an edge.]
      \WLOG, $o$ lies on the line~$y = 0$, the edge~${\alpha_T}$ lies on the line~$x = x_\alpha$,
      the edge~${\beta_T}$ lies on the line~$x = x_\beta$, and~$x_\beta > x_\alpha \geq 0$.
      In this case, $\eta_w$ is a horizontal segment.
      We again consider two cases.
      See Figure~\ref{fig:good_anchors_2}.
      \begin{figure*}[t]
        \begin{center}
          \subfloat
          [\sf Case 2(a): $\bar{w}^* \in \alpha_T$]
          { 
          \includegraphics[scale=.34]{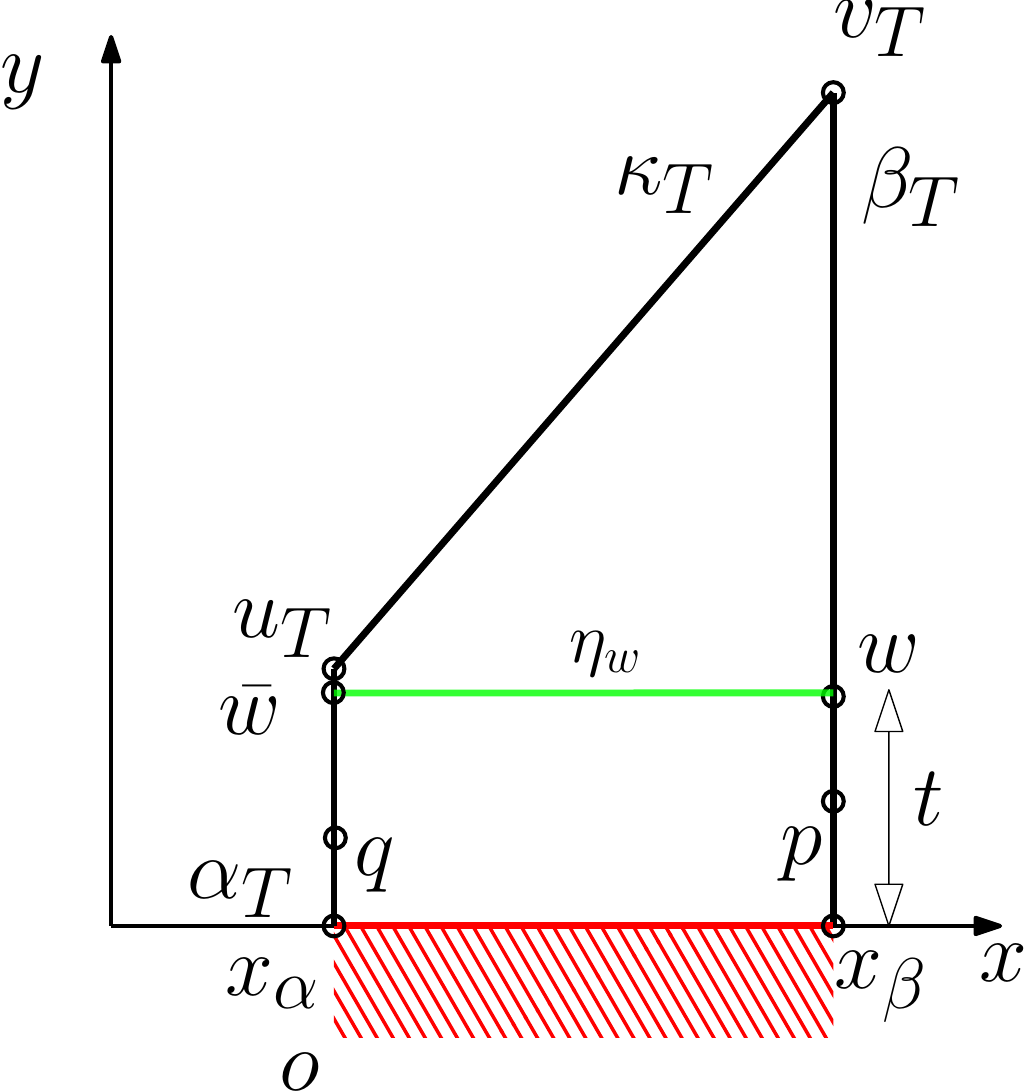}
          \label{fig:good_anchors_1a}
          }
          \hspace{0.03\textwidth}
          \subfloat
          [\sf Case 2(b)(i): $\bar{w}^* \in \kappa_T$; $\kappa_T$ is a line segment]
          {
          \includegraphics[scale=.34]{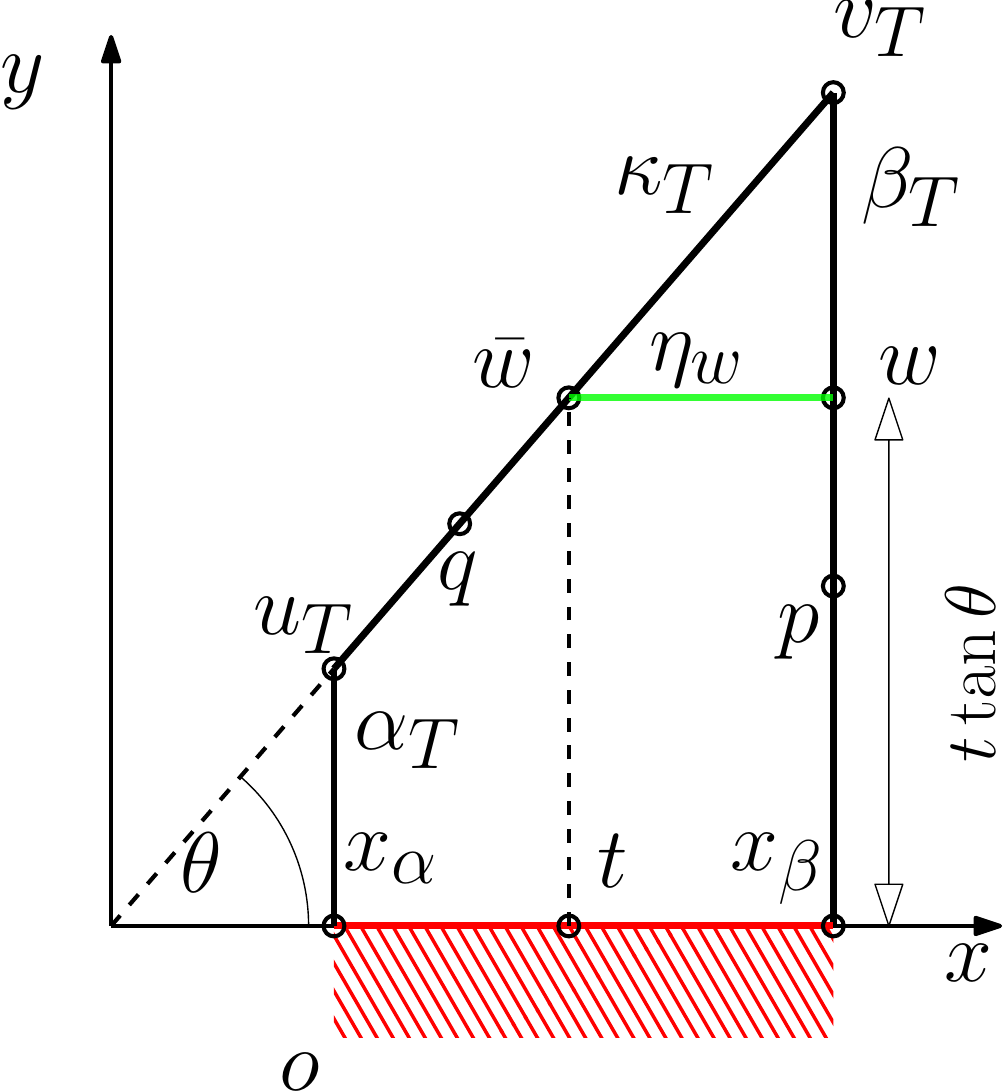}
          \label{fig:good_anchors_1bi}
          }
          \hspace{0.01\textwidth}
          \subfloat
          [\sf Case 2(b)(ii): $\bar{w}^* \in \kappa_T$; $\kappa_T$ is a parabolic arc]
          {
          \includegraphics[scale=.34]{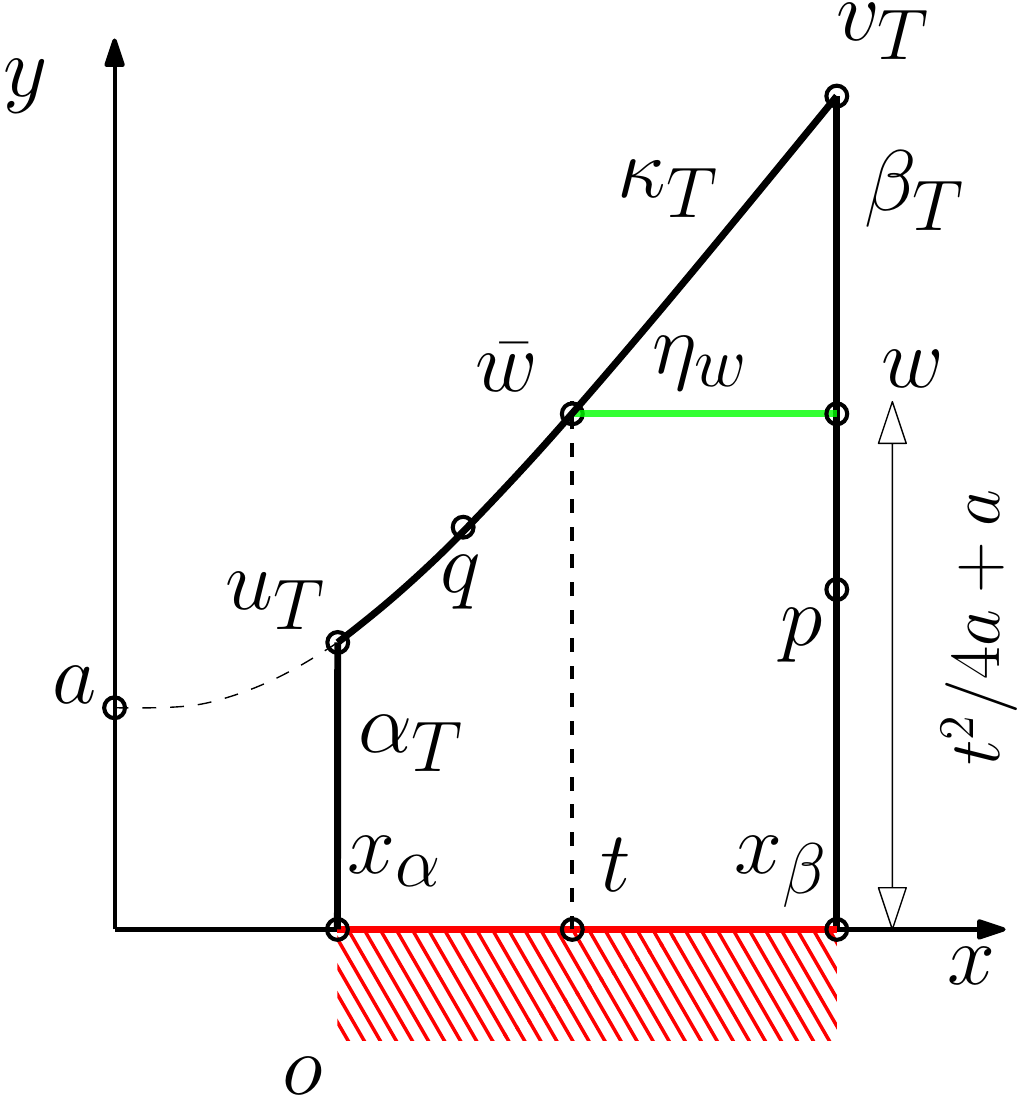}
          \label{fig:good_anchors_1bii}
          }
          \caption{\sf Different cases considered in the proof of
          Lemma~\ref{lem:good_anchors}, Case 2: $o$ is an edge.}
          \label{fig:good_anchors_2}
        \end{center}
      \end{figure*}
      \begin{enumerate}[wide,itemindent=0pt]
        \item[Case 2(a): $\bar{w}^* \in \alpha_T$.]
          As in Case 1(a), we parameterize the anchor point~$w$ by its clearance, i.e., $\cl(w(t)) =
          t$, and the feasible range of~$t$ is $[\cl(p), \cl(u_T)]$.
          Restricting ourselves to feasible values of~$w(t)$, we have
          $$ \mu(\gamma(p,\lambda)) 
          =\ln \frac{\cl(w(t))}{\cl(p)} + \frac{\dist{\eta(t)}}{\cl{(w(t))}}
          =\ln \frac{t}{\cl(p)} + \frac{x_\beta - x_\alpha}{t}.$$
          We have
          $$\diff{}{t} \mu(t) = 1/t - \frac{(x_\beta - x_\alpha)}{t^2}.$$
          This expression is negative for $t$ near~$0$, positive for large~$t$, and it has at most
          one root within feasible values of~$t$, namely at~$t = x_\beta - x_\alpha$.
          If $x_\beta - x_\alpha \leq \cl(u_T)$, then~$\mu(t)$ is minimized when either~$\cl(w(t)) =
          \cl(p)$ or~$t = t^* = x_\beta - x_\alpha$.
          If $x_\beta - x_\alpha > \cl(u_T)$, then $\bar{w}^* \in \kappa_T$, so assume that $x_\beta
          - x_\alpha \leq \cl(u_T)$.
          We pick~$w_{\alpha_T} = w(t^*)$.
      \end{enumerate}
      \begin{enumerate}[wide,itemindent=0pt]
        \item[Case 2(b): $\bar{w}^* \in \kappa_T$.]
          We parameterize~$w$ by the $x$-coordinate of $\bar{w}$.
          We call $t$ feasible if $\cl(w(t)) \geq \cl(p)$ and $t \in [x_\alpha, x_\beta]$.
          There are two subcases.
          \begin{enumerate}[wide,itemindent=0pt]
            \item[Case 2(b)(i): $\kappa_T$ is a line segment.]
              \WLOG, the line supporting~$\kappa_T$ intersects~$o$ at the origin with angle~$\theta$.
              Restricting ourselves to feasible values of~$t$, we have
              $$ \mu(t) 
              =\ln \frac{\cl(w(t))}{\cl(p)} + \frac{\dist{\eta(t)}}{\cl{(w(t))}}
              =\ln \frac{t \tan \theta}{\cl(p)} + \frac{x_\beta - t}{t \tan \theta}.$$
              We see
              $$\diff{}{t} \mu(t) = \frac{1}{t} - \frac{x_\beta}{t^2 \tan \theta}
              = \frac{ \tan \theta - x_\beta}{t^2 \tan \theta}.$$
              This expression is negative for $t$ near~$0$, positive for large~$t$, and it has at most
              one root within feasible values of~$t$, namely at~$t = x_\beta/\tan \theta$.
              Therefore,~$\mu(t)$ is minimized when either~$\cl(w(t)) = \cl(p)$ or~$t = t^* =
              \min\{\max\{x_\beta/\tan \theta, x_\alpha\},x_\beta\}$.
              We pick~$w_{\kappa_T} = w(t^*)$;
              note that $\cl(w_{\kappa_T}) = t^* \tan \theta$.
            \item[Case 2(b)(ii): $\kappa_T$ is a parabolic arc.]
              \WLOG, the parabola supporting~$\kappa_T$ is equidistant between~$o$ and a point located
              at~$(0, 2a)$.
              Therefore, the parabola is described by the equation~$y = x^2/(4a) + a$.

              Restricting ourselves to feasible values of~$t$, we have
              $$ \mu(t) 
              =\ln \frac{\cl(w(t))}{\cl(p)} + \frac{\dist{\eta(t)}}{\cl{(w(t))}}
              =\ln \frac{t^2/(4a) + a}{\cl(p)} + \frac{x_\beta - t}{t^2/(4a) + a}.$$
              We have
              $$\diff{}{t} \mu(t) = \frac{2t^3 + 4at^2 + 8a(a - x_\beta)t - 16a^3}
              {(t^2 + 4a^2)^2}.$$
              This expression is negative for~$t$ near~$0$ and positive for large~$t$.
              The derivative of the numerator is~$6t^2 + 8at + 8a(a - x_\beta)$, which has at most
              one positive root.
              Therefore, the numerator has at most one positive local maximum or minimum.
              We see $\diff{}{t} \mu(t)$ goes from negative to positive around exactly one positive
              root (which may not be feasible), and~$\mu(t)$ has one minimum at a positive value
              of~$t$.
              Let~$t'$ be this root of~$\diff{}{t} \mu(t)$.
              Value~$\mu(t)$ is minimized when either~$\cl(w(t)) = \cl(p)$ or $t = t^* =
              \min\{\max\{t', x_\alpha\},x_\beta\}$.
              We pick~$w_{\kappa_T} = w(t^*)$;
              note that $\cl(w_{\kappa_T}) = (t^*)^2/(4a) + a$.
          \end{enumerate}
      \end{enumerate}
  \end{enumerate}

  We note that in all cases $w^* \in \set{p, w_{\kappa_T}, w_{\alpha_T}}$ for some choices of
  $w_{\kappa_T}$ and $w_{\alpha_T}$ that depend only on the geometry of~$T$ and not on~$p$.
  Note that no subcase of Case 1 required picking a concrete $w_{\alpha_T}$, so if $o$ is a
  vertex, we let $w_{\alpha_T}$ be an arbitrary point on $\beta_T$.
  In every case, $w_{\kappa_T}$ and $w_{\alpha_T}$ can be computed in~$O(1)$ time.
  We conclude the proof of the lemma.
\end{proof}

\section{Approximation Algorithms}
\label{sec:ptas}

In this section, we propose a near-quadratic-time $(1+\eps)$-approximation algorithm for
computing the minimal-cost path between two points $s,t \in \calF$ amid $\calO$.
We assume that $\cl(s) \leq \cl(t)$ throughout this section.
We first give a high-level overview of the algorithm and then describe each step in detail.
Throughout this section, let~$\Pi^*$ denote a minimal-cost $(s,t)$-path.

\noindent
\paragraph{High-level description.}
Our algorithm begins by computing the refined Voronoi diagram~$\dvd$ of~$\calO$.
The  algorithm then works in three stages.
The first stage computes an $O(n)$-approximation of~$\OPT = \mu (s,t)$,
\ie, it returns a value~$\tilde{d}$ such that~$\OPT \leq \tilde{d} \leq cn\OPT$ for some constant~$c >0$.
By augmenting~$\dvd$ with a linear number of additional edges, each a constant-clearance path
between two points on the boundary of a cell of~$\dvd$, the algorithm constructs a graph~$G_1$
with~$O(n)$ vertices and edges, and it computes a minimal-cost path from~$s$ to~$t$ in~$G_1$. 

Equipped with the value~$\tilde{d}$, the second stage computes an $O(1)$-approximation of~$\OPT$.
For a given~$d \geq 0$, this algorithm constructs a graph~$G_2 = G_2[d]$ by sampling~$O(n)$ points
on the boundary of each cell~$T$ of~$\dvd$ and connecting these sample points by adding~$O(n)$
edges (besides the boundary of~$T$), each of which is again a constant-clearance path.
The resulting graph~$G_2$ is planar and has~$O(n^2)$ edges total, so a minimal-cost path in~$G_2$
from~$s$ to~$t$ can be computed in~$O(n^2)$ time~\cite{HKRS97}.
We show that if $d \geq \OPT$, then the cost of the optimal path from~$s$ to~$t$ in~$G_2$
is~$O(d)$.
Therefore, if $d \in [\OPT, 2\OPT]$, the cost of the optimal path is~$O(\OPT)$.
Using the value of~$\tilde{d}$, we run the above procedure for~$O(\log n)$ different values
of~$d$, namely $d \in \{\tilde{d}/2^i \mid 0 \leq i \leq \lceil \log_2 cn \rceil\}$, and return
the least costly path among them.
Let~$\hat{d}$ be the cost of the path returned.

Finally, using the value~$\hat{d}$, the third stage samples~$O(n/\eps)$ points on the boundary of
each cell~$T$ of~$\dvd$ and connects each point to $O((1/\eps) \log (n/\eps))$ other points on the
boundary of~$T$ by an edge.
Unlike the last two stages, each edge is no longer a constant-clearance path but it is a
minimal-cost path between its endpoints lying inside~$T$.
The resulting graph~$G_3$ has~$O(n^2 / \eps)$ vertices and~$O((n^2 / \eps^2) \log (n / \eps))$
edges.
The overall algorithm returns the minimal-cost path in~$G_3$.
Anchor points and well-behaved paths play a pivotal role in each stage of the algorithm.

\subsection{Computing an $O(n)$-approximation algorithm}
\label{subsec:O(n)-apx}
Here, we describe a near-linear time algorithm to obtain an $O(n)$-approximation of~$\OPT$.
We augment~$\dvd$ with~$O(n)$ additional edges as described below to create the
graph~$G_1 = (V_1, E_1)$.

We do the following for each cell~$T$ of~$\dvd$.
We compute anchor points~$w_{\alpha_T}$ and~$w_{\kappa_T}$ as described in
Lemma~\ref{lem:good_anchors}.
Let~$s_T$ be the point on~$\beta_T$ of clearance~$\min\{\cl(v_T), \cl(s)\}$.
Set $W_T = \set{w_{\alpha_T}, w_{\kappa_T}, s_T}$, $E_T = \set{\eta_w | w \in W_T}$, and 
$\bar{W}_T = \set{\bar{w}_{\alpha_T}, \bar{w}_{\kappa_T}, \bar{s}_T}$.
Vertex set~$V_1$ is the set of Voronoi vertices plus the set $W_T \cup \bar{W}_T$ for all cells $T
\in \dvd$%
\footnote{Note that as we consider each cell independently, we actually consider each edge $e$
twice as it is adjacent to two cells and add vertices on $e$ for each cell independently.
The set of vertices put on $e$ is the union of these two sets.
Considering each edge twice does not change the complexity of the algorithm or its analysis, and
doing so simplifies the algorithm's description.}.
Next, for each edge $e$ of $\dvd$, we add the portion of $e$ between two consecutive vertices of
$V_1$ as an edge of $E_1$, and for each cell $T \in \dvd$ we also add $E_T$ to $E_1$.
See Figure~\ref{fig:n-approx}(a) and (b).
(Note that if $s_T = v_T$ then $\eta_{s_T}$ is a trivial path and there is no need to add
$\eta_{s_T}$ to $E_1$.
Paths $\eta_{w_{\alpha_T}}$ and $\eta_{w_{\kappa_T}}$ may be trivial as well.)
The cost $\mu(e)$ for each edge $e \in E_1$ is computed using~\eqref{eq:length} or the equations
of Wein \etal~\cite{WBH08} for Voronoi edges.
By construction, $|V_1| = O(n)$ and $|E_1| = O(n)$.
We compute and return, in $O(n \log n)$ time, an optimal path from $s$ to $t$ in $G_1$.

\begin{figure*}[tb]
  \centering
  \subfloat
   [\sf Point-obstacle cell]
   { 
    \hspace{.05\textwidth}
   	\includegraphics[scale=.34]{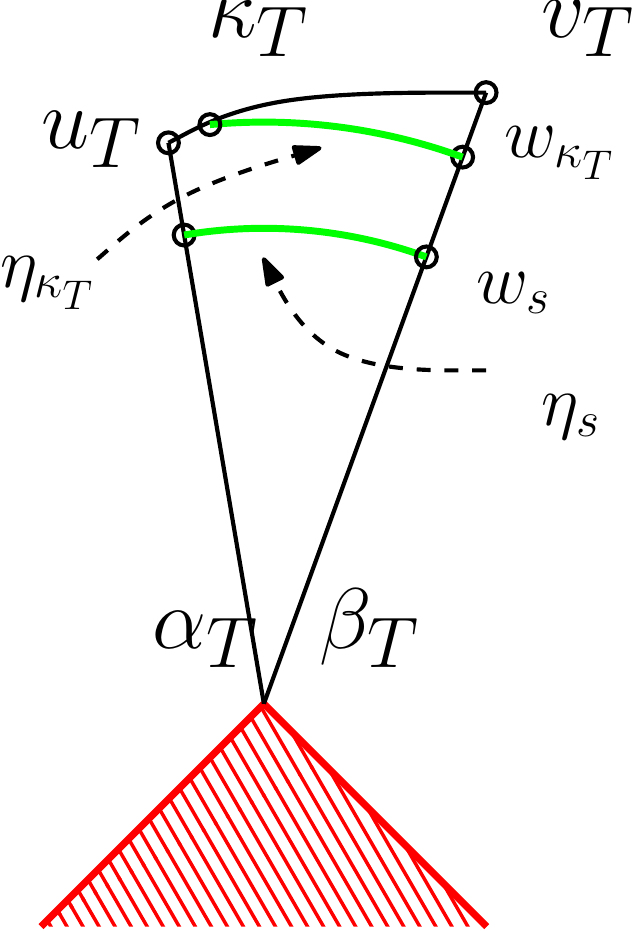}
    \hspace{.05\textwidth}
   }
  \subfloat
   [\sf Line-obstacle cell]
   {
   	\includegraphics[scale=.34]{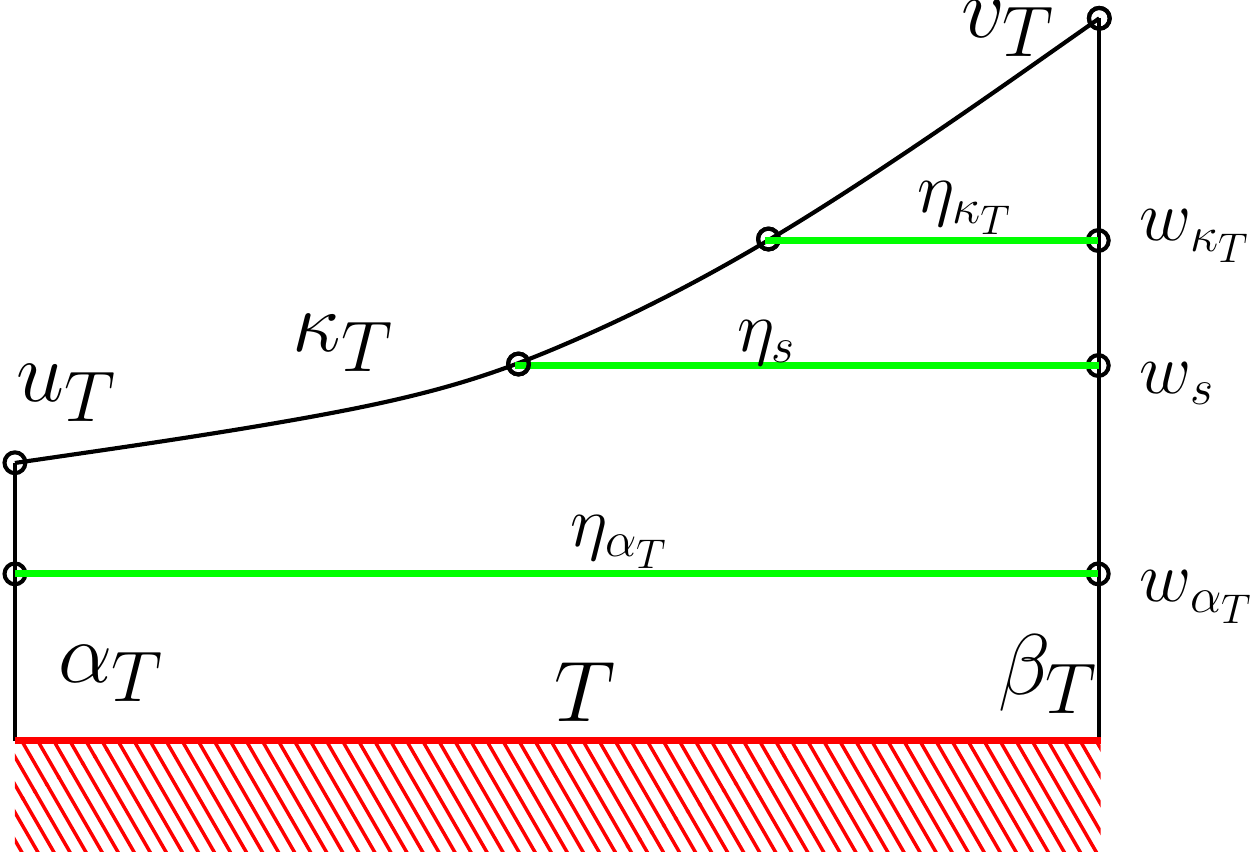}
   }
  \subfloat
   [\sf $(p,q)$-path $\gamma$]
   {
     \includegraphics[scale=.34]{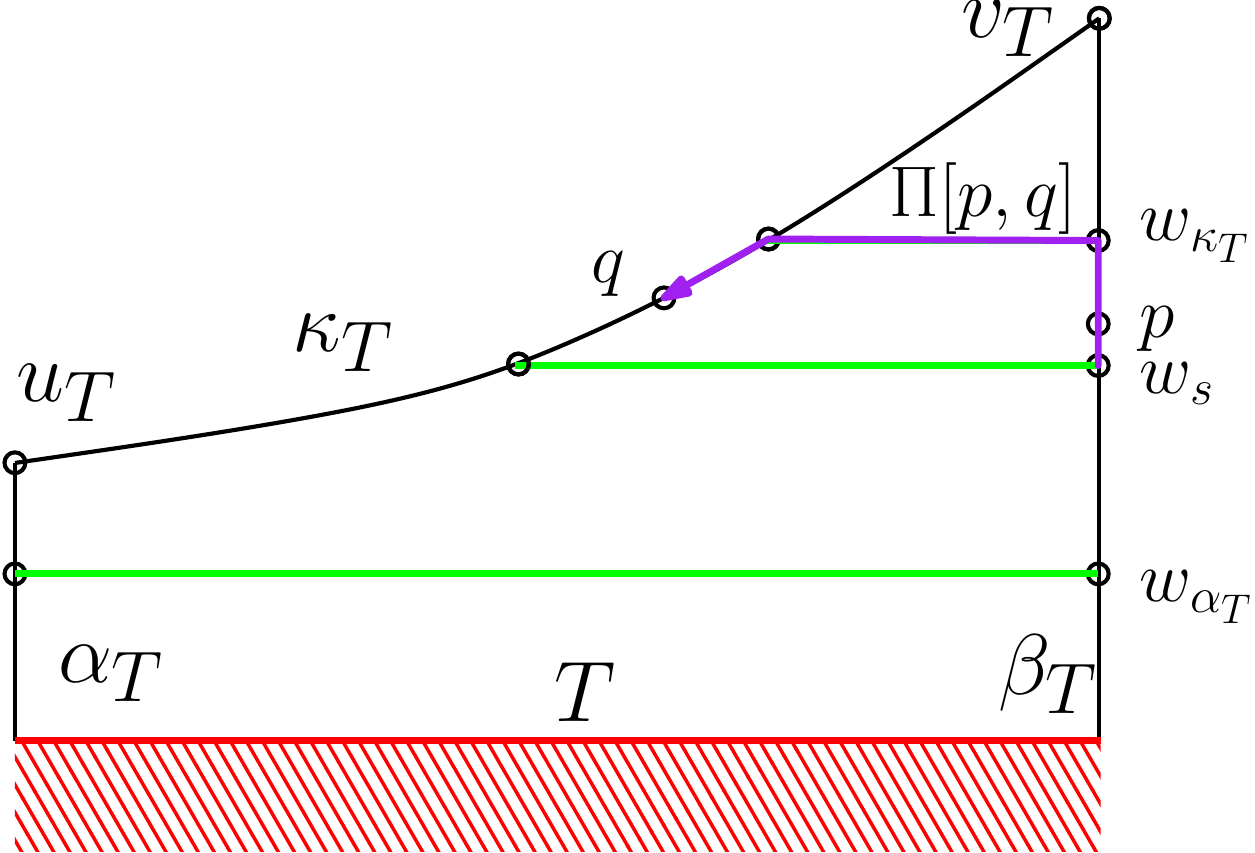}
   }
  \caption{\sf (a), (b) Edges added within cell~$T$ for the~$O(n)$-approximation algorithm.
  (c) Visualization of proof of Lemma~\ref{lem:n-approximate}.}
  \label{fig:n-approx}
\end{figure*}

\begin{lemma}
\label{lem:n-approximate}
Graph~$G_1$ contains an $s,t$-path of cost at most~$O(n) \cdot \OPT$.
\end{lemma}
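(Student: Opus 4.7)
My approach is to trace the optimal path $\Pi^*$ cell-by-cell through $\dvd$ and replace its restriction to each cell by a short $G_1$-walk. By property~(P5), $\Pi^*$ visits a sequence $T_1,\ldots,T_k$ of cells of $\dvd$ with $k = O(n)$, entering each $T_i$ at some point $p_{i-1}$ (with $p_0 = s$) and leaving at $p_i$ (with $p_k = t$). Writing $c_i = \mu(\Pi^*[p_{i-1}, p_i])$, we have $\OPT = \sum_i c_i$. The replacement path $P$ will use $O(1)$ edges of $G_1$ per cell, giving a total of $O(n)$ edges, and my goal will be to bound the cost of each such edge by $O(\OPT)$ so that $\mu(P) = O(n)\cdot \OPT$.

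Inside each cell $T_i$ I follow the template of the well-behaved path $\gamma(p_{i-1}, p_i)$ of Section~\ref{sec:prelim}. When the hypotheses of Lemma~\ref{lem:easy_well-behaved_cost} hold, $\gamma$ stays on $\partial T_i$, and I imitate it by walking through consecutive vertices of $V_1$ along the subdivided Voronoi edges already contained in $E_1$. Otherwise I use Lemma~\ref{lem:hard_well-behaved_cost}: by Lemma~\ref{lem:good_anchors} the anchor $w^*$ of $\gamma(p_{i-1}, p_i)$ lies in $\{p_{i-1}, w_{\alpha_{T_i}}, w_{\kappa_{T_i}}\}$, so when $w^* \in \{w_{\alpha_{T_i}}, w_{\kappa_{T_i}}\}$ I use the $E_1$-edge $\eta_{w^*}$ directly (after walking along $\beta_{T_i}$ via $V_1$-vertices to reach $w^*$), and when $w^* = p_{i-1}$ I substitute the $E_1$-edge $\eta_{s_{T_i}}$ anchored at the point $s_{T_i}$ of clearance $\min(\cl(v_{T_i}),\cl(s))$ on $\beta_{T_i}$.

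The per-edge bound is established by case analysis. Portions of $\partial T_i$ are controlled by Corollary~\ref{cor:clearance_bound}, which bounds each piece by $\pi(p_{i-1}, p_i) \leq c_i \leq \OPT$. For a shortcut $\eta_{w_{\alpha_T}}$ or $\eta_{w_{\kappa_T}}$, the optimality analysis in Lemma~\ref{lem:good_anchors} together with Lemma~\ref{lem:cost_observations} shows that the cost equals either a subtended angle at $o$ (when $o$ is an obstacle vertex) or a bounded length-to-clearance ratio (when $o$ is an obstacle edge), each of which is dominated by the cost $c_i$ that $\Pi^*$ must itself pay in spanning $T_i$ between $\beta_{T_i}$ and $\alpha_{T_i}$ or $\kappa_{T_i}$. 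For $\eta_{s_{T_i}}$ the argument is analogous but additionally uses the global bound $\OPT \geq \ln(\cl(t)/\cl(s))$ from Lemma~\ref{lem:cost_observations}(i). Summing $O(1)$ edges per cell over the $O(n)$ cells then yields $\mu(P) = O(n) \cdot \OPT$. The main obstacle will be the $\eta_{s_T}$ case, since its clearance is dictated by the global quantity $\cl(s)$ rather than by local cell geometry; I expect to handle it by showing that $w^* = p_{i-1}$ occurs only when $\Pi^*$ itself traces a long near-constant-clearance arc inside $T_i$ whose cost is $\Omega(\mu(\eta_{s_{T_i}}))$.
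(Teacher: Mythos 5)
Your overall template matches the paper's: deform $\Pi^*$ cell by cell, imitate the well-behaved path inside each cell, and charge each replacement $O(\OPT)$. But the step you flag as ``the main obstacle'' is in fact where the argument fails. When $w^*_{p_{i-1}} = p_{i-1}$ you propose to substitute the $E_1$-edge $\eta_{s_{T_i}}$ and bound it by showing $\Pi^*$ pays $\Omega(\mu(\eta_{s_{T_i}}))$ inside $T_i$. That bound is false, and no variant of it can hold: take $o$ an edge on $y=0$, $\alpha_T$ on $x=1$, $\beta_T$ on $x=2$, and $\cl(s)=\delta$ tiny. Then $s_T=(2,\delta)$, so $\mu(\eta_{s_T})\approx 1/\delta$, while $\OPT$ can be as small as $\Theta(\log(1/\delta))$ and the local cost $c_i$ arbitrarily small. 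So $\mu(\eta_{s_T})$ is unbounded relative to both $c_i$ \emph{and} $\OPT$. In this situation $p$ itself enters $\beta_T$ at high clearance with $w^*_p=p$, so this is exactly the case you need to handle, and the edge you want to use is far too expensive.

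The correct resolution --- and what the paper's proof does --- is to \emph{not} commit to $\eta_{s_T}$. Replace $\Pi[p,q]$ by $p\,s_T\circ\gamma(s_T,q)$: walk from $p$ to $s_T$ along $\beta_T$, then follow the well-behaved path anchored at $w^*_{s_T}$, which by Lemma~\ref{lem:good_anchors} applied to $s_T$ (not to $p$) is whichever of $\{s_T,w_{\alpha_T},w_{\kappa_T}\}$ is cheapest from $s_T$'s perspective. The cost $\pi(p,s_T)\le\OPT$ then follows from a \emph{global} clearance argument (every clearance on $\Pi^*$ lies between $\cl(t)/\exp(\OPT)$ and $\cl(s)\exp(\OPT)$, so $p$ and $s_T$ are within $\exp(\OPT)$ of each other; apply Lemma~\ref{lem:cost_observations}(i) / Corollary~\ref{cor:clearance_bound}), and $\mu(\gamma(s_T,q))\le 11\pi(s_T,q)\le 11(\OPT+\pi(p,q))$ by Lemma~\ref{lem:hard_well-behaved_cost}. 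Crucially, when $w^*_{s_T}=s_T$ the cost of $\eta_{s_T}$ is automatically controlled because Lemma~\ref{lem:hard_well-behaved_cost}'s proof gives $\mu(\lambda(s_T;w^*_{s_T}))\le 2\pi(s_T,q)$; and in the example above $w^*_{s_T}=w_{\alpha_T}$, so $\eta_{s_T}$ is never used. A secondary gap: you assume $\Pi^*$ visits a sequence of $O(n)$ cells, but $\Pi^*$ may revisit a cell. The paper avoids this by, at each step, replacing the portion of the \emph{current} path between its first and last point in the chosen cell, guaranteeing each cell is processed at most once.
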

\begin{proof}
  Let~$\Pi^*$ be an optimal path from $s$ to $t$.
  We will deform~$\Pi^*$ into another path~$\tilde{\Pi}$ from $s$ to $t$ of cost $O(n) \cdot
  \OPT$ that enters or exits the interior of a cell of $\dvd$ only at the vertices of $V_1$ and
  follows an arc of $E_T$ in the interior of the cell $T$.
  By construction, $\tilde{\Pi}$ will be a path in $G_1$ which will imply the claim.

  By construction $s,t \in V_1$.
  Let $\Pi$ denote the current path that we have obtained by deforming $\Pi^*$.
  Let $T \in \dvd$ be the first cell (along $\Pi$) such that $\Pi$ enters the interior of $T$ but
  $\text{int}(T) \cap \Pi$ is not an arc of $E_T$.
  Let $p$ (resp. $q$) be the first (resp. last) point of $\Pi \cap T$.
  If both $p$ and $q$ lie on the same edge of $T$ or neither of them lies on $\beta_T$, we replace
  $\Pi[p,q]$ with the well-behaved path $\gamma(p,q)$, because $\gamma(p,q) \subseteq \partial T$
  in this case.
  Suppose $p \in \beta_T$ and $q \notin \beta_T$ (the other case is symmetric).
  We replace $\Pi[p,q]$ with $\Pi[p,q] = p s_T \circ \gamma(s_T, q)$, i.e., the segment $p s_T
  \subseteq \beta_T$ followed by the well-behaved path $\gamma(s_T, q)$.
  By Lemma~\ref{lem:good_anchors}, $\clo(\text{int}(T) \cap \gamma(s_t,q)) \in E_T$.

  We repeate the above step until no such cell $T$ is left.
  Since the above step is performed at most once for each cell of $\dvd$, we obtain the final path
  $\tilde{\Pi}$ in $O(n)$ steps.

  We now bound the cost of $\tilde{\Pi}$.
  If $\Pi[p,q]$ is replaced by $\gamma(p,q)$, then by Lemma~\ref{lem:easy_well-behaved_cost},
  $\mu(\gamma(p,q)) \leq 3 \mu(\Pi[p,q])$.
  On the other hand, if $p \in \beta_T$ and $q \notin \beta_T$, then $\mu(\tilde{\Pi}[p,q]) =
  \mu(p s_T) + \mu(\gamma(s_T, q))$.
  By the triangle inequality, $\pi(s_T, q) \leq \pi(s_T, p) + \pi(p,q)$, and by
  Lemma~\ref{lem:hard_well-behaved_cost}, 
  \[\mu(\gamma(s_T, q)) \leq 11 \pi(s_T, q) \leq 11(\pi(s_T, p) + \pi(p,q)).\]
  By Corollary~\ref{cor:clearance_bound}, $\pi(p, s_T) = \mu(p s_T) \leq \OPT$.
  Putting everything together,
  \begin{align}
    \mu(\tilde{\Pi}[p,q]) &\leq 12\pi(p,s_T) + 11\pi(p,q)\\
    &\leq 12\OPT + 11\pi(p,q) = O(\OPT). \nonumber
  \end{align}
  Summing over all $O(n)$ steps, the cost of $\tilde{\Pi}$ is $O(n) \cdot \OPT$.
\end{proof}

We thus obtain the following.
\begin{theorem}
\label{thm:easy_approx}
	Let~$\calO$ be a set of polygonal obstacles in the plane, and 
	let~$s,t$ be two points outside~$\calO$.
	There exists an $O(n \log n)$-time 
	$O(n)$-approximation algorithm for computing the minimal-cost 
	path between~$s$ and~$t$.
%
\end{theorem}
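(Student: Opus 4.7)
The plan is to assemble the three ingredients already laid out in the text: the near-linear construction of the refined Voronoi diagram $\dvd$, the identification of the anchor points per cell via Lemma~\ref{lem:good_anchors}, and the quality guarantee of Lemma~\ref{lem:n-approximate}. First I would compute $\dvd$ in $O(n \log n)$ time by constructing the standard Voronoi diagram of the obstacle features and then subdividing each of its $O(n)$ cells in $O(1)$ work per cell using the type-(i) and type-(ii) edges defined in Section~\ref{sec:alg_back}. I would also handle $s$ and $t$ by adding the two special segments toward their nearest obstacle features, each found in $O(\log n)$ time via point location in $\calV$.

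Next I would build $G_1 = (V_1, E_1)$ cell by cell. For each cell $T \in \dvd$, Lemma~\ref{lem:good_anchors} lets us compute $w_{\alpha_T}$ and $w_{\kappa_T}$ in $O(1)$ time, and $s_T$ is trivially obtained in $O(1)$ from $\cl(s)$ and $\cl(v_T)$. The three constant-clearance arcs in $E_T$ as well as the at-most-six new vertices they introduce on $\partial T$ are produced in $O(1)$ per cell, giving $|V_1| = O(n)$ and $|E_1| = O(n)$ in total, with each edge weight $\mu(e)$ evaluated in $O(1)$ via the closed-form expressions from Section~\ref{sec:alg_back} and~\cite{WBH08}. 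Because $G_1$ has $O(n)$ vertices and edges, I would run Dijkstra's algorithm from $s$ in $O(n \log n)$ time and return the resulting $s$-to-$t$ path.

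For correctness, Lemma~\ref{lem:n-approximate} already guarantees that $G_1$ contains an $s$-to-$t$ path of cost at most $O(n) \cdot \OPT$; since Dijkstra returns the minimum-cost path in $G_1$, the returned path satisfies the same bound, establishing the $O(n)$ approximation ratio. Summing the three stages yields the claimed $O(n \log n)$ running time.

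The only step that requires care is the bookkeeping around duplicated vertices on shared Voronoi edges: when a cell $T$ and its neighbor across an edge $e$ each contribute sample points on $e$, the union must be treated as a single sorted vertex set so that $e$ is subdivided consistently in $G_1$. This is noted in the footnote of the construction and is handled by a single sort per edge, costing $O(n \log n)$ overall and not affecting either the size of $G_1$ or the analysis of Lemma~\ref{lem:n-approximate}. I do not foresee any genuine obstacle beyond this bookkeeping, since all the geometric and combinatorial heavy lifting is already encapsulated in Lemmas~\ref{lem:good_anchors} and~\ref{lem:n-approximate}.
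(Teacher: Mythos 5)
Your proposal reconstructs the paper's argument exactly: compute $\dvd$, use Lemma~\ref{lem:good_anchors} to place the anchor points $w_{\alpha_T}, w_{\kappa_T}, s_T$ and the arcs $\eta_w$ in each cell, yielding a graph $G_1$ with $O(n)$ vertices and edges, then run Dijkstra and invoke Lemma~\ref{lem:n-approximate} for the $O(n)$ approximation bound. The bookkeeping observation about shared edges matches the paper's own footnote, so there is nothing substantively different here.
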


\subsection{Computing a constant-factor approximation}
\label{subsec:constant}
Recall that, given an estimate~$d$ of the cost~$\OPT$ of the optimal path, we construct a planar
graph~$G_2 = G_2[d]$ by sampling points along the edges of the refined Voronoi diagram~\dvd.
The sampling procedure here can be thought of as a warm-up for the more-involved sampling
procedure given in Section~\ref{subsec:ptas}.

\begin{figure}[tb]
\centering
  \centering
   	\includegraphics[scale=.34]{./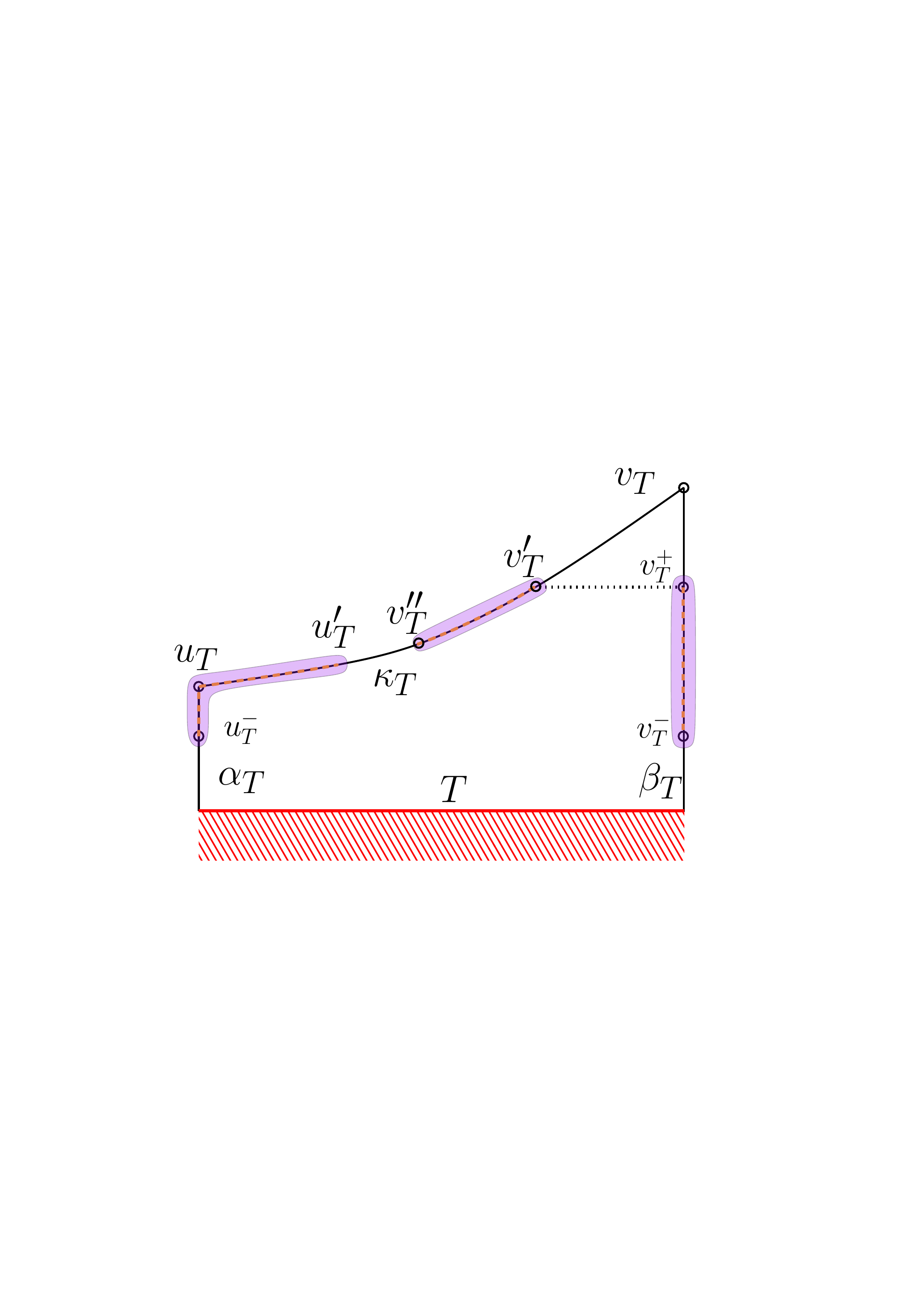}
	\caption{\sf
				Samples placed on the edges of a cell~$T$ of~$\dvd$. 
				The sampled regions are depicted in purple.
        For the constant-factor approximation algorithm, samples are placed on~$\beta_T$ only.}
  \label{fig:samples}
\end{figure}
Let~$T$ be a Voronoi cell of~$\dvd$.
Let~$v_T^-$ and~$v_T^+$ be the points on~$\beta_T$ with clearance $\min
\{\cl(v_T), \cl(t)/\exp(d)\}$ and $\min \{\cl(v_T), \cl(s) \cdot \exp(d)\}$, respectively.
We refer to the segment $\hat{\beta}_T = v_T^- v_T^+ \subseteq \beta_T$ as the \emph{marked
portion} of $\beta_T$.
By \eqref{eq:length_away}, $\mu(\hat{\beta}_T) = O(d)$.
We place sample points on $\hat{\beta}_T$, its endpoints always being sampled, so that the cost
between consecutive samples is exactly~$\frac{d}{n}$ (except possibly at one endpoint).
Given a sample point~$p$ on an edge of~$\dvd$, it is straightforward to compute the coordinates of
the sample point $p'$ on the same edge such that $\pi(p,p') = c$ for any~$c > 0$.
Simply use the formula for the cost along a Voronoi edge given in~\cite[Corollary 8]{WBH08}.
We emphasize that the points are separated evenly by \emph{cost}; the samples are not uniformly
placed by Euclidean distance along the edge; see Figure~\ref{fig:samples}.

For each cell $T \in \dvd$, let $W_T$ be the set of sample points on $\beta_T$ plus the anchor
points $w_{\kappa_T}$ and $w_{\alpha_T}$.
For each point $w \in W_T$, we compute the constant-clearance arc $\eta_w$.
Let $E_T = \set{\eta_w | w \in W_T}$ and $\bar{W}_T = \set{\bar{w} | w \in W_T}$ be the set of
other endpoints of arcs in $E_T$.
Set $V_2$ is the set of vertices of $\dvd$ plus the set $W_T \cup \bar{W}_T$ over all cells in
$\dvd$.
For each edge of $\dvd$, we add the portions between consecutive sample vertices of $V_2$ to
$E_2$, and we also add $E_T$, over all cells $T \in \dvd$, to $E_2$.
The cost of each edge in $E_2$ is computed as before.
We have $|V_2|,|E_2| = O(n^2)$, and $G_2$ can be constructed in $O(n^2)$ time.

The refined Voronoi diagram~$\dvd$ is planar.
Every edge~$\eta_w$ added to create~$G_2$ stays within a single cell of~$\dvd$ and has constant
clearance.
Therefore, no new crossings are created during its construction, and~$G_2$ is planar as well.  We
compute the minimal-cost path from~$s$ to~$t$ in~$G_2$, in $O(n^2)$ time, using the algorithm of
Henzinger \etal~\cite{HKRS97}.

\begin{lemma}
\label{lem:sample_bounds}
If $d \geq \OPT$, then $\Pi^* \cap \beta_T \subseteq \hat{\beta}_T$ for any cell $T \in \dvd$.
\end{lemma}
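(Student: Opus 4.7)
The plan is to bound the clearance of any point $p \in \Pi^* \cap \beta_T$ from above and below using the fact that the subpaths of $\Pi^*$ from $s$ to $p$ and from $p$ to $t$ each have cost at most $\OPT \le d$, and then compare to the clearances defining $v_T^-$ and $v_T^+$.

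Concretely, I would fix $p \in \Pi^* \cap \beta_T$ and note that, since $\Pi^*$ passes through $p$, we have $\pi(s,p) \le \mu(\Pi^*[s,p]) \le \OPT \le d$ and similarly $\pi(p,t) \le d$. To get an upper bound on $\cl(p)$, apply Lemma~\ref{lem:cost_observations}(i): if $\cl(p) \ge \cl(s)$, then $\pi(s,p) \ge \ln(\cl(p)/\cl(s))$, which rearranges to $\cl(p) \le \cl(s)\exp(d)$; if $\cl(p) \le \cl(s)$, the bound $\cl(p) \le \cl(s) \le \cl(s)\exp(d)$ holds trivially. Symmetrically, comparing $p$ with $t$ gives $\cl(p) \ge \cl(t)/\exp(d)$ in both cases (when $\cl(p) \le \cl(t)$ it comes from Lemma~\ref{lem:cost_observations}(i); when $\cl(p) \ge \cl(t)$, just use $\cl(t) \ge \cl(t)/\exp(d)$).

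Finally, since $p \in \beta_T$, its clearance never exceeds $\cl(v_T)$ (the clearance along $\beta_T$ grows monotonically from the obstacle feature to the Voronoi vertex $v_T$). Combining,
\[ \cl(p) \in \bigl[\,\cl(t)/\exp(d),\ \min\{\cl(v_T),\cl(s)\exp(d)\}\,\bigr] \subseteq \bigl[\cl(v_T^-),\ \cl(v_T^+)\bigr], \]
where the outer inclusion uses $\cl(v_T^-) = \min\{\cl(v_T), \cl(t)/\exp(d)\} \le \cl(t)/\exp(d)$. Since clearance varies monotonically along $\beta_T$ and $\hat{\beta}_T = v_T^- v_T^+$ is precisely the portion of $\beta_T$ with clearance in $[\cl(v_T^-),\cl(v_T^+)]$, this forces $p \in \hat{\beta}_T$.

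There is no real obstacle here; the only mild nuisance is handling the case analysis in Lemma~\ref{lem:cost_observations}(i), which needs either $\cl(p) \le \cl(s)$ or $\cl(p) \ge \cl(s)$ to apply its one-sided inequality, and likewise for $t$. It is also worth observing (using the assumption $\cl(s) \le \cl(t)$ and Lemma~\ref{lem:cost_observations}(i) applied to $s,t$) that $d \ge \OPT \ge \ln(\cl(t)/\cl(s))$, which ensures $\cl(v_T^-) \le \cl(v_T^+)$ so that the claimed interval is non-degenerate and $\hat{\beta}_T$ is a well-defined subsegment.
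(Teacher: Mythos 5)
Your proof is correct and takes essentially the same approach as the paper: both bound the clearance of any point on $\Pi^*$ from above by $\cl(s)\exp(d)$ and from below by $\cl(t)/\exp(d)$ using Lemma~\ref{lem:cost_observations}(i) applied to the subpath costs, then observe that $\beta_T$'s clearance never exceeds $\cl(v_T)$. The only cosmetic difference is that the paper argues via the extremal (minimum/maximum clearance) point on $\Pi^*$, which sidesteps the explicit case split on whether $\cl(p)$ is above or below $\cl(s)$ (resp.\ $\cl(t)$), whereas you handle that case split directly for an arbitrary $p$.
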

\begin{proof}
Let $p_{\text{min}}$ be the point where~$\Pi^*$ attains the minimal clearance.
Clearly, $\pi(s,t) \geq \pi(s,p_{\text{min}}) + \pi(p_{\text{min}}, t)$. 
Using this observation together with 
Lemma~\ref{lem:cost_observations}(i) and the assumption that $\cl(s) \leq \cl(t)$, we conclude
that the clearance of any point on $\Pi^*$ is at least $\cl(t) / \exp(d^*) \geq \cl(t) / \exp(d)$.
A similar argument implies the clearance of any point on $\Pi^*$ is at most $\cl(s) \cdot \exp(d)$.
Hence, $\Pi^* \cap \beta_T \subseteq \hat{\beta}_T$.
\end{proof}

\begin{lemma}
\label{lem:constant-approximate}
For~$d \geq d^*$, graph~$G_2[d]$ contains an $s,t$-path of cost~$O(d)$.
\end{lemma}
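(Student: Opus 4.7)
My plan is to mimic the cell-by-cell deformation argument used in Lemma~\ref{lem:n-approximate}, but to use the sample points on $\beta_T$ to absorb the $\pi(p,s_T)$-term that forced the $O(\OPT)$ slack per cell in the previous proof. Since samples along $\hat\beta_T$ are spaced exactly $d/n$ apart in cost, snapping to a sample will only contribute $O(d/n)$ per cell, so that the total snapping cost is $O(d)$ and the overall cost remains $O(d)$.

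More precisely, starting from the optimal path $\Pi^*$, I will iterate over the cells $T\in\dvd$ that $\Pi^*$ enters and transform the portion $\Pi^*[p,q]$ inside $T$ (where $p$ is the first and $q$ the last point of $\Pi^*\cap T$) into a path that uses only edges of $G_2$. If both $p$ and $q$ lie off of $\beta_T$, I replace $\Pi^*[p,q]$ by the well-behaved path $\gamma(p,q)\subseteq\partial T$, paying a factor of at most $3$ by Lemma~\ref{lem:easy_well-behaved_cost}. If $p\in\beta_T$ (the case $q\in\beta_T$ is symmetric), Lemma~\ref{lem:sample_bounds} guarantees that $p\in\hat\beta_T$, so there is a sample point $p'\in W_T$ on $\beta_T$ adjacent to $p$ with $\pi(p,p')\le d/n$. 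I replace $\Pi^*[p,q]$ by $pp'\circ\gamma(p',q)$. Because $p'\in W_T$ and the anchor points $w_{\alpha_T}, w_{\kappa_T}$ are also in $W_T$, Lemma~\ref{lem:good_anchors} ensures $\gamma(p',q) = \lambda(p';w^*_{p'})\circ\gamma(\bar w^*_{p'},q)$ traverses only arcs in $E_T$ plus pieces of $\partial T$ between vertices of $V_2$, and hence lies in $G_2$.

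For the cost, Lemma~\ref{lem:hard_well-behaved_cost} together with the triangle inequality gives
\[
\mu(pp'\circ\gamma(p',q))
\le \pi(p,p')+11\,\pi(p',q)
\le \pi(p,p')+11\bigl(\pi(p',p)+\pi(p,q)\bigr)
\le 12\,\frac{d}{n}+11\,\pi(p,q).
\]
Summing over the at most $O(n)$ cells traversed by $\Pi^*$, the snapping contribution totals $O(n)\cdot O(d/n)=O(d)$, while the contribution from the $\pi(p,q)$ terms telescopes to at most $O(\OPT)\le O(d)$. The same bound (without the snapping term) handles cells where neither endpoint lies on $\beta_T$. This yields a path in $G_2[d]$ from $s$ to $t$ of cost $O(d)$.

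The only genuine obstacle is verifying that snapping $p$ to a neighboring sample $p'$ is always legitimate, i.e.\ that $p$ lies in the sampled region $\hat\beta_T$; this is exactly what Lemma~\ref{lem:sample_bounds} provides, using the hypothesis $d\ge\OPT$. The rest is bookkeeping: the snapping cost is controlled by the sample spacing $d/n$, and the well-behaved detour cost is controlled by Lemmas~\ref{lem:easy_well-behaved_cost} and~\ref{lem:hard_well-behaved_cost} together with Corollary~\ref{cor:clearance_bound}.
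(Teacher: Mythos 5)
Your proposal is correct and takes essentially the same approach as the paper: the paper also deforms $\Pi^*$ cell by cell as in Lemma~\ref{lem:n-approximate}, replaces $s_T$ with a nearby sample point $p'$ on $\hat\beta_T$ at cost at most $d/n$ (using Lemma~\ref{lem:sample_bounds} to guarantee such a $p'$ exists), obtains $\mu(\tilde{\Pi}_T) \le 11\pi(p,q) + O(d/n)$ per cell, and sums to get $O(d)$. The only difference is that you spell out the constant $12$ explicitly and verify membership of the replacement subpath in $G_2$ a bit more carefully via Lemma~\ref{lem:good_anchors}, which the paper leaves implicit.
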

\begin{proof}
  We deform the optimal path~$\Pi^*$ into a path $\tilde{\Pi}$ of~$G_2$ in the same way as in the
  proof of Lemma~\ref{lem:n-approximate} except for the following twist.
  As in Lemma~\ref{lem:n-approximate}, let $p$ (resp. $q$) be the first (resp. last) point on
  $\Pi^*$ in a cell $T$ of $\dvd$.
  If $p \in \beta_T$ and $q \notin \beta_T$, let $p'$ be a sample point on $\beta_T$ such that
  $\pi(p,p') \leq d/n$; the existence of $p'$ follows from Lemma~\ref{lem:sample_bounds}.
  We replace $\Pi^*[p,q]$ with $\tilde{\Pi}_T = p p' \circ \gamma(p',q)$, i.e., $p'$ replaces the role of
  $s_T$ in the proof of Lemma~\ref{lem:n-approximate}.
  Since $\pi(p,p') \leq d/n$, we have
  $$\mu(\tilde{\Pi}_T) \leq 11\pi(p,q) + O(d/n).$$
  Summing over all steps in the deformation of $\Pi^*$ and using the fact $d \geq \OPT =
  \mu(\Pi^*)$, we obtain $\mu(\tilde{\Pi}) = O(d)$.
  It is clear from the construction that $\tilde{\Pi}$ is a path in $G_2$.
\end{proof}


For our constant-factor approximation algorithm, we perform an exponential search over the values
of path costs.
Let~$\tilde{d} \leq cn\OPT$ be the cost of the path returned by the $O(n)$-approximation algorithm (Section~\ref{subsec:O(n)-apx}).
For each~$i$ from~$0$ to~$\lceil\log cn\rceil$, we choose $d_i = \tilde{d} / 2^i$.
We run the above procedure to construct a graph~$G_2[d_i]$ and compute a minimal-cost path~$\Pi_i$
in the graph.
Let $\Delta_i = \mu(\Pi_i)$.
We compute $k = \argmin_i \Delta_i$ and return $\Pi_k$.

Fix integer~$\hat{i}$ so~$\OPT \leq d_{\hat{i}} \leq 2\OPT$.
By Lemma~\ref{lem:constant-approximate}, we have
$$ \Delta_k \leq \Delta_{\hat{i}} = O(d_{\hat{i}}) = O(\OPT).$$

%
%
%

\begin{theorem}
\label{thm:constant}
	Let~$\calO$ be a set of polygonal obstacles in the plane, and let~$s,t$ be two points outside~$\calO$.
There exists an~$O(n^2\log n)$ time $O(1)$-approximation algorithm for computing the minimal-cost path between~$s$ and~$t$.
\end{theorem}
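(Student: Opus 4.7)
The plan is to combine the $O(n)$-approximation of Theorem~\ref{thm:easy_approx} with an exponential search driven by the graphs $G_2[d]$ and Lemma~\ref{lem:constant-approximate}, exactly as outlined in the paragraphs preceding the theorem statement. First, I would run the algorithm of Section~\ref{subsec:O(n)-apx} in time $O(n\log n)$ to obtain a value $\tilde{d}$ satisfying $\OPT \leq \tilde{d} \leq cn\cdot\OPT$ for some absolute constant $c>0$. Then, for each integer $i$ with $0 \leq i \leq \lceil \log(cn)\rceil$, I set $d_i = \tilde{d}/2^i$, construct the graph $G_2[d_i]$, compute a minimum-cost $s,t$-path $\Pi_i$ in it, and finally return the least-costly $\Pi_k$ among the $O(\log n)$ candidates.

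For correctness, note that $d_0 = \tilde{d} \geq \OPT$ while $d_{\lceil \log(cn)\rceil} \leq \tilde{d}/(cn) \leq \OPT$, so there exists some index $\hat{i}$ with $\OPT \leq d_{\hat{i}} \leq 2\OPT$. Lemma~\ref{lem:constant-approximate} then guarantees that $G_2[d_{\hat{i}}]$ contains an $s,t$-path of cost $O(d_{\hat{i}}) = O(\OPT)$, hence $\mu(\Pi_{\hat{i}}) = O(\OPT)$, and since $\mu(\Pi_k) \leq \mu(\Pi_{\hat{i}})$ the returned path is an $O(1)$-approximation. No additional geometric argument is required beyond what Lemma~\ref{lem:constant-approximate} already provides.

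The part that carries the running-time bound is the planarity of each $G_2[d_i]$: every constant-clearance arc $\eta_w$ added to a cell $T$ of $\dvd$ lies entirely within $T$, so no new crossings are introduced beyond those already present in the planar subdivision $\dvd$. Consequently each $G_2[d_i]$ has $O(n^2)$ vertices and edges, can be constructed in $O(n^2)$ time, and admits a single-source shortest-path computation in $O(n^2)$ time using the planar-graph algorithm of Henzinger et al.~\cite{HKRS97}. Multiplying by the $O(\log n)$ iterations of the exponential search yields the claimed $O(n^2 \log n)$ total running time.

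The main obstacle, to the extent that there is one, is making sure that the various bookkeeping pieces line up cleanly: that $\tilde{d}$ indeed brackets $\OPT$ within a factor of two on some iteration (which follows from the $O(n)$-approximation guarantee together with a geometric sequence of estimates), and that the sampling scheme for $G_2[d]$ preserves planarity so that the $O(n^2)$ planar shortest-path algorithm is applicable. Both of these are direct consequences of the preceding lemmas, so the proof essentially reduces to stringing them together and bounding the number of iterations.
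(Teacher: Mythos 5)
Your proposal matches the paper's argument essentially verbatim: obtain $\tilde{d}$ from the $O(n)$-approximation, run an exponential search over $d_i = \tilde{d}/2^i$ for $O(\log n)$ values of $i$, appeal to Lemma~\ref{lem:constant-approximate} for the iteration with $\OPT \leq d_{\hat{i}} \leq 2\OPT$, and use planarity of $G_2$ with the Henzinger~\etal algorithm to bound each iteration by $O(n^2)$. No substantive differences from the paper.
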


\subsection{Computing the final approximation}
\label{subsec:ptas}
Finally, let~$d$ be the estimate returned by our constant factor approximation algorithm so
that~$\OPT \leq d \leq c\OPT$ for some constant~$c$.
We construct a graph~$G_3 = (V_3, E_3)$ by sampling $O(n/\eps)$ points along each edge of $\dvd$
and connecting (a certain choice of) $O(\frac{n}{\eps^2} \log \frac{n}{\eps})$ pairs of sample
points on the boundary of each cell of $\dvd$ by ``locally optimal'' paths.
We guarantee $|V_3| = O(\frac{n^2}{\eps})$ and $|E_3| = O(\frac{n^2}{\eps^2}\log \frac{n}{\eps})$.
We compute and return, in $O(\frac{n^2}{\eps^2} \log \frac{n}{\eps})$ time, a minimal-cost path in
$G_3$~\cite{FT87}.

\noindent
\paragraph{Vertices of $G_3$.}

Let $\underline{c} = \cl(t) / \exp(d)$ and $\overline{c} = \cl(s) \cdot \exp(d)$.
Let $T$ be a cell of $\dvd$.
For each edge of $T$, we mark at most two connected portions, each of cost $O(d)$.
We refer to each marked portion as an \emph{edgelet}.
We sample points on each edgelet so that two consecutive samples lie at cost $\eps d/n$ apart;
endpoints of each edgelet are always included in the sample.
The total number of samples places on $\partial T$ is $O(n/\eps)$.
We now describe the edgelets of $T$.

Let $u_T^-, u_T^+$ be points on $\alpha_T$ of clearance $\min \set{\cl(u_T), \underline{c}}$ and
$\min \set{\cl(u_T), \overline{c}}$, respectively.
Similarly, let $v_T^-,v_T^+$ be points on $\beta_T$ of clearance $\min \set{\cl(v_T),
\underline{c}}$ and $\min\set{\cl(v_T), \overline{c}}$, respectively.
The edgelets on $\alpha_T$ and $\beta_T$ are the segments $u_T^- u_T^+$ and $v_T^- v_T^+$,
respectively.
Next, we mark (at most) two edgelets on $\kappa_T$:
If $\mu(\kappa_T) \leq 2d$, then the entirety of $\kappa_T$ is a single edgelet;
otherwise, let $u'_T \in \kappa_T$ be the point such that $\mu(\kappa_T[u_T, u'_T]) = 2d$.
Let $v'_T \in \kappa_T$ be the point of clearance $\cl(v_T^+)$.
If $\mu(\kappa_T[u'_T,v'_T]) \leq 4d$, then $\kappa_T[u_T,v'_T]$ is the only edgelet on $\kappa_T$.
Otherwise, let $v''_T \in \kappa_T$ be the point such that $\cl(v''_T) \leq \cl(v'_T)$ and
$\mu(\kappa_T[v'_T, v''_T]) = 4d$; $\kappa_T$ has two edgelets $\kappa_T[u_T, u'_T]$ and
$\kappa_T[v'_T, v''_T]$.
See Figure~\ref{fig:samples}.
We repeat this procedure for all cells of $T$.
Set $V_3$ is the set of all samples placed on the edges of $\dvd$.
We have $|V_3| = O(n^2 / \eps)$.

\noindent
\paragraph{The edges of~$G_3$.}
Let~$T$ be a cell of~$\dvd$ incident to obstacle feature~$o$.
We say two points~$p$ and~$q$ in~$T$ are \emph{locally reachable} from one another if
the minimal-cost path from~$p$ to~$q$ relative only to~$o$ lies within~$T$.
Equivalently, the minimal-cost path relative to~$o$ is equal to the minimal-cost path relative
to~$\calO$.

Let~$p \in \partial T$ be a sample point.
We compute a subset $S(p) \subseteq V_3$ of candidate neighbors of $p$ in~$G_3$.
Let $N(p) \subseteq S(p)$ be the subset of these points that are locally reachable from $p$.
We connect $p$ to each point $q \in N(p)$ by an edge in $E_3$ of cost $\pi(p,q)$.
By definition, the minimal-cost path between $p$ and $q$ lies inside $T$.
Finally, as in $G_1$ and $G_2$, we add the portion of each edge of $\dvd$ between two sample
points as an edge of $E_3$.

\begin{figure}[t]
  \centering
   	\includegraphics[scale=.34]{./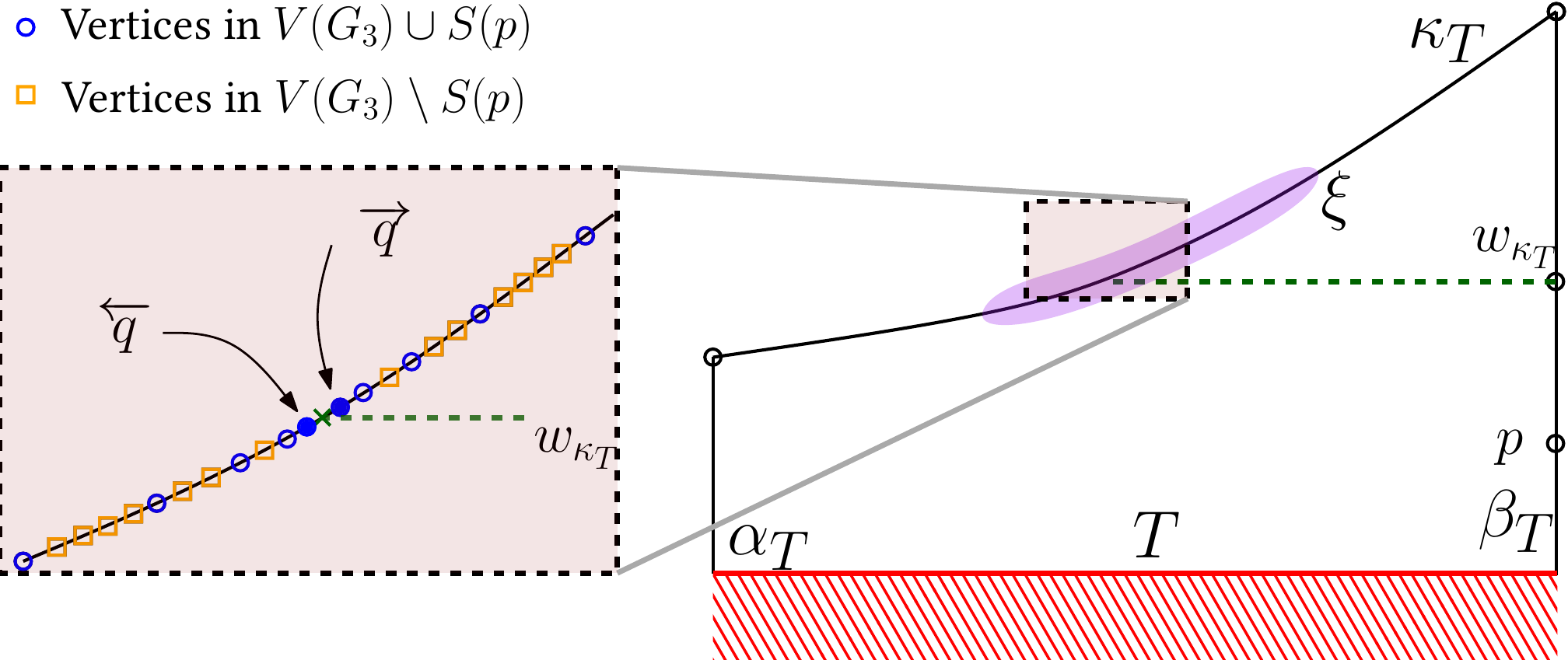}
	\caption{\sf
				Vertices in $S(p)$ which are used to construct the set of edges of~$G_3$.}
  \label{fig:edges}
\end{figure}
We now describe how we construct $S(p)$.
Let $\xi$ be an edgelet of $\partial T$ such that $p$ and $\xi$ do not lie on the same edge of
$T$.
We first define a \emph{shadow point} $\breve{p}$ of $p$.
If $p \in \alpha_T \cup \kappa_T$, then $\breve{p} = p$.
If $p \in \beta_T$, and $\xi \in \kappa_T$ (resp. $\xi \in \alpha_T$), then $\breve{p} = p$ if
$\cl(p) \geq \cl(w_{\kappa_T})$ (resp. $\cl(p) \geq \cl(w_{\alpha_T})$), and $\breve{p} =
w_{\kappa_T}$ (resp. $w_{\alpha_T}$) otherwise.
Let $\overleftarrow{q}$ (resp. $\overrightarrow{q}$) be the sample point on $\xi$ of highest
(resp. lowest) clearance less (resp. more) than $\breve{p}$, if such a point exists.
Exactly one of $\overleftarrow{q}$ or $\overrightarrow{q}$ may not exist if no point of clearance
$\breve{p}$ exists on $\xi$;
in this case, the construction implies that $\overleftarrow{q}$ is the endpoint of $\xi$ of higher
clearance or $\overrightarrow{q}$ is the endpoint of $\xi$ of lower clearance.
If $\overleftarrow{q}$ exists, we add $\overleftarrow{q}$ to $S(p)$.
We iteratively walk along sample points of $\xi$ in decreasing order of clearance starting with
the first sample point encountered after $\overleftarrow{q}$.
For each non-negative integer $i$, we add the point $q_i$ encountered at step
$\lfloor(1+\eps)^i\rfloor$ of the walk until we reach an endpoint of $\xi$.
Similarly, if $\overrightarrow{q}$ exists, we add to $S(p)$ the point $\overrightarrow{q}$ and
perform the walk along points of \emph{greater} clearance.
See Figure~\ref{fig:edges}.
Finally, we add the two endpoints of $\xi$ to $S(p)$.
We repeat this step for all edgelets on $\partial T$.
Since $\xi$ has $O(n/\eps)$ sample points and $\partial T$ has at most four edgelets, $|S(p)| =
O((1/\eps) \log (n / \eps))$, and $S(p)$ can be constructed in $O(|S(p)|)$ time.

\paragraph{Analysis.}
It is clear from the construction that $|V_3| = O(\frac{n^2}{\eps})$, $|E_3| =
O(\frac{n^2}{\eps^2} \log \frac{n}{\eps})$, and that $G_3$ can be constructed in time
$O(\frac{n^2}{\eps^2} \log \frac{n}{\eps})$.
By using Dijkstra's algorithm with Fibonacci heaps~\cite{FT87}, a minimal-cost path in~$G_3$ can
be computed in $O(\frac{n^2}{\eps^2} \log \frac{n}{\eps})$ time.
So it remains to prove that the algorithm returns a path of cost at most $(1 + O(\eps))\pi(s,t)$.
By rescaling $\eps$, we can thus compute a path from $s$ to $t$ of cost at most $(1+\eps)\pi(s,t)$
in time $O(\frac{n^2}{\eps^2} \log \frac{n}{\eps})$.

\begin{lemma}
  \label{lem:sample_bounds_ptas}
  Let~$\Pi^*$ be a minimal-cost path from $s$ to $t$.
  For every edge $e \in \dvd$, $\Pi^* \cap e$ lies inside the marked portion of $e$.
\end{lemma}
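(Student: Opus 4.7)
The plan is to first establish a global clearance bound on $\Pi^*$ and then exploit the monotone-clearance structure of the edges of $\dvd$ to confine the intersections with each edge to its marked portion. The global bound is that every point $p \in \Pi^*$ satisfies $\underline{c} \leq \cl(p) \leq \overline{c}$, with $\underline{c} = \cl(t)/\exp(d)$ and $\overline{c} = \cl(s)\exp(d)$; this follows exactly as in Lemma~\ref{lem:sample_bounds}. For the lower bound, if $\cl(p) < \cl(t)$ then Lemma~\ref{lem:cost_observations}(i) applied to the optimal sub-path $\Pi^*[p,t]$ gives $\ln(\cl(t)/\cl(p)) \leq \pi(p,t) \leq \pi(s,t) \leq d$; the upper bound is symmetric using $\Pi^*[s,p]$ together with $\cl(s) \leq \cl(t)$.

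Along the internal edge $\alpha_T$ (resp.\ $\beta_T$), clearance increases monotonically from $0$ at the obstacle $o$ to $\cl(u_T)$ (resp.\ $\cl(v_T)$) at the far endpoint. By construction, the marked portion $u_T^- u_T^+$ (resp.\ $v_T^- v_T^+$) is exactly the sub-segment whose clearance lies in $[\underline{c},\overline{c}]$ intersected with the clearance range of the edge, so the global bound immediately places $\Pi^* \cap \alpha_T$ and $\Pi^* \cap \beta_T$ inside them. Along $\kappa_T$ clearance is also monotone (from $u_T$ to $v_T$), so the global bound confines $\Pi^* \cap \kappa_T$ to the sub-arc $\kappa_T[u_T, v'_T]$ with $\cl(v'_T) = \min\{\cl(v_T), \overline{c}\}$. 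If $\mu(\kappa_T) \leq 2d$, or if $\mu(\kappa_T) > 2d$ but $\mu(\kappa_T[u'_T, v'_T]) \leq 4d$, the single marked edgelet already covers $\kappa_T[u_T, v'_T]$ and the conclusion follows.

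The main obstacle is the remaining subcase, $\mu(\kappa_T[u'_T, v'_T]) > 4d$, in which the gap $(u'_T, v''_T)$ is unmarked and I must rule out $\Pi^* \cap (u'_T, v''_T)$. Suppose for contradiction $p \in \Pi^* \cap \kappa_T$ lies strictly inside this gap; by property~(P4) and the definitions, $\pi(u_T, p) = \pi_{\kappa_T}(u_T, p) > 2d$ and $\pi(p, v'_T) = \pi_{\kappa_T}(p, v'_T) > 4d$. Combined with $\pi(s,p), \pi(p,t) \leq d$, the triangle inequality yields $\pi(s, u_T) > d$ and $\pi(t, v'_T) > 3d$, and the four-point inequality $\pi(u_T, v'_T) \leq \pi(u_T, s) + \pi(s, t) + \pi(t, v'_T)$ gives $\pi(u_T, s) + \pi(t, v'_T) > 5d$. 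These bare inequalities do not yet contradict $\pi(s,t) \leq d$, so to close the argument I plan to invoke the structural property~(P5) --- that $\Pi^*$ decomposes into circular arcs, log-spirals, line segments, and Voronoi-edge pieces --- together with Lemma~\ref{lem:cost_observations}(i) applied to the specific sub-paths of $\Pi^*$ that must climb to the high-clearance level near $v'_T$ and back down to $\cl(p)$ (and symmetrically on the $u_T$ side); the clearance drops required by these sub-paths should contribute the extra cost needed to exceed the budget~$d$.
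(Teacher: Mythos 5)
Your global clearance bound, your treatment of the internal edges $\alpha_T,\beta_T$, and your two easy $\kappa_T$ subcases are all correct and match the paper (which simply cites Lemma~\ref{lem:sample_bounds} for internal edges). The gap is exactly where you flag it: the plan you sketch for ruling out $\Pi^*\cap\kappa_T$ from the unmarked arc between $u'_T$ and $v''_T$ would not go through. Your four inequalities $\pi(u_T,q)>2d$, $\pi(q,v'_T)>4d$, $\pi(s,u_T)>d$, $\pi(t,v'_T)>3d$ are all lower bounds and are mutually consistent with $\pi(s,t)\leq d$, so no rearrangement of them yields a contradiction. The fallback of invoking (P5) together with Lemma~\ref{lem:cost_observations}(i) to extract cost from ``clearance drops'' also cannot work, because the cost accumulated along a Voronoi edge is not controlled by the clearance ratio: for example, on the Voronoi bisector of two nearly parallel obstacle edges the clearance is essentially constant while $\mu(\kappa_T)$ can be made arbitrarily large, so $\pi(u_T,q)>2d$ is entirely compatible with $\cl(q)/\cl(u_T)$ being close to $1$, and Lemma~\ref{lem:cost_observations}(i) gives nothing.

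The missing observation is structural: by the construction of $\dvd$, both $s$ and $t$ lie on internal edges (the segments added from the obstacle features through $s$ and $t$), so $s$ lies outside $\interior(T\cup T')$, where $T,T'$ are the two cells sharing the external edge $e$. Hence $\Pi^*[s,q]$ must cross some internal edge of $T$ or $T'$ at a point $p$, and this crossing is what closes the argument. If $p\in\alpha_T$, then $\cl(p)\leq\cl(u_T)\leq\cl(q)$, so Corollary~\ref{cor:clearance_bound} gives $\pi(p,u_T)\leq\pi(p,q)\leq\OPT\leq d$ and the triangle inequality gives $\pi(u_T,q)\leq 2d$, placing $q$ in $\kappa_T[u_T,u'_T]$. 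If $p\in\beta_T$, then the global clearance bound together with~\eqref{eq:length_away} gives $\pi(v^+_T,p)\leq 2d$, the bound $\mu(\eta_{v^+_T})\leq\OPT$ follows as in the proof of Lemma~\ref{lem:hard_well-behaved_cost}, and the triangle inequality yields $\pi(v'_T,q)\leq 4d$, placing $q$ in $\kappa_T[v''_T,v'_T]$. Either way $q$ is in a marked edgelet, contradicting the assumption. I recommend replacing your final paragraph with this crossing-point argument.
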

\begin{proof}
  Fix an edge $e \in \dvd$, and let $q \in \Pi^* \cap e$.
  We aim to prove $q$ lies inside the marked portion of $e$.
  Recall, $d \geq \OPT$.
  The proof of Lemma~\ref{lem:sample_bounds} already handles the case of $e$ being an internal
  edge.

  Now, suppose $e$ is an external edge.
  We assume $\mu(e) > 2d$; otherwise, the proof is trivial.
  We have $e \in \kappa_T$ and $e \in \kappa_{T'}$ for two adjacent Voronoi cells $T$ and $T'$.
  By construction, point $s$ lies outside the interior of $T \cup T'$.
  Therefore,~$\Pi^*[s,q]$ intersects at least one internal edge incident to~$T$ or~$T'$ at some
  point~$p$.
  \WLOG, assume that internal edge belongs to~$T$.
  We have two cases.
  \begin{enumerate}[align=left]
    \item[Case 1: $p \in \alpha_T$.]
      Since $\cl(p) \leq \cl(u_T) \leq \cl(q)$, we have $\pi(p, u_T) \leq \OPT$.
      By triangle inequality,
      $$ \pi(u_T, q) \leq \pi(u_T, p) + \pi(p, q) \leq 2\OPT \leq 2d. $$
    \item[Case 2: $p \in \beta_T$.]
      By~\eqref{eq:length_away}, $\pi(v^+_T, p) \leq 2d$.
      Recall from the proof of Lemma~\ref{lem:sample_bounds} that the clearance of any point
      on~$\Pi^*$ is at most $\cl(s) \cdot \exp(d)$.
      We defined $\eta_{v^+_T}$ as the line segment or circular arc with $v^+_T$ as one of its
      endpoints; $v'_T$ is its other endpoint.
      One can easily verify $\mu(\eta_{v^+_T}) \leq \OPT$; see the proof of
      Lemma~\ref{lem:hard_well-behaved_cost}.
      By the triangle inequality,
      $$ \pi(v'_T, q) \leq \mu(\eta_{v^+_T}) + \pi(v^+_T, p) + \pi(p, q) \leq 2\OPT + 2d \leq 4d.
      $$
  \end{enumerate}
\end{proof}

Now, we prove a property of locally reachable points from a fixed point which will be crucial for
our analysis.

\begin{figure}[t]
  \centering
   	\includegraphics[scale=.34]{}
    \caption{Case 1 of proof of Lemma~\ref{lem:visibility}:
   			a cell of a point obstacle in~\dvd and the optimal path between 
   			points $a$ and $b$ (blue) in the original (left) and transformed plane
        (right), respectively.}
    
   	\label{fig:transformed}
\end{figure}
\begin{figure}
  \centering
  \subfloat
   [\sf $p \in \alpha_T$]
   {
   	\includegraphics[scale=.34]{}
   	\label{fig:lem_46_case_1a}
   }
  \subfloat
   [\sf $p \in \kappa_T$]
   {
   	\includegraphics[scale=.34]{}
   	\label{fig:lem_46_case_1b}
   }
   \caption{\sf Case 1 of proof of Lemma~\ref{lem:visibility}:
   			the set of locally reachable points (purple) from the point $p$ in
   			the transformed plane; the solid part of $\ell_p$ is $g_p$.}
  \label{fig:vis_case1}
\end{figure}

\begin{lemma}
\label{lem:visibility}
Let~$T$ be a cell of~$\dvd$, and let~$p \in \partial T$.
For every edge $e$ of $T$, the set of points on~$e$ locally reachable from~$p$, if non-empty, is a
connected portion of $e$ and contains an endpoint of $e$.
\end{lemma}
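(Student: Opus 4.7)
The plan is to apply an obstacle-specific transformation that sends the $o$-optimal paths to Euclidean line segments, turning the statement into a visibility question inside a planar region whose curved side is strictly convex. In Case~1, where $o$ is a vertex, I would place $o$ at the origin and use log-polar coordinates $\phi(r e^{i\theta}) = (\theta, \ln r)$. By (P1) the $o$-optimal paths are logarithmic spirals about $o$, which $\phi$ sends to straight segments; the internal edges $\alpha_T, \beta_T$ (rays from $o$) become vertical segments; and a direct computation---essentially the same derivative calculations already used in Case~1 of Lemma~\ref{lem:good_anchors}---shows that the image $\kappa'_T$ of the external edge $\kappa_T$ is the graph of a strictly convex function of $\theta$. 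Hence $T' := \phi(T)$ is bounded above by a convex-up arc and on the sides by vertical segments, and ``$q$ is locally reachable from $p$'' becomes ``the Euclidean segment $p'q'$ lies in $\clo(T')$''. Case~2 ($o$ is an edge) reduces to the same picture via the projective change of variables suggested by (P2), which straightens the $o$-optimal paths (now circular arcs centred on $o$) while again sending $T$ to a region with two straight sides and one strictly convex-up arc $\kappa'_T$.

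Inside the transformed cell I would characterise visibility through a single support line of $\kappa'_T$: let $\ell_p$ be the tangent from $p'$ to $\kappa'_T$ on the side of the target edge $e$, or, if $p' \in \kappa'_T$, the tangent to $\kappa'_T$ at $p'$. Strict convexity of $\kappa'_T$ gives the standard dichotomy: for $q' \in T'$, the segment $p'q'$ lies in $\clo(T')$ iff $q'$ lies in the closed half-plane bounded by $\ell_p$ on the interior side of $T'$. Intersecting this half-plane with any edge $e'$ of $T'$ gives a single (possibly empty) subinterval, which yields connectedness. For the endpoint inclusion I would argue that a specific endpoint of $e$ is always reachable: when $e$ is an internal edge, its low-clearance endpoint is arbitrarily deep in the visible half-plane (in log-polar it corresponds to $\ln r \to -\infty$); and when $e = \kappa_T$, the endpoint of $e$ shared with the edge of $\partial T$ containing $p$ is reachable by travelling along that edge itself.

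The main obstacle I expect is the case analysis: the support-line description has to be instantiated for each combination of the edge of $\partial T$ containing $p$ and the target edge $e$, and the strict convexity of $\kappa'_T$ has to be verified directly in Case~2. The only genuinely subtle subcase is $p \in \kappa_T$ with $e = \kappa_T$: because any chord of a strictly convex-up curve lies strictly above the curve, no point of $\kappa_T$ other than $p$ is locally reachable from $p$, so the reachable set is trivial and the lemma's conclusion is vacuous. All remaining subcases then follow directly from the half-plane description above.
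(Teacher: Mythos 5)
Your Case~1 coincides with the paper's: it too uses the log--polar map $(r,\theta)\mapsto(\theta,\ln r)$, observes that $\alpha_T^*,\beta_T^*$ become vertical rays and that $\kappa_T^*$ is convex, and then reduces local reachability to ordinary visibility governed by a single tangent line to $\kappa_T^*$. The divergence is in Case~2, and that is where I see a genuine gap. You posit a ``projective change of variables suggested by (P2)'' that simultaneously straightens all circular arcs centred on the obstacle line and sends $\kappa_T$ to a strictly convex curve. Two problems. First, no projective transformation of $\R^2$ sends every circle centred on a fixed line to a straight line; the map that does this is the upper-half-plane--to--Klein-disk map of hyperbolic geometry (the cost $\int ds/\mathrm{clr}$ with an edge obstacle is hyperbolic length), which is the composition of a M\"obius transformation and a radial rescaling and is emphatically not projective. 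Second, even granting that map, convexity of the image of $\kappa_T$ is far from automatic: $\kappa_T$ is a line segment or a parabolic arc, and under the Cayley transform a line becomes a circular arc and a parabola becomes a quartic, after which the radial Poincar\'e-to-Klein map distorts further. You flag that this ``has to be verified directly,'' but that verification is the entire content of Case~2 and I do not think it is routine. The paper explicitly declines this route (``there is no equally convenient notion of the transformed plane for edge feature~$o$'') and instead argues intrinsically: it fixes $p$, considers the one-parameter pencil $\calC_p$ of circles through $p$ centred on the obstacle line (these are exactly the $o$-optimal paths through $p$), establishes four elementary monotonicity/tangency properties of that pencil, and reads off the connected-interval-with-endpoint structure directly from how the pencil sweeps across $\alpha_T$, $\beta_T$, and the convex arc $\kappa_T$. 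That avoids having to control the image of $\kappa_T$ under any global change of variables. A secondary, smaller issue: your endpoint argument (``the low-clearance endpoint is arbitrarily deep in the visible half-plane'') is stated only in the limit $\ln r\to-\infty$ and does not track, as the paper does in its Case~2 enumeration, which endpoint is captured as a function of which edge carries $p$; this should be spelled out rather than left to the log-polar intuition, which you no longer have available in Case~2 under your own plan.
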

\begin{proof}
Let $o$ be the feature of $\calO$ associated with $T$.
We consider two main cases.
\begin{enumerate}[wide,itemindent=0pt]
  \item[Case 1: $o$ is a vertex.]
\WLOG,~$o$ lies at the origin, edge~$\alpha_T$ intersects the line~$y = 0$ at
the origin with angle~$\theta_{\alpha}$,
edge~$\beta_T$ intersects the line~$y = 0$ at the origin with angle~$\theta_{\beta}$,
and~$\theta_{\beta} > \theta_{\alpha} \geq 0$.
We consider a map $f : \R^2 \rightarrow \R^2 $ taking points to what we refer to as the
  \emph{transformed plane}. 
Given in polar coordinates point $(r, \theta)$, the map $f$ is defined as $f(r, \theta) = (\theta, \ln r)$.
For a point $x \in \R^2$, let $x^* = f(x)$, and for a point set $X \subseteq \R^2$, let $X^* =
\{f(x) | x \in X\}$.
Both~$\alpha_T$ and~$\beta_T$ become vertical rays in the transformed plane going to~$-\infty$.
Further, it is straightforward to show that~$\kappa_T$ becomes a convex curve in the transformed plane
when restricted to values of~$\theta$ such that~$\theta_{\alpha} \leq \theta \leq \theta_{\beta}$.
Therefore, $T^*$ is a semi-bounded pseduo-trapezoid.
By (P1) in Section~\ref{sec:alg_back} (see also~\cite{WBH08}), the minimal-cost path with respect
to $o$ between two points $a,b \in T$ maps to the line segment $a^* b^*$.
So $a$ and $b$ are locally reachable if $a^* b^* \subseteq T^*$, i.e., $a^*$ and $b^*$ are visible
from each other (see Figure~\ref{fig:vis_case1}).

For a point $p^* \in (\partial T)^*$, let $V_p^* \subseteq T^*$ be the set of points of $T^*$
visible from $p^*$, $\ell_p$ the line tangent to $\kappa_T^*$ from $p^*$ (if it exists), and
$\zeta_p = \ell_p \cap \kappa_T^*$.
Note that $\ell_p$ is well defined, because either~$p^* \in \kappa^*_T$ or the $x$-monotone convex
curve $\kappa^*_T$ either lies to the left or to the right of $p^*$.
The closure of $\partial V_p^* \setminus (\partial T)^*$ consists of a line segment $g_p = a^* b^*
\subset \ell_p$.
If $p \notin \kappa_T$, then $\zeta_p$ is one endpoint of $g_p$ and the other endpoint lies on
$\alpha_T^*$ or $\beta_T^*$.
In either case, for any edge $e \in \partial T$, if $(e^* \setminus \{p^*\}) \cap V_p^* \neq
\emptyset$, then it is a connected arc and contains one of the endpoints of $e^*$, as claimed.

\begin{figure*}[t]
  \centering
  \subfloat
   [\sf]
   { 
   	\includegraphics[scale=.6]{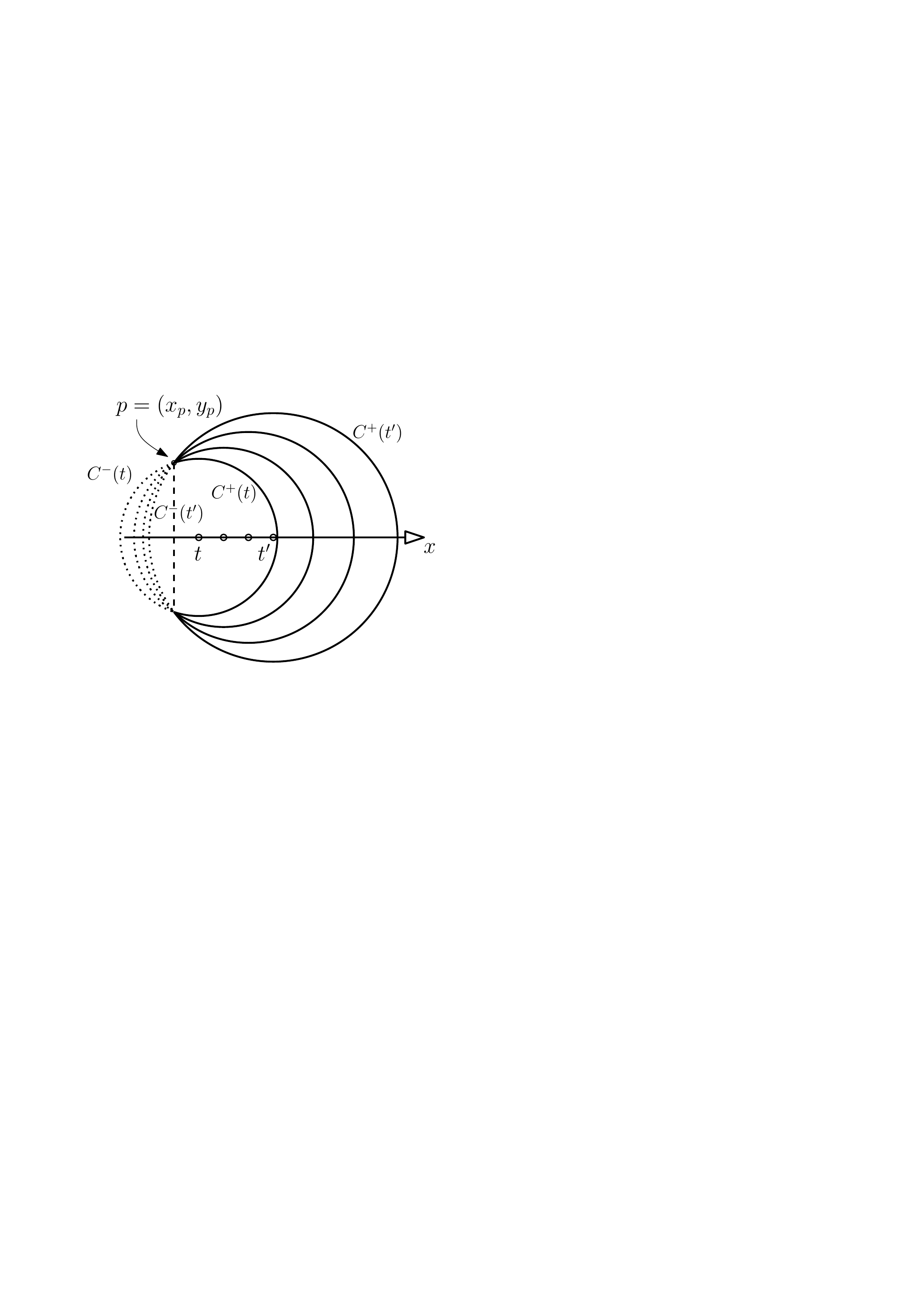}
   	\label{fig:Lem8_case2a}
   }
   \hspace{8mm}
  \subfloat
   [\sf]
   {
   	\includegraphics[scale=.34]{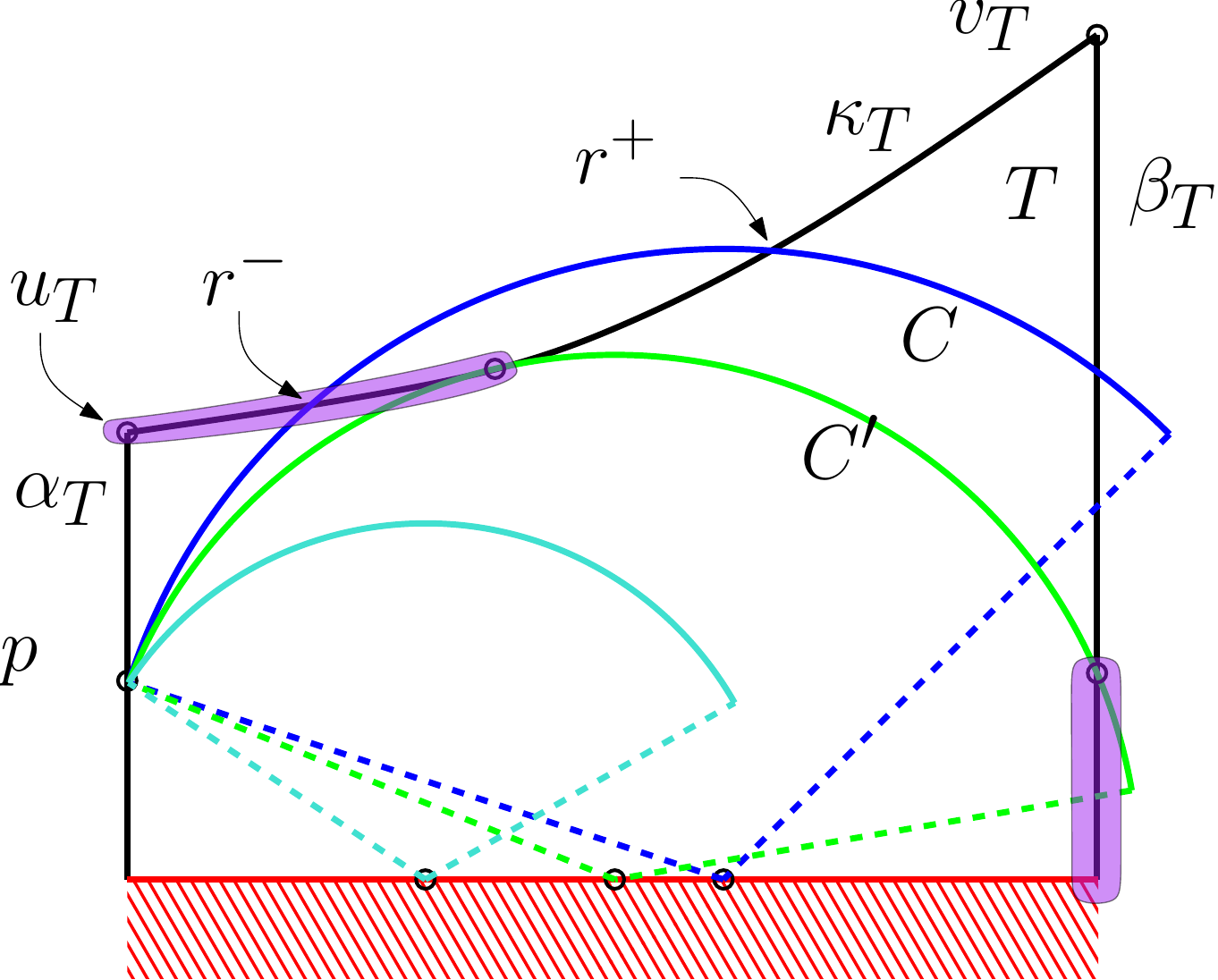}
   	\label{fig:Lem8_case2c}
   }
   \caption{\sf Illustration of properties of $\calC_p$ in Case 2 of proof of Lemma~\ref{lem:visibility}.}
\end{figure*}

\item[Case 2: $o$ is an edge.]
\WLOG, $o$ lies on the line~$y = 0$, the edge~$\alpha_T$ lies on the line~$x =
x_{\alpha}$, the edge~$\beta_T$ lies on the line~$x = x_{\beta}$, and~$x_{\beta} > x_{\alpha} \geq
0$.
There is no equally convenient notion of the transformed plane for edge feature~$o$, but we are
still able to use similar arguments to those given in Case 1.

In this case, for two points $a,b \in T$, the minimal-cost path with respect to $o$ from $a$ to
$b$ is the circular arc with $a$ and $b$ as its endpoints and centered at the $x$-axis (see (P2)
in Section~\ref{sec:alg_back}).
Therefore, $a$ and $b$ are locally reachable if this circular arc does not cross $\kappa_T$.

Fix a point $p = (x_p, y_p) \in \partial T$.
If $p \in \alpha_t \cup \beta_T$, then all points on the edge of $T$ containing $p$ are locally
reachable, and if $p \in \kappa_T$ then no point on $\kappa_T \setminus \{p\}$ is locally
reachable from $p$.
So we will focus on edges of $T$ that do not contain $p$.

Let $\calC_p$ denote the one-parameter family of circles that pass through $p$ and that are
centered at the $x$-axis.
For any $q \in T \setminus \{p\}$, there is a unique circle $C_q \in \calC_p$ that passes through
$q$.
We parameterize the circles in $\calC_p$ with the $x$-coordinate of its center, i.e., $C(t) \in
\calC_p$ for $t \in (-\infty, \infty)$ and is centered at $(t, 0)$.
Let $C^+(t)$ (resp. $C^-(t)$) be the circular arc of $C(t)$ lying to the right (resp. left) of the
line $x = x_p$.
See Figure~\subref*{fig:Lem8_case2a}.
The following properties of $\calC_p$ are easily verified:
\begin{enumerate}[label=(\alph*)]
  \item
    For $t < t'$, $C^+(t)$ (resp. $C^-(t')$) lies in the interior of $C(t')$ (resp. $C(t)$); see
    Figure~\subref*{fig:Lem8_case2a}.
  \item
    If a circle $C \in \calC_p$ intersects $\kappa_T$ at two points, say, $r^-$ and $r^+$,
    then there is another circle $C' \in \calC_p$ that is tangent to $\kappa_T$ between $r^-$
    and $r^+$; see Figure~\subref*{fig:Lem8_case2c}.
  \item
    A circle in $\calC_p$ intersects $\alpha_T$ or $\beta_T$ in at most one point.
  \item
    There is at most one circle $C \in \calC_p$ that is tangent to $\kappa_T$.
\end{enumerate}

Properties (a) and (b) are straightforward; (c) follows from (a) and a continuity argument; (d)
follows from (a), (c), and the convexity of $\kappa_T$.

If there is no circle in $\calC_p$ that is tangent to $\kappa_T$ then for any point $q \in T$, the
arc $C_q[p,q]$ lies inside $T$, so every point in $T$ is locally reachable, and the
lemma follows.

Next, assume there is a circle $C_0 \in \calC_p$ that is tangent to $\kappa_T$ at a point $r_p$.
By (d), $C_0$ is the only such circle.
There are three cases:
\begin{enumerate}[label=(\roman*)]
  \item
    If $p \notin \alpha_T$, then points in $\text{int}(C) \cap \alpha_T$ are locally reachable from
    $p$ by property (a).
  \item
    Similar, if $p \notin \beta_T$, then the points in $\text{int}(C) \cap \beta_T$ are locally
    reachable, again by property (a).
  \item
    If $p \in \alpha_T$ (resp. $p \in \beta_T$), then the points in $\kappa_T[u_t, r_p]$ (resp.
    $\kappa_T[r_p, v_T]$) are locally reachable from $p$ by properties (c) and (d).
\end{enumerate}

Hence, in each case at most one connected portion of an edge $e$ of $T$ is locally reachable from
$p$, and it contains one endpoint of $e$.
\end{enumerate}
\end{proof}

\begin{lemma}
\label{lem:eps-approximate}
Graph~$G_3$ contains an $s,t$-path of cost at 
most~$(1 + O(\eps))\OPT$.
\end{lemma}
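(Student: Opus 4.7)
The plan is to deform the optimal path $\Pi^*$ into a path $\tilde{\Pi}$ lying in $G_3$ of cost at most $(1 + O(\eps))\OPT$; rescaling $\eps$ by a constant factor then yields the lemma. By Lemma~\ref{lem:sample_bounds_ptas}, every crossing of $\Pi^*$ with a $\dvd$-edge lies inside an edgelet, so the snapping operation below is well defined. I decompose $\Pi^*$ into its maximal subpaths $\Pi^*[p,q] \subseteq T$ through cells $T \in \dvd$, with $p, q \in \partial T$ on edgelets. Since consecutive samples on an edgelet are exactly $\eps d/n$ apart in $\pi$-cost, I snap each $p$ to its nearest sample $p'$ and each $q$ to its nearest sample $q'$ on the same edgelets, paying $\pi(p, p'), \pi(q, q') \leq \eps d / n$. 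Summed over the $O(n)$ cells these snapping penalties total $O(\eps d) = O(\eps) \OPT$ and are absorbed into the final bound.

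Within each cell $T$ I route $p'$ to $q'$ through $G_3$, mimicking the case split of Lemmas~\ref{lem:easy_well-behaved_cost} and~\ref{lem:hard_well-behaved_cost}. If $p, q$ lie on the same edge of $\partial T$ or neither lies on $\beta_T$, the well-behaved path $\gamma(p', q')$ runs along $\partial T$ and is realized by the inter-sample $\dvd$-edge pieces already in $E_3$. If $p \in \beta_T$ and $q \notin \beta_T$ (the symmetric case is analogous), I follow the proof of Lemma~\ref{lem:hard_well-behaved_cost}: walk along $\beta_T$ from $p'$ to a sample $p''$ near the optimal anchor $w^*_{p'} \in \{p', w_{\alpha_T}, w_{\kappa_T}\}$ provided by Lemma~\ref{lem:good_anchors}, and then use a single in-cell edge from $p''$ to a sample near $q'$.

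The heart of the argument is the in-cell step. Given a sample $p'' \in \partial T$ and target sample $q'$ on an edgelet $\xi$ of a different edge, I must exhibit $q_i \in S(p'') \cap N(p'')$ with $\pi(p'', q_i) \leq (1+O(\eps))\pi(p'', q')$. By Lemma~\ref{lem:visibility}, the points of $\xi$ locally reachable from $p''$ form a connected subinterval of $\xi$ containing one of its endpoints; by the construction of $S(p'')$, the walk starting at $\overleftarrow{q}$ (resp. $\overrightarrow{q}$) advances toward that endpoint in steps $\lfloor(1+\eps)^i\rfloor$, so all $q_i$ encountered on the reachable side belong to $N(p'')$. If $q'$ sits at step $k$ from $\overleftarrow{q}$ (or $\overrightarrow{q}$), then there is a $q_i \in S(p'') \cap N(p'')$ at step $j$ with $|j-k| \leq \eps k$, yielding $\pi(q', q_i) \leq \eps k \cdot (\eps d/n)$. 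A matching lower bound $\pi(p'', q') \geq \Omega(k \eps d/n)$ follows from Lemma~\ref{lem:cost_observations}(i) together with the monotonicity of clearance along the edgelet and the fact (immediate from the definition of shadow point) that $\cl(\breve{p''}) \geq \cl(p'')$: a path from $p''$ to $q'$ must span at least the clearance ratio between $\overleftarrow{q}$ and $q'$. Combining these inequalities gives $\pi(p'', q_i) \leq \pi(p'', q') + \pi(q', q_i) \leq (1+\eps)\pi(p'', q')$.

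The main obstacle is exactly this alignment between the exponential-stride spanner and the visibility structure of Lemma~\ref{lem:visibility} so that the chosen $q_i$ is simultaneously close to $q'$ in cost \emph{and} guaranteed to lie in $N(p'')$, together with extracting a lower bound on $\pi(p'', q')$ strong enough to dominate the detour. Once this local $(1+O(\eps))$-approximation is secured, summing $(1+O(\eps))\pi(p'', q') \leq (1+O(\eps))\mu(\Pi^*[p,q])$ over all $O(n)$ cells and adding the $O(\eps)\OPT$ snapping penalty yields $\mu(\tilde{\Pi}) \leq (1+O(\eps))\OPT$, completing the proof.
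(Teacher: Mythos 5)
Your high-level outline---deform $\Pi^*$, snap endpoints to nearby samples paying $O(\eps d)$ total, and within each cell use the geometric-stride sample set $S(\cdot)$ together with the visibility structure of Lemma~\ref{lem:visibility}---is aligned with the paper. However, there are two substantive gaps.

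\textbf{The in-cell routing loses a constant factor.} You route along $\partial T$ via the well-behaved path $\gamma(p',q')$ when neither endpoint is on $\beta_T$, but by Lemma~\ref{lem:easy_well-behaved_cost}(ii) that path costs up to $3\pi(p',q')$, yielding a $3$-approximation in such cells rather than $(1+O(\eps))$. Similarly, walking from $p'$ along $\beta_T$ to a sample $p''$ near $w^*_{p'}$ and then taking an in-cell edge out of $p''$ is a genuine detour: the segment $p'p''$ alone can cost a constant multiple of $\pi(p',q)$ (cf.~the bound $\mu(\lambda(p';w^*_{p'})) \leq 2\pi(p',q)$ in the proof of Lemma~\ref{lem:hard_well-behaved_cost}), and nothing cancels it. The whole point of $G_3$---in contrast to $G_1,G_2$---is that the in-cell edges are \emph{direct} locally-optimal paths between sample points with cost exactly $\pi(\cdot,\cdot)$. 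The paper's replacement is $pp' \circ a \circ \gamma(q',q)$, where $a$ is a single in-cell edge from $p'$ to a sample $q'$ near $q$; the anchor $w^*_{p'}$ enters only via the shadow point $\breve{p'}$, which fixes where the geometric walk on $\xi$ starts, and is never visited by the path.

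\textbf{The lower bound from clearance ratios is not strong enough.} You need $\pi(\overleftarrow{q},q')=O(\pi(p',q'))$ so that the exponential stride is $O(\eps)\pi(p',q')$. Your argument gives $\pi(p'',q') \geq |\ln(\cl(\breve{p''})/\cl(q'))|$ via Lemma~\ref{lem:cost_observations}(i), i.e.~it controls only the radial (clearance-ratio) component. On an external edge $\kappa_T$ the cost $\pi(\overleftarrow{q},q')$ has a potentially dominating \emph{angular} component that a clearance-ratio bound simply does not see, so no such $\Omega(\cdot)$ lower bound follows. The paper closes this with Lemma~\ref{lem:hard_well-behaved_cost} applied to the well-behaved path $\gamma(p',q)$: since $\gamma(p',q)$ reaches $\kappa_T$ at $\bar{w}^*$ (near $q_-$) and then follows $\kappa_T$ to $q$, the subpath $\pi(q_-,q) \leq \mu(\gamma(p',q)) \leq 11\pi(p',q)$; this single inequality controls both angular and radial contributions at once and is exactly what makes the $(1+O(\eps))$ geometric-walk bound go through.
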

\begin{proof}
Once again, we deform the optimal path~$\Pi^*$ into a path~$\tilde{\Pi}$ of~$G_3$ as in
Lemmas~\ref{lem:n-approximate} and~\ref{lem:constant-approximate}.
Let~$\Pi$ denote the current path that we have obtained by deforming~$\Pi^*$.
Let~$T \in \dvd$ be the first cell such that $\Pi$ enters $\interior(T)$ but $\interior(T) \cap
\Pi$ is not an arc of $E_3$.
Let $p \in \Pi$ be the first point (on $\partial T$) at which $\Pi$ enters in $\interior(T)$, and
let $q$ be the next point on $\Pi \cap \partial T$, i.e., $\interior(\Pi[p,q]) \subset
\interior(T)$.
If both $p$ and $q$ lie on the same edge~$e$ of $T$, we replace~$\Pi^*[p,q]$ with the portion
of~$e$ between~$p$ and~$q$, denoted by $\tilde{\Pi}_T$; note that $\mu(\tilde{\Pi}_T) = \pi(p,q)$.

Now, suppose $p \in \beta_T$ and $q \in \kappa_T$.
The other cases are similar.
Points~$p$ and~$q$ are locally reachable from each other.
By Lemmas~\ref{lem:sample_bounds_ptas} and~\ref{lem:visibility}, there exists a sample point~$p'$
locally reachable from~$q$ on~$\beta_T$ such that~$\pi(p,p') \leq \eps {d}/{n}$.
We have~$\pi(p',q) \leq \pi(p,q) + \eps {d}/{n}$.
Suppose there exists a point~$q' \in S(p')$ on~$\kappa_T$ locally reachable from~$p'$ such
that~$\pi(q,q') \leq \eps {d}/{n}$.
Let~$a$ be the minimal-cost path from~$p'$ to~$q'$.
In this case, we replace~$\Pi^*[p,q]$ with~$\tilde{\Pi}_T = pp' \circ a \circ \gamma(q', q)$.
We have $\mu(\tilde{\Pi}_T) \leq \pi(p,q) + 4 \eps {d}/n$.

Finally, suppose there is no locally reachable~$q'$ as described above.
As in Section~\ref{sec:prelim}, let~$\bar{w}^*$ denote the first intersection of well-behaved path
$\gamma(p',q)$ with~$\kappa_T$.
Recall our algorithm adds sample points along several edgelets of length~$O({d})$ such that each
pair of samples lies at cost~$\eps {d}/{n}$ apart.
By Lemma~\ref{lem:sample_bounds_ptas}, point~$q$ lies on one of these edgelets~$\xi$.

By Lemma~\ref{lem:good_anchors} and construction, either $\bar{w}^* \in \xi$ and~$\bar{w}^*$ lies between consecutive sample points of~$\xi$ we denoted as
$\overleftarrow{q}$ and $\overrightarrow{q}$, or $\bar{w}^* \notin \xi$ and exactly one of
$\overleftarrow{q}$ or $\overrightarrow{q}$ exists at an endpoint of $\xi$.
By construction, each existing point of $\overleftarrow{q}$ and $\overrightarrow{q}$ is in $S(p')$.
Let $q_- \in \{\overleftarrow{q}, \overrightarrow{q}\}$ be the first sample point of~$\xi$
encountered as we walk along $\kappa_T$ from $\bar{w}^*$, past~$q$, and to an endpoint of
$\kappa_T$.
We claim there exists at least one additional sample point of~$\xi$ other than $q_-$ encountered
during this walk, and we denote $q_0$ as the first of these sample points.
Indeed, if~$q_0$ does not exist, then $\bar{w}^* \in \xi$ and~$q$ lies between $\overleftarrow{q}$
and $\overrightarrow{q}$.
At least one of them is locally reachable from~$p'$ by Lemma~\ref{lem:visibility}, which
contradicts the assumption that~$q$ is at least~$\eps {d}/{n}$ cost away from any sample point
of~$\xi \cap S(p')$ locally reachable from~$p'$.
By a similar argument, we claim $q$ does not lie between $q_0$ and $\bar{w}^*$.

\begin{figure}[t]
  \centering
   	\includegraphics[scale=.34]{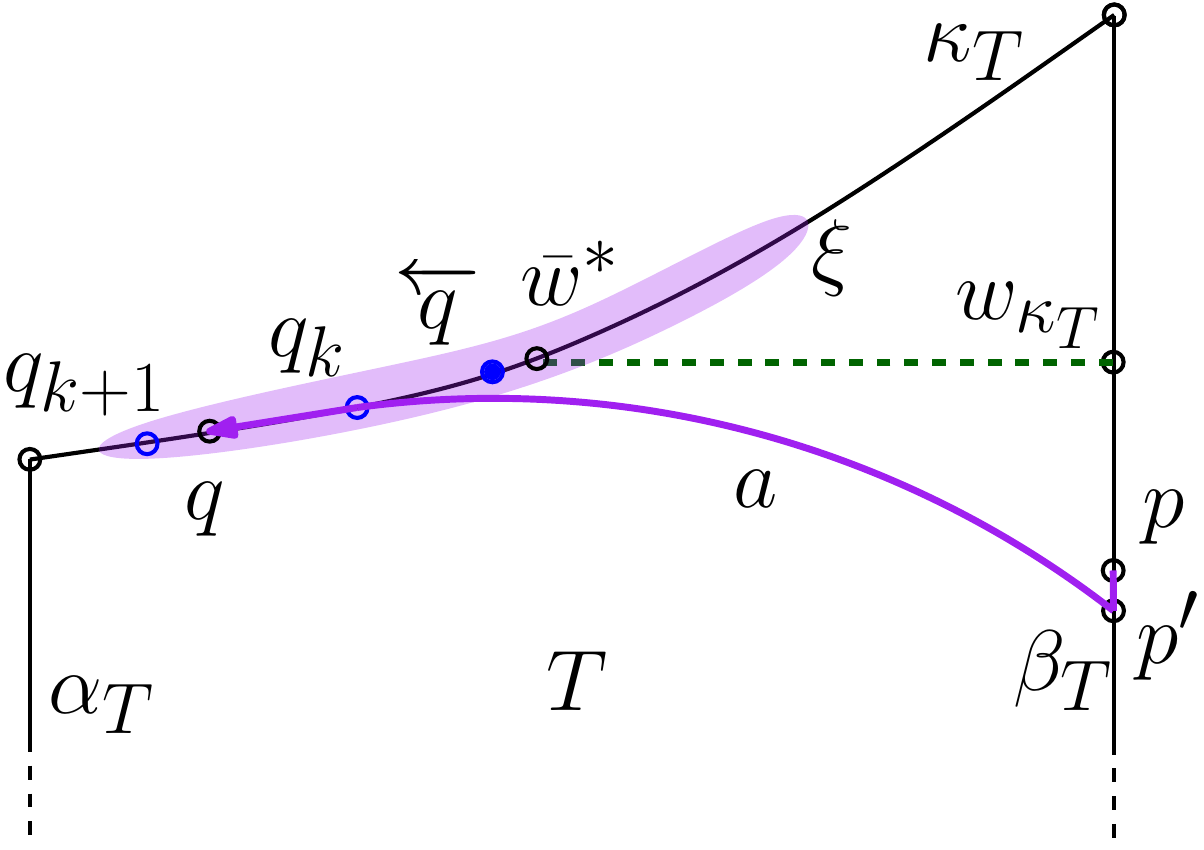}
	\caption{\sf Visualization of proof of Lemma~\ref{lem:eps-approximate}.}
  \label{fig:ptas_proof}
\end{figure}
Recall, our algorithm adds samples~$q_i$ to~$S(p)$ spaced geometrically away from one of
$\overleftarrow{q}$ and $\overrightarrow{q}$ in the direction of~$q$; point $q_0$ is one of these
samples.
These samples also include one endpoint of~$\xi$.
Let $q_k, q_{k+1}$ be two consecutive sample points of $S(p)$ such that $q$ lies between them.
By Lemma~\ref{lem:visibility}, at least one of~$q_k$ and~$q_{k+1}$ is locally reachable from~$p'$.
Let~$q'$ be this locally reachable point.
Let~$a$ be the mimimal-cost path from~$p'$ to~$q'$.
As before, we replace~$\Pi^*[p,q]$ with~$\tilde{\Pi}_T = pp' \circ a \circ \gamma(q', q)$.
See Figure~\ref{fig:ptas_proof}.

Let~$\delta = \pi(q_-, q) n/(\eps {d})$.
Value~$\delta$ is an upper bound on the number of samples in~$\xi$ between $q_-$ and~$q$.
We have~$\lfloor(1+\eps)^{k}\rfloor \leq \delta \leq \lfloor(1+\eps)^{k+1}\rfloor$.
In particular~$\delta \leq (1+\eps)^{k+1}$, which implies $\delta - \lfloor(1+\eps)^{k}\rfloor
\leq \eps \delta + 1$.
Similarly, $\lfloor(1+\eps)^{k+1}\rfloor - \delta \leq \eps \delta$.
By Lemma~\ref{lem:hard_well-behaved_cost},~$\pi(q_-, q) \leq 11 \mu(p', q)$.
We have
\begin{align*}
  \pi(q,q') &\leq (\eps \delta + 1) \frac{\eps {d}}{n}\\
            &\leq \left(\pi(q_-, q) \frac{\eps n}{\eps {d}} + 1\right)\frac{\eps {d}}{n}\\
            &= \eps \pi(q_-,q) + \frac{\eps {d}}{n}\\
            &\leq 11\eps \pi(p', q) + \frac{\eps {d}}{n}.
\end{align*}

We have~$\pi(p', q') \leq \pi(p', q) + \pi(q, q') \leq (1 + 11\eps)\cdot \pi(p',q) + \eps
{d}/{n}$.
Therefore, in all three cases we have
$$\mu(\tilde{\Pi}_T) \leq (1 + O(\eps)) \pi(p,q) + O(\eps {d} / n).$$

Summing over all steps in the deformation of $\Pi^*$ and using the fact $\OPT \leq d \leq c \OPT$
for a constant $c$, we obtain $\mu(\tilde{\Pi}) = (1 + O(\eps)) \OPT$.
As before, it is clear from the construction that $\tilde{\Pi}$ is a path in $G_3$.

%
%
%
%
\end{proof}

We conclude with our main theorem.

\begin{theorem}
\label{thm:ALG}
	Let~$\calO$ be a set of polygonal obstacles in the plane with~$n$ vertices total, and let~$s,t$ be two points outside~$\calO$.
  Given a parameter~$\eps \in (0,1]$,
	there exists an 
  $O(\frac{n^2}{\eps^2} \log \frac{n}{\eps})$-time approximation algorithm 
	for the minimal-cost path problem between~$s$ and~$t$
	such that the algorithm returns an $s,t$-path of cost at most~$(1 + \eps)\pi(s,t)$.
\end{theorem}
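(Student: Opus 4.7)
The plan is to assemble the three stages developed in Sections~\ref{subsec:O(n)-apx}, \ref{subsec:constant}, and~\ref{subsec:ptas} into a single pipeline and verify that the stated time and approximation bounds hold. First I would compute the refined Voronoi diagram~$\dvd$ in $O(n\log n)$ time and then invoke Theorem~\ref{thm:easy_approx} to obtain a value $\tilde{d}$ with $\OPT \le \tilde{d} \le cn\cdot\OPT$ in $O(n\log n)$ time. Next, I would run the constant-factor routine of Section~\ref{subsec:constant} for each $d_i = \tilde d/2^i$, $0\le i\le \lceil \log_2 cn\rceil$, keeping the cheapest returned path; by Theorem~\ref{thm:constant} (or directly by Lemma~\ref{lem:constant-approximate}) the cost $\hat d$ of the winning path satisfies $\OPT \le \hat d \le c'\cdot\OPT$ for some absolute constant $c'$, and the total work is $O(n^2\log^2 n)$, which is absorbed into the final bound.

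With $\hat d$ in hand, I would build the graph $G_3 = G_3[\hat d]$ of Section~\ref{subsec:ptas}, using parameter $\eps' = \eps/c''$ for an appropriate constant $c''$ determined by the hidden $O(\cdot)$ in Lemma~\ref{lem:eps-approximate}. The construction places $O(n/\eps')$ samples on each edge of~$\dvd$ and for each sample picks $O((1/\eps')\log(n/\eps'))$ candidate neighbors, yielding $|V_3| = O(n^2/\eps')$ and $|E_3| = O((n^2/\eps'^2)\log(n/\eps'))$; this stage runs in $O(\frac{n^2}{\eps^2}\log\frac{n}{\eps})$ time. Running Dijkstra's algorithm with Fibonacci heaps~\cite{FT87} on $G_3$ produces the output path within the same asymptotic time.

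For correctness I would appeal directly to Lemma~\ref{lem:eps-approximate}, which guarantees that $G_3$ contains an $s,t$-path of cost at most $(1 + O(\eps'))\OPT$. With the rescaling $\eps' = \eps/c''$, the returned path has cost at most $(1+\eps)\pi(s,t)$. Summing the running times of the three stages gives the claimed $O(\frac{n^2}{\eps^2}\log\frac{n}{\eps})$ bound, since the first two stages are dominated by the third.

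The only subtle step is ensuring that the estimate $\hat d$ fed into stage three truly lies in $[\OPT,\,O(\OPT)]$, since Lemma~\ref{lem:eps-approximate} is stated under this assumption; this is handled by the exponential search and by Lemma~\ref{lem:constant-approximate}, which shows that for any $d_i \in [\OPT,2\OPT]$ the graph $G_2[d_i]$ admits an $(s,t)$-path of cost $O(d_i)$, so the best $d_i$ selected is within a constant factor of $\OPT$. Once this is in place the theorem follows by simple arithmetic on the three time bounds and the rescaling of $\eps$, so no additional technical obstacle remains.
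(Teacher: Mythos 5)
Your proposal assembles the three stages in the same way the paper does, and the correctness argument (via Lemma~\ref{lem:constant-approximate} to justify the exponential search, then Lemma~\ref{lem:eps-approximate} with rescaling $\eps' = \eps/c''$) is exactly the paper's. The one slip is the running time you assign to the constant-factor stage: you write $O(n^2\log^2 n)$ and say this is absorbed into $O(\frac{n^2}{\eps^2}\log\frac{n}{\eps})$, but at $\eps=1$ the final bound is only $O(n^2\log n)$, so $O(n^2\log^2 n)$ would \emph{not} be absorbed. The paper avoids this by observing that each $G_2[d_i]$ is planar with $O(n^2)$ edges, so the shortest path can be found in $O(n^2)$ time per value of $d_i$ via Henzinger~\etal~\cite{HKRS97}, giving $O(n^2\log n)$ total for the exponential search — which is indeed what Theorem~\ref{thm:constant} (which you cite) already states. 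If you simply invoke Theorem~\ref{thm:constant} as a black box with its stated $O(n^2\log n)$ bound, rather than re-deriving the search time with Dijkstra on a nonplanar-looking bound, the arithmetic works out and your proof matches the paper's.
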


\section{Discussion}
In this paper, we present the first polynomial-time approximation algorithm for the problem of computing minimal-cost paths between two given points (when using the cost defined in~\eqref{eq:length}).
One immediate open problem is to improve the running time of our algorithm to be near-linear.
A possible approach would be to refine the notion of anchor points so it suffices to put only
$O(\log n)$ additional points on each edge of the refined Voronoi diagram.

Finally, there are other natural interesting open problems that we believe should be addressed.
The first is to determine if the problem at hand is NP-hard.
When considering the complexity of such a problem, one needs to consider both the \emph{algebraic
complexity} and the \emph{combinatorial complexity}.
In this case we suspect that the algebraic complexity may be high because of the cost function we consider.
However, we believe that combinatorial complexity, defined analogously to the number of ``edge
sequences'', may be small.
The second natural interesting open problem calls for extending our algorithm to compute near-optimal paths amid polyhedral obstacles in~$\R^3$.

\bibliographystyle{abuser}
\bibliography{bibliography}
\end{document}